\def\llncs{0}
\def\mnotes{0}
\documentclass[11pt]{article}

\usepackage{amsfonts,amsmath,epsfig, graphicx, fullpage,mathrsfs}
\usepackage{comment}

\usepackage{color,bm}


\def\colorson{0}

\usepackage{times,color}
\usepackage[usenames,dvipsnames,svgnames,table]{xcolor}
\usepackage{amsfonts,amsmath}
\usepackage{url}
\usepackage{fullpage}

 \ifnum\llncs=0


\colorlet{notgreen}{blue!20!black!30!green}




\usepackage{amsthm}
\usepackage{latexsym,amssymb}
\usepackage{graphicx}

   \usepackage[letterpaper=true,pdftex,
    bookmarks,
    plainpages=false, 
    pdfpagelabels=true, 
    colorlinks=true,breaklinks=true,linkcolor=red,citecolor=green,
    filecolor=black,menucolor=black,pagecolor=black,urlcolor=blue
    pdftitle=UW\ LaTeX\ Thesis\ Template, 
    pdfauthor=Jalaj Upadhyay	
    ]{hyperref}{}

\fi 
\ifnum\mnotes=0
\newcommand{\mnote}[1]{}
\else
\newcounter{mynotes}
\setcounter{mynotes}{0}
\newcommand{\mnote}[1]{\addtocounter{mynotes}{1}{{\bf !}}%
\marginpar{\scriptsize  {\arabic{mynotes}.\ {\sf \textcolor{red}{#1}}}}}
\fi

\ifnum\colorson=1
\newenvironment{todo}{\noindent
\sf \footnotesize \textcolor{blue}{To go here}:
\begin{CompactItemize}\color{blue}}
{\color{black}\end{CompactItemize}\rm \normalsize}
\else

\fi

\ifnum\llncs=0

\newenvironment{CompactItemize}{
  \vspace{-5pt}
 \begin{list}{$\bullet$}{%
      \setlength{\leftmargin}{8pt}%
      \setlength{\itemsep}{-2pt}
      }}
{\end{list}}

\makeatletter
\def\thm@space@setup{\thm@preskip=1pt
\thm@postskip=1pt}
\makeatother
\newtheoremstyle{newstyle}      
{} 
{} 
{\mdseries} 
{} 
{\bfseries} 
{.} 
{ } 
{} 

\theoremstyle{newstyle}

\newtheorem{theorem}{Theorem}
\newtheorem{lemma}[theorem]{Lemma}

\newtheorem{corollary}[theorem]{Corollary}

\newtheorem{definition}{Definition}

\theoremstyle{definition}

\theoremstyle{remark}

\newtheorem{myremark}{Remark} 
\newenvironment{remark}{\begin{myremark}}{$\Box$\end{myremark}}

\newtheorem{myexample}{Example}
\newenvironment{example}{\begin{myexample}}{\end{myexample}}

\else
\spnewtheorem{protocol}{Protocol}{\bfseries}{\rmfamily}
\spnewtheorem{algm}{Algorithm}{\bfseries}{\rmfamily}
\spnewtheorem{fact}{Fact}{\bfseries}{\rmfamily}
\spnewtheorem{myclaim}{Claim}{\bfseries}{\itshape}
\fi

\newcommand{\thmref}[1]{Theorem~\ref{thm:#1}}

\newcommand{\lemref}[1]{Lemma~\ref{lem:#1}}

\newcommand{\corref}[1]{Corollary~\ref{cor:#1}}

\newcommand{\secref}[1]{Section~\ref{sec:#1}}

\newcommand{\appref}[1]{Appendix~\ref{app:#1}}

\newcommand{\figref}[1]{Figure~\ref{fig:#1}}

\newcommand{\eqnref}[1]{equation~(\ref{eq:#1})}



\newcommand{\tr}{\mbox{Tr}}


\newcommand{\brak}[1]{{\langle {#1} \rangle}}
\newcommand{\set}[1]{\left\{ {#1} \right\}}
\newcommand{\paren}[1]{\left( {#1} \right)}
\newcommand{\sparen}[1]{\left[ {#1} \right]}

\newcommand{\floor}[1]{\left\lfloor{#1} \right\rfloor}



\DeclareMathOperator{\poly}{poly}



\newcommand{\E}{\mathbb{E}}

\newcommand{\I}{\mathbb{I}}

\newcommand{\N}{\mathbb{N}}

\newcommand{\p}{\mathsf{Pr}}

\newcommand{\R}{\mathbb{R}}
\newcommand{\bS}{\mathbb{S}}

\newcommand{\Y}{\mathbb{Y}}

\newcommand{\bPi}{\mathbf{\Pi}}
\newcommand{\bPhi}{\mathbf{\Phi}}
\newcommand{\bA}{\mathbf{A}}
\newcommand{\bB}{\mathbf{B}}
\newcommand{\bP}{\mathbf{P}}
\newcommand{\bW}{\mathbf{W}}
\newcommand{\bD}{\mathbf{D}}

\newcommand{\by}{\mathbf{y}}
\newcommand{\bx}{\mathbf{x}}
\newcommand{\bz}{\mathbf{z}}

\newcommand{\cD}{\mathcal{D}}
\newcommand{\cE}{\mathcal{E}}

\newcommand{\cH}{\mathcal{H}}

\newcommand{\cN}{\mathcal{N}}

\newcommand{\cQ}{\mathcal{Q}}

\newcommand{\cX}{\mathcal{X}}



\newcommand{\beq}{\begin{equation}}
\newcommand{\eeq}{\end{equation}}
\newcommand{\bml}{{\begin{multline}}}
\newcommand{\eml}{{\end{multline}}}




\usepackage[top=1in, bottom=1in, left=1in, right=1in]{geometry}

\excludecomment{versionA}
\includecomment{versionC}

\newcommand{\bU}{\mathbf{U}}
\newcommand{\bV}{\mathbf{V}}
\newcommand{\bLambda}{\mathbf{\Lambda}}
\newcommand{\bSigma}{\mathbf{\Sigma}}
\newcommand{\bE}{\mathbf{E}}

\excludecomment{versionB}
\begin{versionA}
\setlength{\belowdisplayskip}{0pt}%
\setlength{\abovedisplayskip}{0pt}%

\makeatletter
\renewcommand{\paragraph}{%
 \@startsection{paragraph}{4}%
 {\z@}{1ex \@plus 1ex \@minus .2ex}{-0.35em}%
{\normalfont\normalsize\bfseries}%
}
\makeatother

\makeatletter
\g@addto@macro \normalsize {%
 \setlength\abovedisplayskip{6pt plus 2pt minus 2pt}%
 \setlength\belowdisplayskip{6pt plus 2pt minus 2pt}%
}
\makeatother
\end{versionA}

\usepackage{titlesec}

\usepackage[skip=10pt]{caption}
\setlength{\textfloatsep}{5pt plus 1.2pt minus 0pt}

\newcommand{\cov}{\mathsf{COV}}
\newcommand{\sber}{\mathsf{Rad}}
\newcommand{\PDF}{\mathsf{PDF}}

\pagenumbering{roman}
\begin{document}
\title{Randomness Efficient Fast-Johnson-Lindenstrauss Transform with Applications in Differential Privacy}
\author{Jalaj Upadhyay \\
Cheriton School of Computer Science \\
University of Waterloo. \\
\small{\sf jkupadhy@cs.uwaterloo.ca}}
\date{}
\maketitle

\begin{abstract}
\begin{versionB}
The Johnson-Lindenstrauss property ({\sf JLP}) of random matrices has immense application in computer science ranging from compressed sensing, learning theory, numerical linear algebra, to privacy. This paper explores the properties and applications of a distribution of random matrices. Our distribution satisfies {\sf JLP} with desirable properties like fast matrix-vector multiplication, sparsity, and optimal subspace embedding. We can sample a random matrix from this distribution using exactly $3n$ random bits. We show that a random matrix picked from this distribution preserves differential privacy under the condition that the input private matrix satisfies certain spectral property. This improves the run-time of various differentially private mechanisms like Blocki {\it et al.} (FOCS 2012) and Upadhyay (ASIACRYPT 13). Our final construction has a bounded column sparsity. Therefore, this answers an open problem stated in Blocki {\it et al.} (FOCS 2012). Using the results of Baranuik {\it et al.} (Constructive Approximation: 28(3)), our result implies a randomness efficient matrices that satisfies the Restricted-Isometry Property of optimal order for small sparsity with exactly linear random bits.  

We also show that other known distributions of sparse random matrices with the {\sf JLP} does not preserves differential privacy; thereby, answering one of the open problem posed by Blocki {\it et al.} (FOCS 2012). Extending on the works of Kane and Nelson (JACM: 61(1)), we also give unified analysis of some of the known Johnson-Lindenstrauss transform. We also present a self-contained simplified proof of an inequality on quadratic form of Gaussian variables that we use in all our proofs.
\end{versionB}

The Johnson-Lindenstrauss property ({\sf JLP}) of random matrices has immense applications in computer science ranging from compressed sensing, learning theory, numerical linear algebra, to privacy. This paper explores the properties and applications of a distribution of random matrices. Our distribution satisfies {\sf JLP} with desirable properties like fast matrix-vector multiplication, bounded sparsity, and optimal subspace embedding. We can sample a random matrix from this distribution using exactly $2n+n\log n$ random bits. We show that a random matrix picked from this distribution preserves differential privacy if the input private matrix satisfies certain spectral properties. This improves the run-time of various differentially private algorithms like Blocki {\it et al.} (FOCS 2012) and Upadhyay (ASIACRYPT 13). We also show that successive applications in a specific format of a random matrix picked from our distribution also preserve privacy, and, therefore, allows faster private low-rank approximation algorithm of Upadhyay (arXiv 1409.5414). Since our distribution has bounded column sparsity, this also answers an open problem stated in Blocki {\it et al.} (FOCS 2012). We also explore the application of our distribution in manifold learning, and give the first  differentially private algorithm for manifold learning if the manifold is smooth. Using the results of Baranuik {\it et al.} (Constructive Approximation: 28(3)), our result also implies a distribution of random matrices that satisfies the Restricted-Isometry Property of optimal order for small sparsity.  

We also show that other known distributions of sparse random matrices with the {\sf JLP} does not preserve differential privacy, thereby answering one of the open problem posed by Blocki {\it et al.} (FOCS 2012). Extending on the work of Kane and Nelson (JACM: 61(1)), we also give a unified analysis of some of the known Johnson-Lindenstrauss transform. We also present a self-contained simplified proof of the known inequality on quadratic form of Gaussian variables that we use in all our proofs. This could be of independent interest.

\medskip
\paragraph{Keywords.} 
Johnson-Lindenstrauss Transform, Differential Privacy, Restricted-Isometry Property.
\end{abstract}

\pagebreak

\pagenumbering{arabic}
\section{Introduction} \label{sec:introduction} The Johnson-Lindenstrauss lemma is a very useful tool for speeding up many high dimensional problems in computer science. Informally, it states that points in a high dimensional space can be embedded in low-dimensional space so that all pairwise distances are preserved up to a small multiplicative factor. Formally,
\begin{theorem} \label{thm:JL} 
Let $\varepsilon, \delta >0$ and $r=O(\varepsilon^{-2}\log (1/\delta))$. A distribution $\cD$ over ${r \times n}$ random matrix satisfies the Johnson-Lindenstrauss property {\em ({\sf JLP})} with parameters $(\varepsilon,\delta)$ if for $\bPhi \sim \cD$ and any $\bx \in \bS^{n-1}$, where $\bS^{n-1}$ is the $n$-dimensional unit sphere,
		\begin{align} \p_{\bPhi \sim \cD} \sparen{ \left| \sqrt{\frac{n}{r}}\|\bPhi \bx \|_2^2 -1 \right| > \varepsilon } < \delta.  \label{eq:JL} \end{align} 
\end{theorem}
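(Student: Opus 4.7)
The statement combines the definition of the Johnson--Lindenstrauss property with an implicit existence claim at the stated dimension $r$, so my plan is to exhibit a concrete distribution $\cD$ and verify the tail bound. The cleanest choice is to take $\bPhi$ to have i.i.d.\ Gaussian entries scaled so that $\E\big[\sqrt{n/r}\,\|\bPhi\bx\|_2^2\big] = 1$ for every unit $\bx$ (concretely, variance $1/\sqrt{nr}$ per entry); Rademacher or more general subgaussian entries work under essentially the same analysis up to constants.

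First I would fix an arbitrary $\bx \in \bS^{n-1}$ and use rotational invariance of the Gaussian to reduce to $\bx = \mathbf{e}_1$. Under this reduction each coordinate $(\bPhi\bx)_i$ is a centered Gaussian with a common variance $\sigma^2$, the $r$ coordinates are independent, and hence $\sqrt{n/r}\,\|\bPhi\bx\|_2^2$ collapses to $(1/r)\sum_{i=1}^{r} Z_i^2$ with $Z_i \sim \cN(0,1)$ i.i.d., i.e.\ a $\chi_r^2/r$ random variable. The problem therefore reduces to a standard two-sided concentration inequality for $\chi_r^2/r$ around its mean $1$.

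Next I would apply the Laplace/Chernoff method to this chi-square variable: start from $\E[e^{tZ^2}] = (1-2t)^{-1/2}$ for $t < 1/2$, take the product over the $r$ independent coordinates, apply Markov to the exponentiated deviation, and optimize $t$ for the upper and lower tails separately. The resulting Laurent--Massart-style bound is $\Pr\!\big[|\chi_r^2/r - 1| > \eps \big] \le 2\exp(-c r \eps^2)$ for $\eps \in (0,1)$ and an absolute constant $c$. Setting $r = C\eps^{-2}\log(1/\delta)$ with $C := 1/c$ forces the failure probability below $\delta$, which is exactly \eqnref{JL}.

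The one genuinely delicate point is producing clean constants in the chi-square tail, since the naive Chernoff optimization is tight only for small $\eps$ and near $\eps \to 1$ needs careful handling of $\log(1-2t)$. The paper announces a self-contained proof of a quadratic-form-of-Gaussians inequality for exactly this purpose, so I would invoke that inequality with the identity quadratic form (whose trace is $r$ and Frobenius norm $\sqrt{r}$) in place of redoing the MGF computation. With that lemma in hand the remainder is straightforward bookkeeping in the choice of $C$.
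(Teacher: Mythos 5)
Your proposal is correct, and you rightly identified that the statement is really the definition of {\sf JLP} plus an implicit existence claim: the paper itself offers no proof of Theorem~\ref{thm:JL}, treating it as classical background. Where the paper does verify the Gaussian construction (Section~\ref{sec:older}), it does exactly what your last paragraph suggests --- reduce $\|\bPhi\bx\|_2^2$ to a quadratic form in Gaussians and invoke the Hanson--Wright inequality (\thmref{hw}) --- so your chi-square/Chernoff reduction and your fallback to the quadratic-form lemma are both consistent with the paper's approach, and your normalization (entry variance $1/\sqrt{nr}$ so that $\sqrt{n/r}\,\|\bPhi\bx\|_2^2$ is exactly $\chi_r^2/r$) correctly handles the slightly unusual scaling in \eqnref{JL}.
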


Matrices with {\sf JLP} have found applications in various areas of computer science. For example, it has been used in numerical linear algebra~\cite{CW09,MM13,NN13,Sarlos06}, learning theory~\cite{BBV06},  quantum algorithms~\cite{CHTW04}, Functional analysis~\cite{JN10}, and compressed sensing~\cite{BDDW07}. Therefore, it is not surprising that many distribution over random matrices satisfying {\sf JLP} have been proposed that improves on various algorithmic aspects, like using binary valued random variables~\cite{Achlioptas03}, fast matrix-vector multiplication~\cite{AC09,Matousek08}, bounded sparsity~\cite{DKS10,KN14},  and de-randomization~\cite{KN10}. These improvements extends naturally to all the above listed applications.

Recently, Blocki {\it et al.}~\cite{BBDS12} found an application of {\sf JLP} in {\em differential privacy} (see, Definition~\ref{def:approxdp}). They showed that multiplying a private input matrix and a random matrix with i.i.d. Gaussian variables preserves differential privacy. Such matrices are known to satisfy {\sf JLP} for all values of $r$. However, the matrix multiplication is very slow. Subsequently, Upadhyay~\cite{Upadhyay13} showed that one could use graph sparsification in composition with the algorithm of Blocki {\it et al.}~\cite{BBDS12} to give a more efficient differentially private algorithm. Recently, Upadhyay~\cite{Upadhyay14} showed that non-private algorithms to solve linear algebraic tasks over streamed data can be made differentially private without significantly increasing the space requirement of the underlying data-structures. Unfortunately, these private algorithms update the underlying data-structure much slower than their non-private analogues~\cite{CW09,KN14}. 
 The efficient non-private algorithms use other distribution with {\sf JLP} that provides faster embedding. On the other hand, it is not known whether other distribution of random matrices with {\sf JLP}  preserve differential privacy or not.

This situation is particularly unfortunate when we consider differential privacy in the streaming model. In the streaming model, one usually receives an update which requires computation of $\bPhi \mathbf{e}_i$, where $\{\mathbf{e}_i\}_{i=1}^n$ are standard basis of $\R^n$. Since $\| \mathbf{e}_i \|=1$, even the naive approach takes $O(r \| \mathbf{e}_i \|_0)$ time, where $\|\cdot\|_0$ denotes the number of non-zero entries. This is faster than the current state-of-the-art privacy preserving techniques. For example, the  most efficient private technique using low-dimension embedding uses random matrices with every entries picked i.i.d. from a Gaussian distribution. Such matrices requires $\Omega(rn)$ time to update the privacy preserving data-structure even if we update  few entries. 

One of the primary difficulties in extending the privacy proof to other distributions that satisfies {\sf JLP} is that they introduce dependencies between random variables which might leak the sensitive information! Considering this, Blocki {\it et al.}~\cite{BBDS12} asked the question whether other known distribution of matrices with {\sf JLP} preserve differential privacy. 
In~\secref{notDP}, we answer this question in negative by showing that known distributions over sparse matrices like~\cite{AL09,DKS10,KN14,Matousek08,NPW14} fail to  preserve differential privacy (see,~\thmref{notDP} for precise statement). In the view of this  negative result, we  investigate a new distribution of random matrices. We show that careful composition of random matrices of appropriate form satisfies {\sf JLP}, allows fast matrix-vector multiplication, uses almost linear random bits, and preserves differential privacy.  We also achieve a reasonable range of parameters that allows other applications like compressed sensing, numerical linear algebra, and learning theory. 
We prove the following in~\secref{newJL}.
\begin{quote} {\bf Theorem} (Informal). For $\varepsilon,\delta >0$ and $r=O(\varepsilon^{-2} \log (1/\delta)) < n^{1/2}$, there exists a distribution $\cD$ over $r \times n$ matrices  that uses only almost linear random bits and satisfies {\sf JLP}. Moreover, for any vector $\bx \in \R^n$, $\bPhi \bx$ can be computed in $O\paren{n \log \paren{\min \set{\varepsilon^{-1} \log^4 (n/\delta),r}}}$ time for $\bPhi \sim \cD$.
\end{quote}
\begin{versionC}
\end{versionC}
\subsection{Contributions and Techniques} \label{sec:motivation}
The resources considered in this paper are randomness and time. The question of investigating {\sf JLP} with reduced number of random bits have been extensively studied in the non-private setting, most notably by Ailon and Liberty~\cite{AL09} and  Kane and Nelson~\cite{KN10}. In fact, Dasgupta {\it et al.}~\cite{DKS10} listed derandomizing their Johnson-Lindenstrauss transform as one of the major open problems.
Given the recent applications of matrices with {\sf JLP} to guarantee privacy in the streaming setting, it is important to investigate whether  random matrices that allows fast matrix-vector multiplication  preserve privacy or not. Recently, Mironov~\cite{Mironov12} showed a practical vulnerability in using random samples over continuous support. Therefore, it is also important to investigate if we can reduce the number of random samples in a differentially private algorithm.

\paragraph{A new distribution.} Our distribution consists of  a series of carefully chosen matrices such that all the matrices are formed using almost linear random samples, have few non-zero entries, and allow fast matrix-vector multiplication. Our construction has the form $\bPhi=\bP \bPi \mathbf{G}$, where $\mathbf{G}$ is a block diagonal matrices, $\bPi$ is a random permutation, and $\bP$ is sparse matrix formed by  vectors of dimension $n/r$. Each non-zero entries of $\bP$ are picked i.i.d. from a  sub-gaussian distribution. Every block of the matrix $\mathbf{G}$  has the form $\bW\bD$, where $\bD$ is a diagonal matrix with  non-zero entries picked i.i.d. to be $\pm 1$ with probability $1/2$ and $\bW$ is a Hadamard matrix. $\mathbf{G}$ has the property that for an input $\bx \in \bS^{n-1}$, $\by=\mathbf{G}\bx$ is a vector in $\bS^{n-1}$ with bounded coordinates. The permutation $\bPi$ helps to assure that if we break $\by$ into $r$-equal blocks, then every block has at least one non-zero entry with high probability. We could have used any other method present in the literature~\cite{decoupling}. We chose this method because it is simple to state and helps in proving both the concentration bound and differential privacy. 

\medskip \noindent {\sc Proof technique.} Our proof that the distribution satisfies {\sf JLP} uses a proof technique from cryptography. In a typical cryptographic proof, we are required to show that two probability ensembles $\cE_1$ and $\cE_2$ are indistinguishable. To do so, we often break the proof into series of hybrids $\cH_0=\cE_1, \cdots, \cH_{\mathsf p}=\cE_2$. We then prove that $\cH_i$ and $\cH_{i+1}$ are indistinguishable for $0 \leq i \leq p-1$. To prove our result, we perfectly emulate the action of a random matrix,  picked from our distribution, on a vector. Our emulation uses a series of matrix-vector multiplication. We then prove that a each of these matrix-vector multiplication preserves isometry under the conditions imposed by the matrices that have already operated on the input vector. 

\medskip \noindent {\sc Applications.} In this paper, we largely focus on two applications of the Johnson-Lindenstrauss lemma: compressed sensing and differential privacy. One of the main reasons why we concentrate on  these two applications of the Johnson-Lindenstrauss  is that these two applications have conflicting goals: in differential privacy, the aim is to conceal information about an individual data while in compressed sensing, one would like to decode an encoded sparse message. To date, the only distribution  that can be used in both these applications are the one in which entries of the matrix are picked i.i.d. from a Gaussian distribution. Unfortunately, such matrices allow very slow matrix-vector multiplication time and require a lot of random bits. 

\paragraph{Differential Privacy and {\sf JLP}.}
In the recent past, {\em differential privacy} has emerged as  a robust guarantee of privacy. The definition of differential privacy requires the notion of neighbouring dataset. We assume that datasets are represented in the form of matrices. We say two datasets represented in the form of matrices $\bA$ and $\widetilde{\bA}$  are {\em neighbouring} if $\bA- \widetilde{\bA}$ is a rank-$1$ matrix with Euclidean norm at most one. This notion was also used recently for linear algebraic tasks~\cite{HR12,HR13,Upadhyay14} and for statistical queries~\cite{BBDS12,Upadhyay13}. The definition of privacy we use is given below. 
\begin{definition} \label{def:approxdp}
	A randomized algorithm, $\mathfrak{M}$, gives {\em $(\alpha, \beta)$-differential privacy}, if for all  
neighbouring matrices $\bA$ and $\widetilde{\bA}$ and all subset $S$ in the range of  $\mathfrak{M}$, 
		$ \p[\mathfrak{M}(\bA) \in S] \leq \exp (\alpha) \p[\mathfrak{M}(\widetilde{\bA}) \in S] + \beta, $ where the probability is over the coin tosses of $\mathfrak{M}$. 
\end{definition}

Blocki {\it et al.}~\cite{BBDS12}  showed that the Johnson-Lindenstrauss transform instantiated with random Gaussian matrices  preserve {differential privacy} for certain queries (note that, to answer all possible queries, they require $r \geq n$). Later, Upadhyay~\cite{Upadhyay14} approached this in a principled manner showing that random Gaussian matrices (even if it is applied twice in a particular way) preserves differential privacy. However,  the question whether other random matrices with {\sf JLP} preserve differential privacy or not remained open. 
In~\secref{impossible}, we show that known  distributions over sparse matrices with {\sf JLP} does not preserves differential privacy.

We give an example by studying the construction of Nelson {\it et al.}~\cite{NPW14}. 
Their construction requires a distribution of $R \times n$ random matrices that satisfies {\sf JLP} for suboptimal value of $R$. Let $\bPhi_{\mathsf{sub}} \bD$ be such a matrix, where $\bD$ is a diagonal matrix with non-zero entries picked i.i.d. to be $\pm 1$ with probability $1/2$. They form  an $r \times n$ matrix $\bPhi$ from $\bPhi_{\mathsf{sub}}$ by forming a random linear combination of the rows of $\bPhi_{\mathsf{sub}}$, where the coefficients of the linear combination is $\pm 1$ with probability $1/2$. In example~\ref{e.g.:fourier} below, we give a counter-example when $\bPhi_{\mathsf{sub}}$ is a subsampled Hadamard matrix to show that it fails to preserve differential privacy. Note that $\bPhi_{\mathsf{sub}} \bD$ is known to satisfy {\sf JLP} sub-optimally~\cite{CGV13,KW11}.
The counter-example when $\bPhi_{\mathsf{sub}}$ are partial circulant matrices follows the same idea using the observation that a circulant matrix formed by a vector $\mathbf{g}$ has the form $\mathbf{F}_n^{-1} \mathsf{Diag}(\mathbf{F}_n \mathbf{g}) \mathbf{F}_n$, where $\mathbf{F}_n$ is an $n \times n$ discrete Fourier transform.

\begin{example} \label{e.g.:fourier}
Let $\bPhi_{\mathsf{sub}}$ be a matrix formed by independently sampling $R$ rows of a Hadamard matrix. 
Let $\bPhi \bD$ be the matrix with {\sf JLP} as guaranteed by Nelson {\it et al.}~\cite{NPW14} formed by random linear combination of the rows of $\bPhi_{\mathsf{sub}}$. We give two $n \times n$ matrices $\bA$ and $\widetilde{\bA}$ such that their output distributions $\bPhi \bD \bA $ and $\bPhi \bD \widetilde{\bA}$ are easily distinguishable. Let $\mathbf{v}=\begin{pmatrix} 1 & 0 & \cdots & 0 \end{pmatrix}^{\mathsf T}$, then 
\begin{versionC}
\[
\bA= 	\begin{pmatrix} w & 1 & 0 & \cdots & 0 \\  
				0 & w & 0 & \cdots & 0 \\
  				0  & 0 & w & \cdots & 0\\
  				\vdots & \vdots & & \ddots& \vdots \\
  				{0} & 0 & 0 & \cdots & w
			\end{pmatrix} \quad	\text{and} \quad
\widetilde{\bA}= 	\begin{pmatrix} w & 1 & 0 & \cdots & 0 \\  
				1 & w & 0 & \cdots & 0 \\
  				0  & 0 & w & \cdots & 0\\
  				\vdots & \vdots & & \ddots &\vdots \\
  				{0} & 0 & 0 & \cdots & w
			\end{pmatrix} 	
		 \]  
\end{versionC}
 \begin{versionA}
\[
\bA= w\I + \mathbf{v} \mathbf{e}_2^{\mathsf T} \quad \text{and} \quad
\widetilde{\bA}=w\I + \mathbf{v} \mathbf{e}_2^{\mathsf T} + \mathbf{e}_2 \mathbf{v}^{\mathsf T}
 \]  
\end{versionA}
  \begin{versionB}
\[
\bA= 	\begin{pmatrix} w & 1 & \mathbf{0} \\  
				0 & w & \mathbf{0} \\
  				\mathbf{0}  & \mathbf{0} & w\I
			\end{pmatrix} \quad	\text{and} \quad
\widetilde{\bA}=  	\begin{pmatrix} w & 1 & \mathbf{0} \\  
				1 & w & \mathbf{0} \\
  				\mathbf{0}  & \mathbf{0} & w\I
			\end{pmatrix}
		 \]  
\end{versionB}
for $w > 0$ and identity matrix $\I$. It is easy to verify that $\bA$ and $\widetilde{\bA}$ are neighbouring. Now, if we concentrate on all possible left-most top-most $2 \times 2$ sub-matrices formed by multiplying $\bPhi$, simple matrix multiplications shows that with probability $1/\poly(n)$, the output distributions when $\bA$ and $\widetilde{\bA}$ are used are easily distinguishable. This violates differential privacy. 
\begin{versionA}
We give the details in the full version. Note that for large enough $w$,   known results~\cite{BBDS12,Upadhyay14} shows that using a random Gaussian matrix instead of $\bPhi \bD$ preserves privacy. 
\end{versionA}
\begin{versionC}
We defer the details of this calculation to~\secref{NPW14}. Note that for large enough $w$,   known results~\cite{BBDS12,Upadhyay14} shows that using a random Gaussian matrix instead of $\bPhi \bD$ preserves privacy. 

\end{versionC}
\end{example}

Example~\ref{e.g.:fourier} along with the examples in~\secref{notDP} raises the following  question: 	{\em Is there a distribution over sparse matrices which has optimal-{\sf JLP}, allows fast matrix-vector multiplication, and preserves differential privacy?} We answer this question affirmatively in~\secref{DP}.  
 Informally, 
\begin{quote} {\bf Theorem} (Informal). If the singular values of an $m \times n$ input matrix $\bA$ are at least $\ln(4/\beta) \sqrt{16 r \log (2/\beta)}/{\alpha}$, then there is a distribution $\cD$ over random matrices such that $\bPhi \bA^{\mathsf T}$ for $\bPhi \sim \cD$ preserves $(\alpha, \beta+\mathsf{negl}(n))$-differential privacy.  Moreover, the running time to compute $ \bPhi \bA^{\mathsf T}$ is $O(mn \log r)$.
\end{quote}

 \begin{table} [t]
{\small
{
\begin{tabular}{|c|c|c|c|c|}
\hline
Method & Cut-queries & Covariance-Queries & Run-time & $\#$ Random Samples \\ \hline
Exponential Mechanism~\cite{BLR08,MT07}& $O(n \log n /\alpha)$ & $O(n \log n /\alpha)$ & Intractable & $n^2$\\ \hline
Randomized Response~\cite{GRU12} & $O(\sqrt{n |S| \log |\cQ |/\alpha})$ & $\widetilde{O}(\sqrt{d \log |\cQ |}/\alpha)$ & \textcolor{blue}{$\Theta(n^2)$} & $n^2$ \\ \hline
Multiplicative Weight~\cite{HR10} & $\widetilde{O} (\sqrt{|\cE|} \log |\cQ | /\alpha)$  & $\widetilde{O}(d \sqrt{n \log |\cQ |/\alpha})$ & $O(n^2)$ & $n^2$ \\ \hline
Johnson-Lindenstrauss~\cite{BBDS12} & \textcolor{blue}{$O(|S| \sqrt{\log |\cQ |}/\alpha)$} & \textcolor{blue}{$O(\alpha^{-2}\log |\cQ |)$} & $O(n^{2.38})$ & $rn$ \\ \hline
Graph Sparsification~\cite{Upadhyay13} & \textcolor{blue}{$O(|S| \sqrt{\log |\cQ |}/\alpha)$} & $-$ & $O(n^{2+o(1)})$ & $n^2$ \\ \hline
This paper &  \textcolor{blue}{$O(|S|  \sqrt{\log |\cQ |}/\alpha)$} & \textcolor{blue}{$O(\alpha^{-2}\log |\cQ |)$} & \textcolor{blue}{$O(n^2 \log b)$} &  \textcolor{blue}{$2n+n\log n$} \\ \hline
\end{tabular} \label{table:compare}
\caption{Comparison Between Differentially Private Mechanisms.}  \label{table}
}}\end{table}

\noindent {\sc Proof Technique.} To prove this theorem, we  reduce it to the proof of Blocki {\it et al.}~\cite{BBDS12} as done by Upadhyay~\cite{Upadhyay13}. However, there are subtle reasons why the same  proof  does not apply here. 
One of the reasons is that one of the sparse random matrix has many entries picked deterministically to be zero. This allows an attack if we try to emulate the proof of~\cite{BBDS12}. Therefore, we need to study the matrix-variate distribution imposed by the other matrices involved in our projection matrix. In fact, we show a simple attack in~\secref{DP} if one try to prove privacy using the idea of Upadhyay~\cite{Upadhyay13} in our setting. 

\medskip \noindent {\scshape Comparison and private manifold learning.} Our result improves the efficiency of the algorithms of Blocki {\it et al.}~\cite{BBDS12} Upadhyay~\cite{Upadhyay13}, and the update time of the differentially private streaming algorithms of Upadhyay~\cite{Upadhyay14}. In Table~\ref{table}, we compare our result with previously known bounds to answer cut-queries on an $n$-vertex graph and covariance queries on an $n \times m$ matrix. In the table, $S$ denotes the subset of vertices for which the analyst makes the queries, $\cQ$ denotes the set of queries an analyst makes, $\cE$ denotes the set of edges of the input graph, $ b= \min \{c_2 a \log\paren{\frac{r}{\delta}},r\}$ with $a=c_1 \varepsilon^{-1} \log\paren{\frac{1}{\delta}} \log^2\paren{\frac{r}{\delta}}$. The second and third columns represents the additive error incurred in the algorithm for differential privacy. Table~\ref{table} shows that we achieve the same error bound as in~\cite{BBDS12, Upadhyay13}, but with improved run-time (note that $b \ll n$) and  quadratically less random bits.
\begin{versionA}
We also initiate the study of differentially private manifold learning. We give an efficient, non-interactive private algorithm under certain smoothness condition on the manifold (see,~\thmref{manifold}).
\end{versionA} 

\begin{versionC}
Our theorem for differential privacy is extremely flexible. In this paper, we exhibit its flexibility by showing a way to convert a non-private algorithm for manifold learning to a private learning algorithm. In manifold learning, we are $m$ points  $\bx_1,\cdots, \bx_m \in \R^n$ that lie on an $n$-dimensional manifold $\mathscr{M}$. We have the guarantee that it can be described by $f:\mathscr{M} \rightarrow \R^n$. The goal of the learning algorithm is to find $\by_1, \cdots, \by_m$ such that $\by_i=f(\bx_i)$. It is often considered non-linear analogs to principal component analysis. We give a non-adaptive differentially private algorithm under certain smoothness condition on the manifold.  
\end{versionC}

\medskip \noindent {\scshape Private low-rank approximation.} The above theorem speeds up all the known private algorithms based on the Johnson-Lindenstrauss lemma, except for the differentially-private low-rank approximation ($\mathsf{LRA}$)~\cite{Upadhyay14}. This is because the privacy proof in~\cite{Upadhyay14} depends crucially on the fact that two applications of dense Gaussian random matrices in a specific form also preserves privacy. To speed up the differentially private  $\mathsf{LRA}$ of~\cite{Upadhyay14}, we need to prove analogous result. In general, reusing random samples can lead to privacy loss. Another complication is that there are fixed non-zero entries in $\bP$. In~\thmref{DP2}, we prove that differential privacy is still preserved with our distribution. 

Our bound shows a gradual decay of the run-time as the sparsity increases. Note that the bounds of~\cite{NPW14} is of interest only for $r \geq \sqrt{n}$; it is sub-optimal for the regime considered by~\cite{AL09,AR14}. 


\begin{versionC}
\paragraph{Other Applications.} Due to the usefulness of low-dimension embedding, there have been considerable efforts by researchers to prove such a dimensionality reduction theorem in other normed spaces. All of these efforts have thus far resulted in negative results which show that the {\sf JLP} fails to hold true in certain non-Hilbertian settings. More specifically, Charikar and Sahai~\cite{CS02} proved that there is no dimension reduction via linear mappings in $\ell_1$-norm. This was extended to the general case by Johnson and Naor~\cite{JN10}. They showed that a normed space that satisfies the {\sf JL} transform is very close to being Euclidean in the sense that all its subspaces are close to being isomorphic to Hilbert space. 

Recently, Allen-Zhu, Gelashvili, and Razenshteyn~\cite{AGR14} showed that there does not exists a projection matrix with $\mathsf{RIP}$ of optimal order for any $\ell_p$ norm other than $p=2$. In the view of Krahmer and Ward~\cite{KW11} and characterization by Johnson and Naor~\cite{JN10}, our result gives an alternate reasoning for their impossibility result. Moreover, using the result by Pisier~\cite{Pisier}, which states that, if a Banach space has Banach-Mazur distance  $o(\log n)$, then the Banach space is super-reflexive, we get one such space using just almost linear samples. This has many applications in Functional analysis (see~\cite{Conway} for details).
\end{versionC}

\paragraph{Other Contributions.}
In~\secref{impossible},  {\em we answer negatively to the question raised by Blocki {\it et al.}~\cite{BBDS12} whether already known distributions of random matrices with {\sf JLP} preserve differential privacy or not}. We show this by giving counterexamples of neighbouring datasets on which the known distributions fail to preserve differential privacy with constant probability. Furthermore, extending on the efforts of Kane and Nelson~\cite{KN10,KN14} and Dasgupta and Gupta~\cite{DG03}, in~\secref{older}, we use~\thmref{hw} to  give a unified and simplified analysis of some of the  known distributions of random matrices that satisfy the {\sf JLP}. We also give a simplified proof of~\thmref{hw} with Gaussian random variables in the full version.
\begin{versionA}
Our result also has applications in functional analysis. Johnson and Naor~\cite{JN10} have shown that a normed space that satisfies {\sf JLP} is very close to being Euclidean in the sense that all its subspaces are close to being isomorphic to Hilbert space. Recently, Allen-Zhu {\it et al.}~\cite{AGR14} showed that  an optimal $\mathsf{RIP}$ matrix for any $\ell_p$ norm other than $p=2$  does not exist. Combining Krahmer and Ward~\cite{KW11} with~\cite{JN10}, our result gives an alternate reasoning for their impossibility result.  Using the result by Pisier~\cite{Pisier}, our theorem gives a super-reflexive Banach space using $2n+n\log n$ random bits. This has many applications in Functional analysis (see~\cite{Conway} for details). 
\end{versionA}

\subsection{Related Works}
In what follows, we just state the result mentioned in the respective papers. Due to the relation between {\sf JLP} and {\sf RIP}, the corresponding result also follows.
The original {\sf JL} transform used a distribution over random Gaussian matrices~\cite{JL84}. 
Achlioptas~\cite{Achlioptas03} observed that we only need $(\bPhi_{i:} \bx)^2$ to be concentrated around the mean for all unit vectors, and therefore, entries picked i.i.d. to $\pm 1$ with probability $1/2$ also satisfies {\sf JLP}. 

These two distributions  are over dense random matrices with slow embedding time. From application point of view, it is important to have fast embedding. 
An obvious way to improve the efficiency of the embedding is to use distribution over sparse random matrices. 
The first major breakthrough in this came through the work of Dasgupta, Kumar, and Sarlos~\cite{DKS10}, who gave a optimal distribution over random matrices with only ${O}(\varepsilon^{-1} \log ^3(1/\delta))$ non-zero entries in every column. Subsequently, Kane and Nelson~\cite{KN10, KN14} improved it to $\Theta(\varepsilon^{-1} \log (1/\delta))$ non-zero entries in every column.
Ailon and Chazelle~\cite{AC09} took an alternative approach. They used composition of matrices, each of which allows fast embedding.  However, the amount of randomness involved to perform the mapping was $O(rn)$. This was improved  by Ailon and Liberty~\cite{AL09}, who used dual binary codes to a distribution of random matrices using almost linear random samples.

The connection between {\sf RIP} and {\sf JLP} offers another approach to improve the embedding time. 
\begin{versionA}
In the large regime of $r$, the best known construction of {\sf RIP} matrices with fast matrix-vector multiplication are based on bounded orthonormal matrices or partial circulant matrices based constructions. Both these constructions are sub-optimal by three logarithmic factors, even through the improvements by Nelson {\it et al.}~\cite{NPW14}.  The case when $r \ll \sqrt{n}$ has been solved optimally in~\cite{AL09}, with the recent improvement by~\cite{AL13}. 
However, the number of random bits they require grows linearly with the number of iteration of $\bW\bD$ they perform. Table~\ref{tableRIP} contains a concise description of these results.
\end{versionA}
\begin{versionC}
In the large regime of $r$, there are known distributions that satisfies {\sf RIP} optimally, but all such distributions are over dense random matrices and permits slow multiplication. The best known construction of {\sf RIP} matrices  for large $r$ with fast matrix-vector multiplication are based on bounded orthonormal matrices or partial circulant matrices based constructions. Both these constructions are sub-optimal by several logarithmic factors, even through the improvements by Nelson {\it et al.}~\cite{NPW14}, who gave an ingenuous technique to bring down this to just three factors off from optimal.  The case when $r \ll \sqrt{n}$ has been solved optimally in~\cite{AL09}, with the recent improvement by~\cite{AL13}. They showed that iterated application of $WD$ matrices followed by sub-sampling gives the optimal bound. However, the number of random bits they require grows linearly with the number of iteration of $\bW \bD$ they perform. Table~\ref{tableRIP} contains a concise description of these results.
\end{versionC}
There is a rich literature on differential privacy. The formal definition was  given by Dwork {\it et al.}~\cite{DMNS06}. They used Laplacian distribution to guarantee differential privacy for bounded {\em sensitivity} query functions. The Gaussian variant of this basic algorithm was shown to preserve differential privacy by 
Dwork {\it et al.}~\cite{DKMMN06} in a follow-up work. Since then, many mechanisms for preserving differential privacy have been proposed in the literature (see, Dwork and Roth for detailed discussion~\cite{DR14}).  All these mechanisms has one 
common feature: they perturb the output before responding to  queries. Blocki {\it et al.}~\cite{BBDS12, BBDS13} and Upadhyay~\cite{Upadhyay13,Upadhyay14} took a complementary approach: they perturb the matrix  and perform a random projection on this matrix. Blocki {\it et al.}~\cite{BBDS12} and Upadhyay~\cite{Upadhyay13} used this idea to show that one can answer cut-queries on a graph, while Upadhyay~\cite{Upadhyay14} showed that if the singular values of the input private matrix is high enough, then even two successive applications of random Gaussian matrices in a specified format preserves privacy.

\medskip \noindent{\scshape Organization of the paper.} \secref{prelims} covers the basic preliminaries and notations,~\secref{newJL} covers the new construction and its proof,~\secref{applications} covers its applications. We prove some of the earlier known constructions does not preserve differential privacy in~\secref{impossible}. 
\begin{versionA}
\end{versionA}
\begin{versionC}
We conclude this paper in~\secref{open}
\end{versionC}

\section{Preliminaries and Notations} \label{sec:prelims} We  denote a random matrix by $\bPhi$. We use  $n$ to denote the dimension of the unit vectors,  $r$ to denote the dimension of the embedding subspace. We fix $\varepsilon$ to denote the accuracy parameter and $\delta$ to denote the confidence parameter in the {\sf JLP}. We use bold faced capital letters like $\bA$ for matrices, bold face small letters to denote vectors like $\bx, \by$, and calligraphic letters like $\cD$ for distributions. For a matrix $\bA$, we denote its Frobenius norm by $\|\bA\|_F$ and its operator norm by $\|\bA\|$. We denote a random vector with each entries picked i.i.d. from a sub-gaussian distribution by  $\mathbf{g}$. We use $\mathbf{e}_1, \cdots, \mathbf{e}_n$ to denote the standard basis of an $n$-dimensional space. We use the symbol $\mathbf{0}^n$ to denote an $n$-dimensional zero vector. We use $c$ (with numerical subscript) to denote universal constants.

\noindent {\bf Subgaussian random variables.} A random variable $X$ has a {\em subgaussian distribution}
with scale factor $\lambda <\infty$ if  for all real $t$, $\p [e^{t X}] \leq e^{\lambda^2 t^2/2}$. We use the following subgaussian distributions:
\begin{versionA}
(i) {\em Gaussian distribution:} \label{defn:gaussian} A standard gaussian distribution $X \sim \cN(\mu,\sigma^2)$ is distribution with probability density function defined as $\frac{1}{\sqrt{2\pi \sigma}}e^{-(X-\mu)^2/2\sigma^2}$,
(ii) {\em Rademacher distribution:} \label{defn:rademacher} A random variable $X$ with distribution $\p [X=-1] = \p [X=1]=1/2$ is called a Rademacher distribution. We denote it by $\sber(1/2)$. 
\end{versionA}

\begin{versionC}
\begin{enumerate}
	\item {\bf Gaussian distribution:} \label{defn:gaussian} A standard gaussian distribution $X \sim \cN(\mu,\sigma)$ is distribution with probability density function defined as $\frac{1}{\sqrt{2\bPi \sigma}}\exp(-|X-\mu|^2/2\sigma^2)$.
	\item {\bf Rademacher distribution:} \label{defn:rademacher} A random variable $X$ with distribution $\p [X=-1] = \p [X=1]=1/2$ is called a Rademacher distribution. We denote it by $\sber(1/2)$.
	\item {\bf Bounded distribution:} \label{defn:bounded} A more general case is bounded random variable $X$ with the property that $|X| \leq M$ almost surely for some constant $M$. Then $X$ is sub-gaussian.
\end{enumerate}
\end{versionC}

A  Hadamard matrix of order $n$ is defined recursively as follows: $ \bW_n = \frac{1}{\sqrt{2}}\begin{pmatrix}  \bW_{n/2} & \bW_{n/2} \\ \bW_{n/2} & -\bW_{n/2} \end{pmatrix}, \bW_1=1.\label{eq:hadamard} $ When it is clear from the context, we drop the subscript. We call a matrix randomized  Hadamard matrix of order $n$ if it is of the form $ \bW \bD$, where $\bW$ is a normalized  Hadamard matrix and $\bD$ is a diagonal matrix with non-zero entries picked i.i.d. from $\sber(1/2).$  We fix $\bW$ and $\bD$ to denote these matrices.

\begin{versionC}
We will few well known concentration bounds for random subgaussian variables.
\begin{theorem} {\em (Chernoff-Hoeffding's inequality)} \label{thm:ch}
Let $X_1, \cdots, X_n$ be $n$ independent  random variables with the same probability distribution, each ranging over the real interval $[i_1, i_2]$, and let $\mu$ denote the expected value of each of these variables. Then, for every $\varepsilon >0$, we have
\[ \p \sparen{ \left| \frac{\sum_{i=1}^n X_i}{n} - \mu \right| \geq \varepsilon } \leq 2 \exp \paren{- \frac{2\varepsilon^2n}{(i_2-i_1)^2} }.  \]
\end{theorem}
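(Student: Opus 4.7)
The plan is to use the standard exponential-moment (Chernoff) method. First I would reduce to the zero-mean case by setting $Y_i = X_i - \mu$, so that $\E[Y_i] = 0$ and $Y_i$ ranges over an interval of length $i_2 - i_1$. The goal then splits into bounding the upper tail $\p[\sum_i Y_i \geq n\varepsilon]$ and the lower tail $\p[\sum_i Y_i \leq -n\varepsilon]$ separately; the factor of $2$ in the statement will come from a final union bound over these two events.

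For the upper tail, I would fix an arbitrary $t > 0$ and apply Markov's inequality to the exponentiated sum:
\[
\p\sparen{\textstyle\sum_i Y_i \geq n\varepsilon} \;\leq\; e^{-tn\varepsilon}\,\E\sparen{e^{t\sum_i Y_i}} \;=\; e^{-tn\varepsilon}\prod_{i=1}^n \E\sparen{e^{tY_i}},
\]
where the equality uses independence. The crucial ingredient is Hoeffding's lemma: for any zero-mean random variable $Y$ supported in $[a,b]$, we have $\E[e^{tY}] \leq \exp(t^2(b-a)^2/8)$. I would prove this by convexity, writing $e^{ty} \leq \tfrac{b-y}{b-a}e^{ta} + \tfrac{y-a}{b-a}e^{tb}$ for $y \in [a,b]$ and taking expectations (the linear term vanishes since $\E[Y]=0$). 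This gives $\E[e^{tY}] \leq \varphi(t) := e^{L(t)}$ for an explicit function $L(t)$; a direct computation shows $L(0) = L'(0) = 0$ and $L''(s) \leq (b-a)^2/4$ for all $s$, and Taylor's theorem with remainder then yields the $(b-a)^2/8$ bound.

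Substituting Hoeffding's lemma into the Chernoff bound gives $\p[\sum_i Y_i \geq n\varepsilon] \leq \exp\paren{-tn\varepsilon + nt^2(i_2-i_1)^2/8}$. Optimizing over $t > 0$ by choosing $t = 4\varepsilon/(i_2-i_1)^2$ minimizes the exponent at $-2n\varepsilon^2/(i_2-i_1)^2$, yielding the claimed single-sided bound. Applying the identical argument to the variables $-Y_i$ handles the lower tail, and a union bound over the two tails produces the factor of $2$.

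The main (and really only) nontrivial step is Hoeffding's lemma; the rest is mechanical. The slightly delicate point inside that lemma is bounding the second derivative $L''(s)$ uniformly by $(b-a)^2/4$, which follows by writing $L''(s)$ as the variance of a Bernoulli-type random variable on $\{a,b\}$ (tilted by $s$) and observing that the variance of any random variable supported in an interval of length $b-a$ is at most $(b-a)^2/4$. Once this is in place, the rest of the proof is just bookkeeping.
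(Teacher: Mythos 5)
Your proof is correct: this is the standard exponential-moment argument via Hoeffding's lemma, and the optimization of $t$ does yield the exponent $-2n\varepsilon^2/(i_2-i_1)^2$ with the factor $2$ coming from the union bound over the two tails. The paper itself states this theorem as a known classical result and offers no proof to compare against, so there is nothing further to reconcile; your write-up is the canonical derivation and is complete modulo the routine verification of $L''(s)\leq (b-a)^2/4$, which you correctly identify as a variance bound for a random variable supported on an interval of length $b-a$.
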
 
The Chernoff-Hoeffding bound is useful for estimating the average value of a function defined over a large set of values, and this is exactly where we use it. Hoeffding showed that this theorem holds for sampling without replacement as well. However, when the sampling is done without replacement, there is an even sharper bound by Serfling~\cite{Serfling74}.
\begin{theorem} {\em (Serfling's bound)} \label{thm:serfling}
Let $X_1, \cdots, X_n$ be $n$ sample drawn without replacement from a list of $N$ values, each ranging over the real interval $[i_1, i_2]$, and let $\mu$ denote the expected value of each of these variables. Then, for every $\varepsilon >0$, we have
\[ \p \sparen{ \left| \frac{\sum_{i=1}^n X_i}{n} - \mu \right| \geq \varepsilon } \leq 2 \exp \paren{- \frac{2\varepsilon^2n}{(1-(n-1)/N)(i_2-i_1)^2} }.  \]
\end{theorem}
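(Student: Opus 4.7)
The plan is to follow Serfling's original strategy and reduce the statement to a single moment-generating-function (MGF) estimate, where the extra factor $1-(n-1)/N$ over the standard Hoeffding bound comes from exploiting the decreasing conditional range of the increments when sampling without replacement.

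First, I would reduce to a one-sided bound: by symmetry it suffices to show
\[ \p \sparen{ \tfrac{1}{n}\sum_{i=1}^n X_i - \mu \geq \varepsilon } \leq \exp\paren{- \tfrac{2\varepsilon^2 n}{(1-(n-1)/N)(i_2-i_1)^2} }, \]
and then apply it to $-X_i$ and take a union bound. By shifting and rescaling, I can also assume $i_1=0$ and write $R=i_2-i_1$.

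Next, Markov's inequality applied to the MGF gives, for any $t>0$,
\[ \p \sparen{ S_n - n\mu \geq n\varepsilon } \leq e^{-tn\varepsilon}\, \E[e^{t(S_n - n\mu)}], \]
where $S_n=\sum_{i=1}^n X_i$. The main technical step is to establish the MGF inequality
\[ \E[e^{t(S_n - n\mu)}] \leq \exp\paren{ \tfrac{t^2 n (1-(n-1)/N) R^2}{8} }. \]
Given this, optimizing $t$ yields the claimed tail bound directly.

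To prove the MGF inequality, I would follow Hoeffding's comparison argument together with Serfling's refinement. Let $Y_k = X_k - \E[X_k \mid X_1,\dots,X_{k-1}]$ denote the centered increments; note that conditioned on the first $k-1$ draws, $X_k$ is uniform on the remaining $N-k+1$ values of the population. The key observation is that the \emph{conditional} range of $X_k$ given the past is at most $R$, but—more importantly—the empirical mean of the residual population after $k-1$ draws is a deterministic function of the past, so $\E[X_k\mid\cdot]$ is determined and Hoeffding's lemma applies to $Y_k$ conditionally. A direct application of Hoeffding's lemma to each conditional factor only recovers the with-replacement bound, missing the improvement. To obtain the factor $1-(n-1)/N$, I would use the reverse-martingale representation: the sequence $W_k = (N-k)(\bar X_k - \mu)/(N-k+1)$ indexed backward in $k$ is a martingale with respect to the filtration generated by the unsampled set, and its bounded differences shrink by a factor proportional to $1/(N-k)$. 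Summing these squared increments telescopes to $n(N-n)R^2/(4N)$ rather than $nR^2/4$, which is precisely the needed improvement.

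The main obstacle is this last step: the naive step-by-step conditioning gives only the Hoeffding constant $nR^2/4$, so carefully setting up either the reverse martingale of Hoeffding--Serfling or an equivalent exchangeable coupling is where the real work lies. Everything before and after that step (Markov's inequality, optimization in $t$, and the symmetrization for the two-sided bound) is standard.
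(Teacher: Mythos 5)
The paper does not prove this statement: it is imported verbatim from Serfling \cite{Serfling74} and used as a black box (in the proofs of Lemma~\ref{lem:P_2} and Lemma~\ref{lem:zero}), so there is no in-paper argument to compare yours against. Judged on its own, your proposal follows the standard Hoeffding--Serfling route and is correct in outline: symmetrize, pass to the MGF, and obtain the factor $1-(n-1)/N$ from a martingale whose conditional increments shrink as the residual population shrinks. Two remarks. First, the sequence you write down, $W_k=(N-k)(\bar X_k-\mu)/(N-k+1)$, is not in fact a martingale; the classical choice is $Z_k=(S_k-k\mu)/(N-k)$, which is a (forward) martingale since $\E[X_{k+1}\mid X_1,\dots,X_k]=(N\mu-S_k)/(N-k)$, and whose $k$-th conditional increment has range $R/(N-k-1)$. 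With that fix, Azuma--Hoeffding plus the telescoping bound $\sum_{k=1}^{n}(N-k)^{-2}\le n/(N(N-n))$ yields $\E[e^{t(S_n-n\mu)}]\le\exp\bigl(t^2nR^2(1-n/N)/8\bigr)$, which is slightly \emph{stronger} than the factor $1-(n-1)/N$ in the statement and hence suffices; your quoted variance proxy $n(N-n)R^2/(4N)$ is consistent with this. Second, you are right that naively applying Hoeffding's lemma to the conditionally centered draws only recovers the with-replacement constant, so identifying the rescaled martingale is indeed the essential step. Net assessment: correct approach and correct target estimates, with one formula to repair; no conceptual gap.
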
 
The quantity $1-(n-1)/N$ is often represented by $f^*$. Note that when $f^*=0$, we derive~\thmref{ch}. Since $f^* \leq 1$, it is easy to see that~\thmref{serfling} achieves better bound than~\thmref{ch}.

We also use the following result of Hanson-Wright~\cite{HW} for subgaussian random variables.
\begin{theorem} {\em (Hanson-Wright's inequality).} \label{thm:hw}
	For any symmetric matrix $\bA$, sub-gaussian random vector $\bx$ with mean $0$ and variance $1$, 
	\begin{align} \p_{\mathbf{g}} \sparen{ | \mathbf{g}^{\mathsf T} \bA \mathbf{g} - \tr(\bA) | > \eta } \leq 2 \exp \paren{ -\min \set{ {c_1 \eta^2}/{\|\bA\|_F^2} , {c_2 \eta}/{\|\bA\|}} }, \label{eq:hw} \end{align}
	where $\|A\|$ is the operator norm of matrix $\bA$. 
\end{theorem}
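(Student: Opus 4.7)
The plan is to prove the general sub-gaussian Hanson–Wright bound by splitting the quadratic form into its diagonal and off-diagonal contributions, bounding each by a Bernstein-type moment generating function argument, and then combining them via a union bound. Write $\mathbf{g} = (g_1,\ldots,g_n)$ with independent mean-zero, variance-one sub-gaussian entries of scale $\lambda = O(1)$. Since $\tr(\bA) = \sum_i A_{ii}\,\mathbb{E}[g_i^2]$, decompose
\begin{equation*}
\mathbf{g}^{\mathsf T}\bA\mathbf{g} - \tr(\bA) \;=\; \underbrace{\sum_{i} A_{ii}(g_i^2 - 1)}_{D} \;+\; \underbrace{\sum_{i\neq j} A_{ij}\, g_i g_j}_{O}.
\end{equation*}
It suffices to show each of $D$ and $O$ satisfies a Bernstein-type tail $2\exp(-c\min\{\eta^2/\|\bA\|_F^2, \eta/\|\bA\|\})$, since $\|\mathrm{diag}(\bA)\|_F \le \|\bA\|_F$ and $\max_i |A_{ii}| \le \|\bA\|$.

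For the diagonal part $D$, note $g_i^2 - 1$ is a mean-zero sub-exponential random variable with sub-exponential norm $O(\lambda^2)$, because for sub-gaussian $g_i$ the MGF satisfies $\mathbb{E}[\exp(s(g_i^2-1))] \le \exp(C s^2 \lambda^4)$ for $|s| \le c/\lambda^2$. Thus by standard Bernstein inequality applied to the independent sum $\sum_i A_{ii}(g_i^2-1)$, one gets $\Pr[|D|>\eta/2] \le 2\exp(-c\min\{\eta^2/\|\mathrm{diag}(\bA)\|_F^2, \eta/\|\mathrm{diag}(\bA)\|_\infty\})$, which is absorbed into the claimed bound.

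The off-diagonal term $O$ is the hard part because it is a quadratic chaos, not a sum of independent random variables. The plan is to apply the standard decoupling inequality (Bourgain–Tzafriri / de la Peña–Montgomery-Smith): for an independent copy $\mathbf{g}'$ of $\mathbf{g}$,
\begin{equation*}
\mathbb{E}\bigl[\exp\bigl(t\sum_{i\neq j} A_{ij} g_i g_j\bigr)\bigr] \;\le\; \mathbb{E}\bigl[\exp\bigl(4t\sum_{i,j} A_{ij} g_i g'_j\bigr)\bigr].
\end{equation*}
Condition on $\mathbf{g}$: the exponent becomes linear in the independent sub-gaussian coordinates $g'_j$ with coefficients $(A\mathbf{g})_j$, so its MGF is bounded by $\exp(C t^2 \lambda^2 \|A\mathbf{g}\|_2^2) = \exp(C t^2 \lambda^2 \mathbf{g}^{\mathsf T} A^{\mathsf T}\!A\,\mathbf{g})$. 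Now take expectation over $\mathbf{g}$. The remaining expectation is $\mathbb{E}[\exp(s\,\mathbf{g}^{\mathsf T} B\mathbf{g})]$ with $s = C t^2 \lambda^2$ and $B = A^{\mathsf T}\!A \succeq 0$; by a short comparison argument (diagonalize $B$ and use the sub-gaussian MGF of each $g_i^2$) this is bounded for $s\|B\| < c$ by $\exp(C' s\,\tr(B) + C' s^2 \|B\|_F^2) \le \exp(C'' t^2 \|A\|_F^2)$ provided $t^2 \|\bA\|^2 \le c$. Combining with the factor $\exp(C' s \tr(B)) = \exp(O(t^2 \|A\|_F^2))$, the overall MGF of $O$ is bounded by $\exp(C''' t^2 \|\bA\|_F^2)$ in the range $|t| \le c/\|\bA\|$.

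Finally, applying Chernoff's bound $\Pr[O > \eta/2] \le \exp(-t\eta/2) \mathbb{E}[\exp(tO)]$ and optimizing over $t \in (0, c/\|\bA\|]$ yields the two regimes: $t = \eta/(C\|\bA\|_F^2)$ when $\eta \le \|\bA\|_F^2/\|\bA\|$ (giving the sub-gaussian tail $\exp(-c_1\eta^2/\|\bA\|_F^2)$) and $t = c/\|\bA\|$ otherwise (giving the sub-exponential tail $\exp(-c_2\eta/\|\bA\|)$). A symmetric argument for $-O$ and a union bound with $D$ completes the proof. The main obstacle I expect is the decoupled chaos MGF bound — specifically controlling $\mathbb{E}[\exp(s\,\mathbf{g}^{\mathsf T} A^{\mathsf T}\!A\,\mathbf{g})]$ for general sub-gaussian $\mathbf{g}$ without rotational invariance; this is handled by diagonalizing the PSD matrix $A^{\mathsf T} A$ and using the sub-gaussian moment bound coordinate-wise, which only loses universal constants.
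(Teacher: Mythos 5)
Your overall architecture (diagonal/off-diagonal split, Bernstein for the diagonal, de la Pe\~na--Montgomery-Smith decoupling plus conditioning for the off-diagonal chaos, then Chernoff with the two regimes of $t$) is the standard Rudelson--Vershynin route to the full sub-gaussian Hanson--Wright inequality, and most of it is sound. But there is a genuine gap exactly at the step you yourself flag as the main obstacle: after conditioning you must bound $\E_{\mathbf{g}}\sparen{\exp\paren{s\,\mathbf{g}^{\mathsf T} \bA^{\mathsf T}\bA\,\mathbf{g}}}$, and your proposed justification --- ``diagonalize the PSD matrix $\bA^{\mathsf T}\bA$ and use the sub-gaussian moment bound coordinate-wise'' --- does not work for general sub-gaussian $\mathbf{g}$. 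Writing $\bA^{\mathsf T}\bA=\mathbf{U}^{\mathsf T}\bLambda\mathbf{U}$ gives $\mathbf{g}^{\mathsf T}\bA^{\mathsf T}\bA\,\mathbf{g}=\sum_i \lambda_i (\mathbf{U}\mathbf{g})_i^2$, but the rotated coordinates $(\mathbf{U}\mathbf{g})_i$ are \emph{not independent} unless $\mathbf{g}$ is Gaussian, so the expectation of the exponential does not factor and the coordinate-wise sub-exponential MGF bound cannot be applied; the needed estimate $\E\sparen{\exp(s\,\mathbf{g}^{\mathsf T}\mathbf{B}\mathbf{g})}\le\exp(Cs\,\tr(\mathbf{B})+Cs^2\|\mathbf{B}\|_F^2)$ is essentially a (one-sided, PSD) Hanson--Wright statement itself, so proving it this way is either circular or unjustified. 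The standard fix is the Gaussian comparison/linearization trick: observe that $\exp\paren{Ct^2\|\bA\mathbf{g}\|_2^2}$ is, up to constants, $\E_{\mathbf{h}}\sparen{\exp\paren{c\,t\langle \bA\mathbf{g},\mathbf{h}\rangle}}$ for an independent standard Gaussian $\mathbf{h}$; swap the conditioning, use sub-gaussianity of $\mathbf{g}$ to get $\exp\paren{C't^2\|\bA^{\mathsf T}\mathbf{h}\|_2^2}$, and only then diagonalize in the final \emph{Gaussian} expectation (where rotation invariance legitimately restores independence and the $\chi^2$ MGF applies for $t^2\|\bA\|^2\le c$), yielding $\exp\paren{C''t^2\|\bA\|_F^2}$. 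With that repair your argument goes through.

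It is also worth noting how this compares to the paper: the paper does not prove the sub-gaussian statement at all, but only the Gaussian special case (in the appendix), precisely by diagonalizing $\bA$ and using rotational invariance to reduce to an independent sum $\sum_i c_i(w_i^2-1)$ of centered sub-exponential variables, followed by the same Bernstein/Chernoff optimization you use; the general sub-gaussian case is delegated to a citation. So your attempt is strictly more ambitious than the paper's proof, and the one place it breaks is exactly the place where the Gaussian proof's key tool (independence after rotation) is no longer available.
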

The original proof of Hanson-Wright inequality is very simple when we just consider special class of subgaussian random variables. We give the proof when $\mathbf{g}$ is a random Gaussian vector in~\appref{gaussian}. A proof for $\sber(1/2)$ is also quite elementary. For a simplified proof for any subgaussian random variables, we refer the readers to~\cite{RV}.

 In our proof of differential privacy, we prove that each row of the published matrix preserves $(\alpha_0, \beta_0)$-differential privacy for some appropriate $\alpha_0,\beta_0$, and then invoke a composition theorem proved by Dwork, Rothblum, and Vadhan~\cite{DRV10} to prove that the published matrix preserves $(\alpha,\beta)$-differential privacy. The following theorem is the composition theorem that we use.
\begin{theorem} \label{thm:DRV10}
	Let $\alpha, \beta \in (0,1)$, and $\beta'>0$. If $\mathfrak{M}_1, \cdots , \mathfrak{M}_\ell$ are each $(\alpha, \beta)$-differential private mechanism, then the mechanism $\mathfrak{M}(D):= (\mathfrak{M}_1(D), \cdots , \mathfrak{M}_\ell(D))$ releasing the concatenation of each algorithm is $(\alpha', \ell \beta+\beta')$-differentially private for $\alpha' < \sqrt{2\ell \ln (1/\beta')}\alpha + 2\ell \alpha^2$.
\end{theorem}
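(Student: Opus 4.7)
The plan is to reduce the composition analysis to a concentration inequality for the sum of privacy-loss random variables of the individual mechanisms. Fix neighboring datasets $D,D'$, and for each $i=1,\ldots,\ell$ define the privacy loss $Z_i := \log\paren{\p[\mathfrak{M}_i(D)=o_i]/\p[\mathfrak{M}_i(D')=o_i]}$, where $o_i \sim \mathfrak{M}_i(D)$. Since the mechanisms are non-adaptive and run on the same input, the composed output has privacy loss $Z = \sum_{i=1}^\ell Z_i$, a sum of independent random variables. By the standard privacy-loss characterization of approximate differential privacy, establishing $(\alpha',\ell\beta+\beta')$-differential privacy reduces to showing $Z \le \alpha'$ except with probability at most $\ell\beta+\beta'$.

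First I would peel off the $\ell\beta$ slack by restricting to a ``typical'' output event. Each $(\alpha,\beta)$-DP mechanism $\mathfrak{M}_i$ admits an event $G_i$ in its output space, of measure at least $1-\beta$ under both $\mathfrak{M}_i(D)$ and $\mathfrak{M}_i(D')$, on which the likelihood ratio lies in $[e^{-\alpha},e^\alpha]$. A union bound gives $\p\sparen{\bigcap_i G_i} \ge 1-\ell\beta$. Conditioning on this event, each $Z_i$ is bounded pointwise in $[-\alpha,\alpha]$, and a short calculation using $\log x \le x-1$ bounds the KL divergence, and hence $\E[Z_i]$, by $\alpha(e^\alpha-1)\le 2\alpha^2$ whenever $\alpha\in(0,1)$.

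On the good event the $Z_i$'s are independent, have mean at most $2\alpha^2$, and are bounded in $[-\alpha,\alpha]$. Azuma--Hoeffding applied to the centered sum $\sum_i(Z_i-\E[Z_i])$ therefore yields
\[ \p\sparen{\sum_{i=1}^\ell (Z_i-\E[Z_i]) > \alpha\sqrt{2\ell\ln(1/\beta')}} \le \beta'. \]
Adding back the bias $\sum_i\E[Z_i]\le 2\ell\alpha^2$ gives $Z \le \alpha\sqrt{2\ell\ln(1/\beta')}+2\ell\alpha^2 = \alpha'$ up to the additional failure mass $\beta'$. Combining with the $\ell\beta$ slack from the good-event decomposition and translating the resulting tail bound on the privacy loss back into a set-wise statement yields the claimed $(\alpha',\ell\beta+\beta')$-DP guarantee.

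The main subtlety I expect is the good-event decomposition: one must produce $G_i$ so that both the pointwise boundedness $|Z_i|\le\alpha$ and the KL-style bound on $\E[Z_i]$ hold simultaneously under the conditional distributions, which requires the slightly refined characterization of $(\alpha,\beta)$-differential privacy (a single bad set of mass $\beta$ under both $\mathfrak{M}_i(D)$ and $\mathfrak{M}_i(D')$) rather than the weakest one. Once this characterization is in place, the remainder is a textbook application of Hoeffding's inequality to a sum of bounded, positively biased, independent random variables, and the $\sqrt{2\ell\ln(1/\beta')}$ term falls out directly from the subgaussian tail with variance proxy $\ell\alpha^2$.
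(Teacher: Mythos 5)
The paper gives no proof of this statement at all: it is imported verbatim from Dwork--Rothblum--Vadhan \cite{DRV10}, and your argument is precisely the standard proof from that source --- privacy-loss random variables, the approximate-DP decomposition that pays the $\ell\beta$ term up front, the KL-divergence bound $\E[Z_i]\le\alpha(e^\alpha-1)\le 2\alpha^2$ for $\alpha\in(0,1)$, and Azuma--Hoeffding on the centered sum of variables bounded in $[-\alpha,\alpha]$, whose constants indeed produce exactly the $\alpha\sqrt{2\ell\ln(1/\beta')}$ deviation. The only point requiring care is the one you already flag yourself: $(\alpha,\beta)$-differential privacy does not literally furnish a single event $G_i$ of mass $\ge 1-\beta$ under both $\mathfrak{M}_i(D)$ and $\mathfrak{M}_i(D')$ on which the likelihood ratio is pointwise in $[e^{-\alpha},e^{\alpha}]$; the correct form of the lemma supplies approximating distributions within statistical distance $\beta$ of each that are $(\alpha,0)$-indistinguishable, and with that substitution (and a coupling over the $\ell$ coordinates) your calculation goes through unchanged.
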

\end{versionC}

\section{A New Distribution over Random Matrices} \label{sec:newJL}  In this section, we give our basic distribution of random matrices in~\figref{basic} that satisfies {\sf JLP}. We then perform subsequent improvements to reduce the number of random bits in~\secref{random} and improve the run-time in~\secref{run}. 
Our proof uses an idea from cryptography and an approach taken by Kane and Nelson~\cite{KN10,KN14} to use Hanson-Wright inequality. 
Unfortunately, we cannot just use the Hanson-Wright inequality as done by~\cite{KN14} due to the way our distribution is defined. This is because most of the entries of $\bP$ are non-zero and we use a  Hadamard and a permutation matrix in between the two matrices with sub-gaussian entries. To resolve this issue, we use a trick from cryptography. We emulate the action of a random matrix picked from our defined distribution by a series of random matrices. We show that each of these composed random matrices satisfies the {\sf JLP} under the constraints imposed by the other matrices that have already operated on the input.
The main result proved in this section is the following theorem.
\begin{theorem}  \label{thm:newsparse}
For any $\varepsilon, \delta >0$ and a positive integer $n$. Let $r=O(\varepsilon^{-2} \log (1/\delta))$, $a=c_1 \varepsilon^{-1} \log(1/\delta) \log^2(r/\delta)$, and $ b=c_2 a \log(a/\delta)$ with $r \leq n^{1/2-\tau}$ for arbitrary constant $\tau>0$. Then there exists a distribution $\cD$ of  $r \times n$ matrices over $\R^{r \times n}$ such that any sample $\bPhi \sim \cD$  needs only $2n+n\log n$ random bits and for any $\bx \in \bS^{n-1}$, 
		\begin{align} \p_{\bPhi \sim \cD} \sparen{ \left|~\|\bPhi \bx \|_2^2 -1 \right| > \varepsilon } < \delta.  \end{align}
		Moreover, the run-time of a matrix-vector multiplication takes $O(n \min\{\log b,\log r\})$ time. \end{theorem}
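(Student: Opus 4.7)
The plan is to implement the hybrid argument sketched in \secref{motivation}. Fix $\bx \in \bS^{n-1}$ and write $\by := \mathbf{G}\bx$, $\bz := \bPi \by$, so that $\bPhi\bx = \bP\bz$. I will show that each of these three transformations is ``good'' conditioned on the previous one, and union-bound the failure probabilities to $\delta$.

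\textbf{Stage 1 (flattening by $\mathbf{G}$).} Since each diagonal block of $\mathbf{G}$ is a randomized Hadamard matrix $\bW\bD$ of order $b$, $\mathbf{G}$ is orthogonal and $\|\by\|_2 = 1$. For each coordinate $y_j$, the signs from $\bD$ make $y_j$ a Rademacher sum of at most $b$ entries of $\bx$ each of magnitude at most $\|\bx\|_\infty$, rescaled by $1/\sqrt{b}$; hence $y_j$ is sub-gaussian with parameter $\leq 1/\sqrt{b}$. Applying Hoeffding and union-bounding over the $n$ coordinates yields $\Pr[\,\|\by\|_\infty > C\sqrt{\log(n/\delta)/b}\,] \leq \delta/3$.

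\textbf{Stage 2 (equidistribution by $\bPi$).} Partition $\bz$ into $r$ contiguous blocks $\bz^{(1)},\dots,\bz^{(r)}\in\R^{n/r}$. I want $\max_i \|\bz^{(i)}\|_2^2 = O(1/r)$. Conditioned on $\by$ from Stage~1, each block of $\bz$ is obtained by sampling $n/r$ squared coordinates of $\by$ without replacement from a multiset whose total is $1$ and whose entries are bounded by $\|\by\|_\infty^2$. Serfling's inequality for sampling without replacement gives, for each $i$,
\[
\Pr\bigl[\,\|\bz^{(i)}\|_2^2 > 1/r + \eta\,\bigr] \leq 2\exp\!\bigl(-c\,\eta^2\,(r/n)\,\|\by\|_\infty^{-4}\bigr).
\]
Choosing $\eta = O(\|\by\|_\infty^2\sqrt{(n/r)\log(r/\delta)})$ and union-bounding over $i\in[r]$ shows that $\max_i \|\bz^{(i)}\|_2^2 = O(1/r)$ with probability $\geq 1-\delta/3$, under the chosen $b$ and in the regime $r \leq n^{1/2-\tau}$.

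\textbf{Stage 3 (Hanson--Wright).} Let $\bg = (\bg_1,\dots,\bg_r)$ be the concatenation of the sub-gaussian vectors used in $\bP$. Because $\bP$ is block-diagonal with row $i$ equal to $\bg_i^{\mathsf T}$ supported on the $i$-th block of $n/r$ columns,
\[
\|\bPhi\bx\|_2^2 \;=\; \sum_{i=1}^r \langle \bg_i, \bz^{(i)}\rangle^2 \;=\; \bg^{\mathsf T} M \bg,\qquad M = \mathrm{blkdiag}\bigl(\bz^{(i)}(\bz^{(i)})^{\mathsf T}\bigr)_{i\in[r]}.
\]
Then $\tr(M) = \|\bz\|_2^2 = 1$, $\|M\| = \max_i \|\bz^{(i)}\|_2^2$, and $\|M\|_F^2 = \sum_i \|\bz^{(i)}\|_2^4$. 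Plugging the Stage-2 bounds $\|M\| = O(1/r)$ and $\|M\|_F^2 = O(1/r)$ into Hanson--Wright (\thmref{hw}) with deviation $\varepsilon$, and using $r = \Theta(\varepsilon^{-2}\log(1/\delta))$, yields $\bigl|\,\|\bPhi\bx\|_2^2 - 1\,\bigr| \leq \varepsilon$ with probability $\geq 1-\delta/3$. A final union bound gives the theorem.

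\textbf{Randomness and runtime.} The bit count is additive: $n$ Rademacher bits for $\bD$, $n$ Rademacher bits for the $r\cdot (n/r) = n$ nonzero entries of $\bP$, and $n\log n$ bits to sample $\bPi$ uniformly, totalling $2n + n\log n$. The dominant cost in computing $\bPhi\bx$ is applying $\mathbf{G}$, which performs $n/b$ Walsh--Hadamard transforms of size $b$ at cost $O(b\log b)$ each, for a total of $O(n\log b)$; the permutation and the $O(n)$ nonzeros of $\bP$ are absorbed. The $O(n\log r)$ alternative comes from truncating the Hadamard computation once only the projections onto the $r$ output rows are needed.

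\textbf{Main obstacle.} The delicate point is matching the parameters across the three stages so that (i) the $\ell_\infty$ bound from Stage~1 is strong enough for Serfling in Stage~2 to beat a union bound over the $r$ blocks, and (ii) the resulting bounds on $\|M\|$ and $\|M\|_F^2$ are small enough that Hanson--Wright gives the target tail $\delta$ with only $r = O(\varepsilon^{-2}\log(1/\delta))$ rows. These compatibility constraints propagate backwards into the choices of $a$ and $b$ and force the restriction $r \leq n^{1/2-\tau}$: larger $r$ makes the blocks of $\mathbf{G}$ too small to flatten $\bx$ adequately and simultaneously makes the block sums in $\|M\|_F^2$ too large to tolerate.
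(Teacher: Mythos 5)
Your three-stage decomposition (flatten with $\mathbf{G}$, equidistribute with $\bPi$ via Serfling, finish with Hanson--Wright on a block-diagonal quadratic form) is exactly the architecture of the paper's proof: Stage 2 is \lemref{P_2}, Stage 3 is \lemref{gaussian} and \thmref{newgaussian}, and the randomness count is the same. However, there is a genuine quantitative gap at the interface of Stages 1 and 2, and it is precisely the point the paper has to treat with care. With the block size $b = c_2 a\log(a/\delta) = O(\varepsilon^{-1}\polylog)$ from the theorem statement, the flatness guarantee $\|\by\|_\infty = O(\sqrt{\log(n/\delta)/b})$ is the best a block-diagonal $\mathbf{G}$ can give, but it is far too weak for Stage 2. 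Concretely, take $\bx$ supported entirely on a single $b\times b$ block of $\mathbf{G}$. Since $\mathbf{G}$ is block diagonal, $\by = \mathbf{G}\bx$ is still supported on only $b$ coordinates, each of magnitude about $1/\sqrt{b}$. Because $r = \Theta(\varepsilon^{-2}\log(1/\delta))$ forces $\varepsilon^{-1} = \Theta(\sqrt{r/\log(1/\delta)})$, we have $b = O(\sqrt{r}\cdot\polylog) \ll r$, so after the permutation most of the $r$ blocks of $\bz$ receive no mass at all while some block receives at least one heavy coordinate, giving $\max_i\|\bz^{(i)}\|_2^2 \geq 1/b \gg 1/r$. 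Your claimed Stage-2 conclusion $\max_i\|\bz^{(i)}\|_2^2 = O(1/r)$ is therefore false for this $\bx$, and no choice of $\eta$ in Serfling can rescue it: the failure is in the conditioning event, not in the concentration step. (Equivalently, your own compatibility requirement works out to $b \gtrsim \sqrt{nr}$ up to logarithms, which the stated $b$ misses by a polynomial factor; the restriction $r \leq n^{1/2-\tau}$ does not help.)

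The paper avoids this by first proving the theorem with the \emph{full} $n\times n$ randomized Hadamard matrix, for which \thmref{AC09} gives $\|\bW\bD\bx\|_\infty \leq \sqrt{\log(n/\delta)/n}$ --- flatness at scale $1/\sqrt{n}$, which is what Serfling actually needs --- yielding \thmref{newgaussian} with $O(n\log n)$ time; switching the Gaussian entries of $\bP$ to Rademacher then gives the $2n+n\log n$ bit count. The sparse block-diagonal $\mathbf{G}$ of \thmref{DKS} is substituted only afterwards, and only in the regime $1/a \leq \log(n/\delta)/n$ (i.e.\ very small $\varepsilon$, where $b$ is effectively of order $n$); outside that regime the paper switches to blocks of size $b=r^{c}$ and a separate argument exploiting that at most $n/\log(n/\delta)$ blocks of $\bx$ can carry $\ell_2$ mass above $\sqrt{\log(n/\delta)/n}$, which is also where the $O(n\log r)$ alternative in the runtime really comes from --- not from truncating the Hadamard transform to $r$ output rows. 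To repair your write-up you would need to either run Stages 1--3 with the full Hadamard matrix and present the sparsification as a separate, regime-dependent optimization, or redo Stage 2 so that it tolerates $\by$ supported on as few as $b$ coordinates.
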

\begin{versionC}
\begin{figure} [t]
\begin{center}
\fbox{
\begin{minipage}[l]{6in}
\medskip
{\bf Construction of $\bPhi$:} Construct the matrices $\bD$ and $\bP$ as below.
\begin{enumerate}
	\item $\bD$ is an $n \times n$ diagonal matrix such that $\bD_{ii}\sim \sber(1/2)$ for $1 \leq i \leq n.$
  	\item $\bP$ is constructed as follows:
  	\begin{enumerate}
  		\item Pick $n$ random sub-gaussian samples $\mathbf{g}=\set{\mathbf{g}_1, \cdots, \mathbf{g}_n}$ of mean $0$ and variance $1$.
   		\item Divide $\mathbf{g}$ into blocks of  $n/r$ elements, 
  		such that for $1 \leq i \leq r$, $\bPhi_{i} = \paren{\mathbf{g}_{(i-1)n/r+1}, \cdots, \mathbf{g}_{ir}}$.
  		\item Construct the matrix $\bP$ as follows:
  		\begin{align} \begin{pmatrix}
  				\bPhi_1 & \mathbf{0}^{n/r} & \cdots &  \mathbf{0}^{n/r} \\
  				\mathbf{0}^{n/r} &\bPhi_2 & \cdots & \mathbf{0}^{n/r} \\
  				\vdots & \ddots & \ddots & \vdots \\
  				\mathbf{0}^{n/r} & \cdots & \mathbf{0}^{n/r} & \bPhi_r 
			\end{pmatrix} \nonumber
  		 \end{align}
  	\end{enumerate}
\end{enumerate}
Compute $\bPhi= \sqrt{\frac{1}{r}}\bP \bPi  \bW \bD$, where $\bW$ is a normalized  Hadamard matrix and $\bPi$ is a permutation matrix on $n$ entries.
\end{minipage}
}\caption{New Construction} \label{fig:basic} 
\end{center}
\end{figure}
\end{versionC}

\begin{versionA}
\begin{figure} [t]
\begin{center}
\fbox{
\begin{minipage}[l]{6in}
\medskip
\paragraph{Construction of the distribution.} The random matrix $\bPhi= \sqrt{\frac{1}{r}}\bP \bPi  \bW \bD$, where $\bW \bD$ is a normalized and randomized Hadamard matrix and $\bPi$ is a permutation matrix on $n$ entries. The matrix $\bP$ is constructed as follows: 
\begin{enumerate}
  		\item Pick $n$ random sub-gaussian samples $\mathbf{g}=\paren{\mathbf{g}_1, \cdots, \mathbf{g}_n}$ of mean $0$ and variance $1$.
  		\item Divide $\mathbf{g}$ into blocks of  $n/r$ elements, 
  		such that for $1 \leq i \leq r$, $\bPhi_{i} = \paren{\mathbf{g}_{(i-1)n/r+1}, \cdots, \mathbf{g}_{ir}}$.
  		\item Construct the matrix $\bP$ as follows:
  		\begin{align} \begin{pmatrix}
  				\bPhi_1 & \mathbf{0}^{n/r} & \cdots &  \mathbf{0}^{n/r} \\
  				\mathbf{0}^{n/r} &\bPhi_2 & \cdots & \mathbf{0}^{n/r} \\
  				\vdots & \ddots & \ddots & \vdots \\
  				\mathbf{0}^{n/r} & \cdots & \mathbf{0}^{n/r} & \bPhi_r 
			\end{pmatrix} \nonumber
  		 \end{align}
  	\end{enumerate}
\end{minipage}
}\caption{A New Distribution Over Random Matrices in $\R^{r \times n}$} \label{fig:basic} 
\end{center}
\end{figure}
\end{versionA}
We prove~\thmref{newsparse} by proving that a series of operations on $\bx \in \bS^{n-1}$ preserves the Euclidean norm. For this, it is helpful to analyze the actions of every matrix involved in the distribution from which $\bPhi$ is sampled. Let $\by=\bW \bD \bx$  and $\by'=\bPi \by$. We first look at the vector $\bP \by'$. 

 In order to analyze the action of $\bP$ on an input $\by' $, we consider an equivalent way of looking at the action of $\bP \bPi$ on $\by =  \bW \bD \bx$. In other words, we emulate the action of $\bP$ using $\bP_1$ and $\bP_2$, each of which are much easier to analyze. 
The row-$i$ of the matrix $\bP_1$ is formed by $\bPhi_i$ for $1 \leq i \leq r$. 
For an input vector $\by' = \bPi \by$, $\bP_2$ acts as follows: for any $1 \leq j \leq r$, $\bP_2$  samples $t= n/r$ columns entries $\paren{\by'_{(j-1)n/r+1}, \cdots, \by'_{jn/r}}$, multiply every entry by $\sqrt{n/t}$, and feeds it to the $j$-th row of $\bP_1$. In other words, $\bP_2$ feeds $\bz_{(j)} = \sqrt{n/t}\paren{\by'_{(j-1)n/r+1}, \cdots, \by'_{jn/r}}$ to the $j$-th row of $\bP_1$.

$\| \bW \bD \bx \|_2 =1$ as $\bW \bD$ is a unitary. We prove that  $\bP_1$ is an isometry in~\lemref{gaussian} and $\bP_2\bPi $ is an isometry in~\lemref{P_2}. The final result~(\thmref{newgaussian}) is obtained by combining  these two lemmata.


\begin{versionC}
We need the following result proved by Ailon and Chazelle~\cite{AC09}.
\begin{theorem} \label{thm:AC09}
Let $\bW$ be a $n \times n$  Hadamard matrix and $\bD$ be a diagonal signed Bernoulli matrix with $\bD_{ii} \sim \sber(1/2).$ Then for any vector $\bx \in \bS^{n-1}$, we have 
\[ \p_D \sparen{\| \bW \bD\bx\|_\infty \geq \sqrt{\paren{\log (n/\delta)}/{n}}} \leq \delta. \]
\end{theorem}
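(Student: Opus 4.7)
The plan is to fix one coordinate of $\bW\bD\bx$, recognise it as a Rademacher sum with small $\ell_2$ radius, apply a Hoeffding-type tail bound, and then union bound over the $n$ coordinates.

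Concretely, I would start by writing, for each $i \in \{1,\dots,n\}$,
\[
(\bW\bD\bx)_i \;=\; \sum_{j=1}^n \bW_{ij}\,\bD_{jj}\,\bx_j \;=\; \sum_{j=1}^n a_j^{(i)}\,\bD_{jj},
\qquad \text{where } a_j^{(i)} := \bW_{ij}\bx_j.
\]
The key observation is that $\bW$ is the \emph{normalised} Hadamard matrix, so every entry satisfies $|\bW_{ij}| = 1/\sqrt{n}$; hence $\sum_{j} \bigl(a_j^{(i)}\bigr)^2 = \tfrac{1}{n}\sum_j \bx_j^2 = \tfrac{1}{n}$, independently of the row $i$. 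Thus each coordinate of $\bW\bD\bx$ is a Rademacher sum $\sum_j a_j^{(i)}\bD_{jj}$ of variance exactly $1/n$.

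Next, I would invoke the standard Hoeffding (equivalently, Khintchine) tail inequality for Rademacher sums: for independent signs $\bD_{jj}\sim\sber(1/2)$ and any fixed real coefficients $a_j$,
\[
\Pr\!\left[\,\Bigl|\sum_{j} a_j\,\bD_{jj}\Bigr| > t\,\right] \;\le\; 2\exp\!\left(-\frac{t^2}{2\sum_j a_j^2}\right).
\]
Plugging in $\sum_j \bigl(a_j^{(i)}\bigr)^2 = 1/n$, this becomes $2\exp(-nt^2/2)$ for each fixed $i$.

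Finally, I would take a union bound over the $n$ coordinates, obtaining
\[
\Pr_{\bD}\!\left[\,\|\bW\bD\bx\|_\infty > t\,\right] \;\le\; 2n\exp\!\left(-nt^2/2\right),
\]
and choose $t$ of order $\sqrt{\log(n/\delta)/n}$ (absorbing the factor $2$ and the extra $\log n$ from the union bound into the constant that the paper's statement suppresses in its $O$-notation-free form) so that the right-hand side is at most $\delta$. There is no real obstacle here; the only subtlety is a constant-factor slackness in the claimed threshold $\sqrt{\log(n/\delta)/n}$, which is resolved by choosing $t = C\sqrt{\log(n/\delta)/n}$ for a universal constant $C$ large enough that $2n\exp(-nt^2/2)\le\delta$. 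The argument uses nothing beyond the unit-modulus entries of $\bW$, unitarity being unnecessary, and it naturally generalises to any orthogonal matrix with $\ell_\infty$-bounded entries (a flattening matrix).
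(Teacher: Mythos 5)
Your proof is correct and is essentially the argument behind this statement: the paper does not prove it itself (it is quoted from Ailon--Chazelle), and the version the paper actually works with in its appendix, $\Pr\sparen{\|\bW_b \bz\|_\infty \geq \rho} \leq 2b\exp(-\rho^2 b/2)$, is obtained by exactly your coordinate-wise Hoeffding bound for the Rademacher sum $\sum_j \bW_{ij}\bD_{jj}\bx_j$ (with $\sum_j \bW_{ij}^2\bx_j^2 = 1/n$ for the normalized Hadamard matrix) followed by a union bound over the $n$ coordinates. Your remark about the constant-factor slackness is also accurate: with the threshold literally equal to $\sqrt{\log(n/\delta)/n}$ the union bound gives $2\sqrt{n\delta}$ rather than $\delta$, so the statement should be read with a universal constant inside the square root (equivalently in the $2n e^{-nt^2/2}$ form), which is how both the paper and the original source use it.
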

\end{versionC}

We use~\thmref{serfling} and~\thmref{AC09} to prove that $ \bP_2 \bPi$ is an isometry for $r \leq n^{1/2-\tau}$, where $\tau>0$ is  an arbitrary constant.~\thmref{AC09} gives us that $\sqrt{n/t} \| \bW\bD \bx\|_\infty \leq \sqrt{\log (n/\delta)/t}$. We use~\thmref{serfling} because $\bP_2\bPi$ picks $t$ entries of $\by$ without replacement. 
\begin{versionA} 
In the full version, we show the following:
\end{versionA}
\begin{lemma} \label{lem:P_2}
Let $\bP_2$ be a matrix 
as defined above. Let $\bPi$ be a permutation matrix. Then for any $\by \in \bS^{n-1}$ with $\| \by \|_\infty \leq \sqrt{\log (n/\delta)/n}$, we have 
\begin{versionC}
\[ \p_{\bPi} \sparen{ \left|~\left\| \bP_2 \bPi \by \right\|_2^2 - 1 \right| \geq \varepsilon  } \leq \delta. \]
\end{versionC}
\begin{versionA}
$ \p_{\bPi} \sparen{ \left|~\left\| \bP_2 \bPi \by \right\|_2^2 -1 \right| \geq \varepsilon  } \leq \delta.$
\end{versionA}
\end{lemma}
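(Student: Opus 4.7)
The plan is to rewrite $\|\bP_2\bPi\by\|_2^2$ as an appropriately rescaled sum of squared coordinates of $\by$ selected by the random permutation $\bPi$, and then invoke Serfling's inequality for sampling without replacement (\thmref{serfling}). The hypothesis $\|\by\|_\infty \le \sqrt{\log(n/\delta)/n}$ (supplied by \thmref{AC09} when this lemma is applied with $\by = \bW\bD\bx$) bounds the range of the sampled terms, while $\bPi$ being uniform is what turns the deterministic block-reading performed by $\bP_2$ into a genuine without-replacement sample.

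First I would unpack the definition of $\bP_2$. Since $\bP_2$ reads contiguous blocks of $\bPi\by$ and rescales each read coordinate by $\sqrt{n/t}$ with $t = n/r$, we can write
\[
\|\bP_2\bPi\by\|_2^2 \;=\; \frac{n}{t}\sum_{\ell=1}^{N} \by_{\pi(\ell)}^{2} \;=\; \frac{n}{t}\sum_{\ell=1}^{N} X_\ell,
\]
where $N$ is the number of coordinates that $\bP_2$ reads, $\pi$ is the uniformly random permutation, and $X_\ell := \by_{\pi(\ell)}^2$. Because $\pi$ is uniform, the indices $\pi(1),\ldots,\pi(N)$ are drawn uniformly without replacement from $[n]$, so $(X_1,\ldots,X_N)$ is a without-replacement sample from the population $\{\by_j^2\}_{j=1}^{n}$ with population mean $\mu = \|\by\|_2^2/n = 1/n$ and range contained in $[0,\log(n/\delta)/n]$ by hypothesis.

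Next I would apply \thmref{serfling} to $\tfrac{1}{N}\sum_{\ell} X_\ell$. With the finite-population-correction factor $f^{*} = 1-(N-1)/n \le 1$ and the range bound $(i_2-i_1)^2 \le (\log(n/\delta)/n)^{2}$, the theorem yields
\[
\Pr_{\pi}\!\Bigl[\,\Bigl|\tfrac{1}{N}\textstyle\sum_{\ell=1}^{N} X_\ell - \tfrac{1}{n}\Bigr| \ge \eta\Bigr] \;\le\; 2\exp\!\Bigl(-\tfrac{2\eta^{2}N}{f^{*}(\log(n/\delta)/n)^{2}}\Bigr).
\]
Choosing $\eta$ proportional to $\varepsilon t/(nN)$ and multiplying through by the rescaling factor $Nn/t$ translates this into an $\varepsilon$-deviation of $\|\bP_2\bPi\by\|_2^2$ from its target value. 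It then remains to verify that the exponent in the Serfling bound exceeds $\log(2/\delta)$ for this choice of $\eta$.

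The main obstacle, and essentially the only nontrivial calculation, is this final verification. It is exactly here that the assumption $r \le n^{1/2-\tau}$ enters: it guarantees that the effective sample size dominates the $\log^{2}(n/\delta)/n^{2}$ range factor in the denominator of Serfling's exponent by a polynomial in $n$, so the exponential can absorb the $\log(2/\delta)$ term and the failure probability is at most $\delta$. Once the bookkeeping of the $\sqrt{n/t}$ rescaling and the $\|\by\|_\infty^2 \le \log(n/\delta)/n$ range bound is handled consistently, the lemma follows immediately from this single tail estimate.
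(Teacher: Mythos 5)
Your overall strategy is the same as the paper's: use the hypothesis $\|\by\|_\infty \le \sqrt{\log(n/\delta)/n}$ to bound the range of the squared coordinates, observe that a uniform $\bPi$ turns the coordinates read by $\bP_2$ into a without-replacement sample, and apply Serfling's inequality, with $r \le n^{1/2-\tau}$ guaranteeing that the sample size $t = n/r$ is polynomially large so the exponent beats $\log(1/\delta)$. That is exactly the paper's argument.

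There is, however, a genuine gap in how you set up the quantity to be concentrated. You write $\|\bP_2\bPi\by\|_2^2 = \tfrac{n}{t}\sum_{\ell=1}^{N}\by_{\pi(\ell)}^2$ as a single sum. But $\bP_2$ reads \emph{all} $n$ coordinates, partitioned into $r$ contiguous blocks of size $t$; if $N=n$ your sum is deterministic and equals $\tfrac{n}{t}\|\by\|_2^2 = r$, so there is nothing to concentrate, and if $N=t$ you are only describing one block. The content of the lemma (as the paper's proof makes clear) is a \emph{per-block} statement: for each $j\in[r]$, the block $\bz_{(j)} = \sqrt{n/t}\,(\by'_{(j-1)t+1},\dots,\by'_{jt})$ with $\by'=\bPi\by$ satisfies $\bigl|\,\|\bz_{(j)}\|_2^2-1\bigr|\le\varepsilon$, and this must hold simultaneously for all $r$ blocks. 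Your Serfling step is correct for a single fixed block (and your exponent bookkeeping matches the paper's), but you must then require per-block failure probability $\delta/r$ rather than $\delta$ and take a union bound over the $r$ blocks; this is where the extra $\log(2r/\delta)$ factor appears in the verification that $t=n/r$ is large enough, and it is part of why the restriction $r\le n^{1/2-\tau}$ is needed. With the block decomposition and union bound added, your argument coincides with the paper's proof.
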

\begin{versionC}
{\begin{proof}
Fix $1\leq j \leq r$. Let $\by =\bW \bD \bx$.
Let $\by'=  \bPi \by$ and $\bz_{(j)} = \sqrt{n/t} \paren{\by'_{(j-1)n/r+1}, \cdots , \by'_{jn/r+1}}$. Therefore, $\|\bz_{(j)}\|^2_2 = (n/t)\sum_{i=1}^t \by_{(j-1)t+i}'^2$. From~\thmref{AC09}, we know that $\|\bz_{(j)}\|_\infty$ is bounded random variable with entries at most $\sqrt{\log(n/\delta)/t}$ with probability $1-\delta.$
Also, $\E_{\bP_2,\bPi}[\|\bz_{(j)}\|^2]=1$. Since this is sampling without replacement; therefore, we use the  Serfling bound~\cite{Serfling74}. Applying the bound gives us
\begin{align}
 \p_{\bPi} \sparen{ \left|~\left\| \bP_2 \bPi \by \right\|_2^2 - 1 \right| \geq \varepsilon  } = \p \sparen{ | \|\bz_{(j)}\|_2^2 - 1 | > \varepsilon } \leq 2\exp(-\varepsilon^2 t / \log(n/\delta)). \label{eq:P_2_1} \end{align}

In order for~\eqnref{P_2_1} to be less than $\delta/r$, we need to evaluate the value of $t$ in the following equation:
\[ \frac{\varepsilon^{-2} t}{\log (n/\delta)} \geq \log \paren{\frac{2r}{\delta}} \quad \Rightarrow \quad t \geq \varepsilon^2 \log (n/\delta) \log(2r/\delta).  \]


The inequality is trivially true for $t=n/r$ when $r = O(n^{1/2-\tau})$  for an arbitrary constant $\tau>0$. 
Now for this value of $t$, we have for a fixed $j \in [t]$
\begin{align}\p \sparen{ | \|\bz_{(j)}\|_2^2 - 1 | > \varepsilon } \leq \exp(-\varepsilon^2 t / \log(n/\delta)) \leq \delta/r. \end{align}


Using union bound over all such possible set of indices gives the lemma.
\end{proof}
\end{versionC}

In the last step, we use~\thmref{hw} to prove the following.
\begin{versionA}
A proof appears  in the full version.
\end{versionA}
\begin{lemma} \label{lem:gaussian}
	Let $\bP _1$ be a $r \times t$ random matrix with rows formed by $\bPhi, \cdots, \bPhi_r$ as above. Then for any $\bz_{(1)}, \cdots, \bz_{(r)} \in \bS^{t-1}$, we have the following:
\begin{versionC}
	\[ \p_{\mathbf{g}} \sparen{ \left| {\sum_{i=1}^r \brak{\bPhi_i, \bz_{(i)}}^2} - 1 \right| \geq \varepsilon  } \leq \delta. \]
\end{versionC}
\begin{versionA}
$ \p_{\mathbf{g}} \sparen{ \left| {\sum_{i=1}^r \brak{\bPhi_i, \bz_{(i)}}^2} - 1 \right| \geq \varepsilon  } \leq \delta. $
\end{versionA}

\end{lemma}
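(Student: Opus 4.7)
The plan is to recognize the sum as a quadratic form in a single subgaussian vector and then apply the Hanson--Wright inequality (Theorem~\ref{thm:hw}). Let $\mathbf{g}=(\mathbf{g}_1,\dots,\mathbf{g}_n)^{\mathsf T}$ denote the vector of all subgaussian entries used to form the rows of $\bP_1$, arranged so that the $i$-th block $\mathbf{g}^{(i)}\in\R^t$ equals $\bPhi_i^{\mathsf T}$. Define the block-diagonal matrix $M\in\R^{n\times n}$ whose $i$-th diagonal block is $\bz_{(i)}\bz_{(i)}^{\mathsf T}$ and whose off-diagonal blocks are zero. Because $\brak{\bPhi_i,\bz_{(i)}}^2 = (\mathbf{g}^{(i)})^{\mathsf T}\bz_{(i)}\bz_{(i)}^{\mathsf T}\mathbf{g}^{(i)}$, we get the single identity
\[
\sum_{i=1}^{r}\brak{\bPhi_i,\bz_{(i)}}^2 \;=\; \mathbf{g}^{\mathsf T} M\,\mathbf{g}.
\]

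Next I would compute the three parameters that drive the Hanson--Wright tail. Since each $\bz_{(i)}$ is a unit vector, each diagonal block $\bz_{(i)}\bz_{(i)}^{\mathsf T}$ is a rank-$1$ projector, giving $\tr(M)=r$, $\|M\|_F^2=\sum_{i=1}^r\|\bz_{(i)}\|_2^4=r$, and $\|M\|=\max_i\|\bz_{(i)}\|_2^2=1$. Plugging these into Theorem~\ref{thm:hw} with deviation parameter $\eta=\varepsilon r$ yields
\[
\p_{\mathbf{g}}\!\left[\,\bigl|\mathbf{g}^{\mathsf T} M\,\mathbf{g}-r\bigr|>\varepsilon r\,\right]\;\leq\;2\exp\!\bigl(-\min\{c_1\varepsilon^2 r,\;c_2\varepsilon r\}\bigr).
\]
For the choice $r=\Theta(\varepsilon^{-2}\log(1/\delta))$ and $\varepsilon\leq 1$, both exponents are $\Omega(\log(1/\delta))$, so the probability is at most $\delta$, matching the parameter regime of Theorem~\ref{thm:newsparse}. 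The factor of $r$ that appears in both the mean and the deviation is consistent with the $\sqrt{1/r}$ normalization sitting in front of $\bP$ in the definition of $\bPhi$; once that scaling is absorbed, the quantity of interest concentrates around $1$ as in the statement.

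The main thing to get right is the computation of $\|M\|_F$ and $\|M\|$, and this is where the block structure of $\bP_1$ is essential. Since the rows $\bPhi_1,\dots,\bPhi_r$ use disjoint coordinates of $\mathbf{g}$, the matrix $M$ has no cross-block terms: it decomposes into $r$ independent rank-$1$ blocks, making $\|M\|_F^2$ grow only linearly in $r$ and keeping $\|M\|$ dimension-free. This is precisely what allows Hanson--Wright to deliver the optimal $\Theta(\varepsilon^{-2}\log(1/\delta))$ dependence. If the rows of $\bP_1$ had shared coordinates, cross-block contributions to $\mathbf{g}^{\mathsf T} M\,\mathbf{g}$ would inflate $\|M\|_F$ or $\|M\|$ and the same probability bound would no longer close in the desired range of $r$; recognizing that the design of $\bP$ in Figure~\ref{fig:basic} was tailored to avoid exactly this is the one conceptual step, and the rest is a routine invocation of Theorem~\ref{thm:hw}.
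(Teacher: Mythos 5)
Your proof is correct and is essentially the paper's own argument: both recast $\sum_i\brak{\bPhi_i,\bz_{(i)}}^2$ as a quadratic form $\mathbf{g}^{\mathsf T}M\mathbf{g}$ for a block-diagonal matrix with rank-one blocks $\bz_{(i)}\bz_{(i)}^{\mathsf T}$ and invoke Hanson--Wright, the only cosmetic difference being that the paper puts the $1/r$ normalization inside the matrix (blocks $\bz_{(i)}\bz_{(i)}^{\mathsf T}/r$) while you put it in the deviation parameter $\eta=\varepsilon r$. If anything your bookkeeping is the cleaner of the two, since your values $\|M\|_F^2=r$ and $\|M\|=1$ are exactly right, whereas the paper's claim that its scaled matrix has operator norm $1$ (rather than $1/r$) is a slip that your version avoids.
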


\begin{versionC}
\begin{proof} 
Let $\mathbf{g} $ be the vector corresponding to the matrix $\bP_1$, i.e., $\mathbf{g}_k = (\bP_1)_{ij}$ for $k=(i-1)t +j$. Let $\bA$ be the matrix formed by diagonal block matrix each of which has the form $\bz_{(i)} \bz_{(i)}^{\mathsf T}/r$ for $1\leq i \leq r$. Therefore, $\tr(\bA)=1$, $\|\bA\|_F^2 = 1/k$. Also, since the only eigen-vectors of the constructed $\bA$ is $\begin{pmatrix} \bz_{(1)} & \cdots & \bz_{(r)} \end{pmatrix}^{\mathsf T}$, $\|\bA\|=1$. Plugging these values in the result of~\thmref{hw}, we have 
\begin{align} \p_{\mathbf{g}} \sparen{ | \mathbf{g}^{\mathsf T} \bA \mathbf{g} - 1 | > \varepsilon } \leq 2 \exp \paren{ -\min \set{ \frac{c_1 \varepsilon^2}{1/r} , {c_2 \varepsilon}} } \leq \delta \label{eq:gaussian} \end{align}
for $r=O(\varepsilon^{-2} \log (1/\delta))$. 
\end{proof}

\end{versionC}

Combining~\lemref{P_2} and~\ref{lem:gaussian} and rescaling $\varepsilon$ and $\delta$ gives the following. 
\begin{versionA}
A proof appears  in the full version.
\end{versionA}
\begin{theorem}  \label{thm:newgaussian}
For any $\varepsilon, \delta >0$ and a positive integer $n$. Let $r=O(\varepsilon^{-2} \log (1/\delta)$ and  $\cD$ be a distribution of $r \times n$ matrices over $\R^{r \times n}$ 	defined as in~\figref{basic}. Then a matrix $\bPhi \sim \cD$ can be sampled using 3n random samples such that for any $\bx \in \bS^{n-1}$, 
		\begin{align*} \p_{\bPhi \sim \cD} \sparen{ \left|~\|\bPhi \bx \|_2^2 -1 \right| > \varepsilon } < \delta.  \end{align*}
		Moreover, the run-time of matrix-vector multiplication takes $O(n \log n)$ time.
\end{theorem}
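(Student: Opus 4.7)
\textbf{Proof plan for Theorem~\ref{thm:newgaussian}.} The strategy is to decompose the action of $\bPhi = \sqrt{1/r}\,\bP \bPi \bW \bD$ on a unit vector $\bx$ into a sequence of three conditionally isometric operations and control each in turn. First I would set $\by = \bW\bD\bx$, noting that $\bW\bD$ is orthogonal so $\|\by\|_2 = 1$ deterministically. The analysis then focuses on how $\bP \bPi$ acts on this intermediate vector $\by$, by replacing $\bP$ with the emulation $\bP_1 \bP_2$ (sampling/rescaling followed by the random subgaussian inner products) introduced before Lemma~\ref{lem:P_2}. After this replacement, the net action produces the quantity $\sum_{i=1}^r \langle \bPhi_i, \bz_{(i)}\rangle^2$ whose concentration around $1$ is exactly what the two lemmas give.

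The first conditioning step uses Theorem~\ref{thm:AC09}: with probability at least $1 - \delta/3$ over $\bD$, the vector $\by = \bW\bD\bx$ has $\|\by\|_\infty \leq \sqrt{\log(3n/\delta)/n}$. I condition on this event, which is precisely the hypothesis required by Lemma~\ref{lem:P_2}. Applying Lemma~\ref{lem:P_2} (with $\varepsilon$ and $\delta$ rescaled by constant factors) then guarantees that over the random permutation $\bPi$, with probability at least $1 - \delta/3$ each of the vectors $\bz_{(1)},\dots,\bz_{(r)}$ produced by $\bP_2 \bPi$ has Euclidean norm in $[1-\varepsilon/2, 1+\varepsilon/2]$. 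I can then rescale these to unit vectors, absorbing the $(1\pm\varepsilon/2)$ factor into the final error bound.

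Next I condition on those good events for $\bD$ and $\bPi$ and invoke Lemma~\ref{lem:gaussian} over the randomness in the subgaussian entries $\mathbf g$ of $\bP_1$: with probability at least $1-\delta/3$,
\[
\Bigl|\sum_{i=1}^r \langle \bPhi_i, \bz_{(i)}\rangle^2 - 1\Bigr| \leq \varepsilon/2.
\]
Combining via a union bound and the triangle inequality on the $(1\pm\varepsilon/2)$ factors gives $|\|\bPhi\bx\|_2^2 - 1| \leq \varepsilon$ with probability at least $1-\delta$, which is the desired JLP statement. The main delicate point is bookkeeping the conditioning: the three events (small $\ell_\infty$ norm of $\by$, norm-preservation of each block of $\bP_2 \bPi \by$, subgaussian concentration) are over disjoint sources of randomness, so a direct union bound suffices, but one must verify that Lemma~\ref{lem:gaussian}'s hypothesis (unit norm blocks) is met after the rescaling in step two. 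Choosing $r = O(\varepsilon^{-2}\log(1/\delta))$ with appropriate constants makes every failure probability at most $\delta/3$.

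Finally, the randomness and runtime accounting are routine: $\bD$ uses $n$ Rademacher bits, $\bPi$ is a uniform permutation on $[n]$, and the $n$ subgaussian scalars forming the nonzero entries of $\bP$ contribute another $n$ samples, for the claimed $3n$ samples total. For the runtime, multiplication by $\bD$ and $\bPi$ is $O(n)$, multiplication by the Hadamard $\bW$ is $O(n\log n)$ by the fast Walsh--Hadamard transform, and multiplication by the block-diagonal $\bP$ touches only the $n$ nonzero entries, costing $O(n)$. The dominant $O(n\log n)$ term comes from $\bW$, giving the stated embedding time.
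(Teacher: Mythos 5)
Your proposal is correct and follows essentially the same route as the paper: condition on the $\ell_\infty$ bound from Theorem~\ref{thm:AC09}, apply Lemma~\ref{lem:P_2} to get near-unit blocks $\bz_{(1)},\dots,\bz_{(r)}$, then apply the Hanson--Wright step of Lemma~\ref{lem:gaussian} and union-bound, with the randomness and runtime accounting exactly as you describe. The only cosmetic difference is that you renormalize the $\bz_{(i)}$ to unit vectors and absorb the $(1\pm\varepsilon/2)$ factors afterward, whereas the paper keeps the non-unit blocks and reruns the Hanson--Wright bound with $\|\bA\|_F^2 \leq (1+\varepsilon)^2/r$ and $\|\bA\| \leq 1+\varepsilon$; both handle the perturbation equivalently.
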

\begin{versionC}
\begin{proof}
\lemref{P_2} gives us a set of $t$-dimensional vectors $\bz_{(1)}, \cdots, \bz_{(r)}$ such that each of them have Euclidean norm between $1-\varepsilon$ and $1+\varepsilon$. Using $\bz_{(1)}, \cdots, \bz_{(r)}$ to form the matrix $\bA$ as in the proof of~\lemref{gaussian},  we have $\| \bA \|_F^2 \leq (1+\varepsilon)^2/r$ and $\| \bA\| \leq (1+\varepsilon).$ Therefore, using~\thmref{hw}, we have
\begin{align} \p_{\mathbf{g}} \sparen{ | \mathbf{g}^{\mathsf T} \bA \mathbf{g} - 1 | > \varepsilon } \leq 2 \exp \paren{ -\min \set{ \frac{c_1 r \varepsilon^2}{(1+\varepsilon)^2} , \frac{c_2 \varepsilon}{(1+\varepsilon)}} } \leq \delta \label{eq:gaussian} \end{align}

For this to be less than $\delta$, we need $r \leq \frac{c'(1+\varepsilon)^2}{\varepsilon^2} \log \paren{\frac{2}{\delta}} \leq 4c' \varepsilon^{-2} \log(2/\delta)$. 
\end{proof}
\end{versionC}
The above theorem has a run-time of $O(n \log n)$ and uses $2n+n\log n$ random samples, but $O(n)$ random bits. We next show how to improve these parameters to achieve  the bounds mentioned in~\thmref{newsparse}.
\subsection{Improving the number of bits} \label{sec:random}
In the above construction, we use $2n+n\log n$ random samples. In the worse case, if $\mathbf{g}$ is picked i.i.d. from Gaussian distribution, we need to sample from a continuous distribution defined over the real\footnote{For the application in differential privacy, we are constrained to sample from $\cN(0,1)$; however, for all the other applications, we can use any subgaussian distribution.}. This requires $O(1)$ random bits for each sample and causes many numerical issues as discussed by Achlioptas~\cite{Achlioptas03}. 
We note that~\thmref{newgaussian} holds for any subgaussian random variables (the only place we need $\mathbf{g}$ is in~\lemref{gaussian}). That is, we can use $\sber(1/2)$ and our bound will still hold. This observation gives us the following result.
\begin{theorem}  \label{thm:newbernoulli}
Let $\varepsilon, \delta >0$, $n$ be a positive integer,  $\tau >0$ be an arbitrary constant, and $r=O(\varepsilon^{-2} \log (1/\delta))$ such that $r \leq n^{1/2-\tau}$, there exists a distribution of  $r \times n$ matrices over $\R^{r \times n}$ denoted by $\cD$ with the following property: a matrix $\bPhi \sim \cD$ can be sampled using only $2n+n\log n$ random bits such that for any $\bx \in \bS^{n-1}$, 
\[\p_{\bPhi \sim \cD} \sparen{ \left|~\|\bPhi \bx \|_2^2 -1 \right| > \varepsilon } < \delta.  \]
		Moreover, the run-time of matrix-vector multiplication takes $O(n \log n)$ time.
\end{theorem}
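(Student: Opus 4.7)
The plan is to observe that Theorem~\ref{thm:newgaussian} goes through essentially unchanged when the Gaussian samples inside $\bP$ are replaced with Rademacher samples; the only place the specific subgaussian distribution enters the proof is in Lemma~\ref{lem:gaussian}, and the Hanson-Wright inequality (Theorem~\ref{thm:hw}) that drives that lemma is stated for \emph{arbitrary} subgaussian random vectors. So one instantiates $\mathbf{g}$ with entries drawn i.i.d.\ from $\sber(1/2)$, which is mean $0$, variance $1$, and subgaussian with scale factor $1$, and the same computation yields the same concentration bound.

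Concretely, I would first re-use Lemma~\ref{lem:P_2} verbatim: its proof depends only on the randomness of $\bD$ and $\bPi$, together with Theorem~\ref{thm:AC09} and Serfling's bound, and nowhere uses the distribution of the nonzero entries of $\bP$. This still gives, for $r\le n^{1/2-\tau}$ and $\by=\bW\bD\bx$, that with probability at least $1-\delta$ each of the $r$ blocks $\bz_{(j)}=\sqrt{n/t}\,(\by'_{(j-1)t+1},\dots,\by'_{jt})$ extracted by $\bP_2\bPi$ satisfies $\bigl|\|\bz_{(j)}\|_2^2-1\bigr|\le\varepsilon$. Next, I would state and prove the Rademacher analogue of Lemma~\ref{lem:gaussian}: let $\mathbf{g}\in\{\pm1\}^n$ populate the nonzero positions of $\bP_1$, form the block-diagonal matrix $\bA$ with blocks $\bz_{(i)}\bz_{(i)}^{\mathsf T}/r$, and apply Theorem~\ref{thm:hw}. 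Exactly as in the proof of Theorem~\ref{thm:newgaussian}, $\tr(\bA)=1$, $\|\bA\|_F^2\le(1+\varepsilon)^2/r$ and $\|\bA\|\le 1+\varepsilon$, so choosing $r=O(\varepsilon^{-2}\log(1/\delta))$ makes the right-hand side of Hanson-Wright at most $\delta$. A union bound and a rescaling of $\varepsilon,\delta$ by a constant then yield the claimed $(\varepsilon,\delta)$-JLP bound.

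For the random-bit count, I would argue block by block: $\bD$ needs $n$ bits (one Rademacher per diagonal entry); the permutation $\bPi$ can be sampled with $\lceil\log_2 n!\rceil\le n\log n$ bits (Fisher-Yates on a pre-sampled bit string suffices); and the $n$ nonzero entries of $\bP$ now need only $n$ Rademacher bits. Summing gives exactly $2n+n\log n$ random bits, as required. The run-time analysis is identical to the Gaussian case: $\bD\bx$ costs $O(n)$, the Walsh-Hadamard transform $\bW(\bD\bx)$ costs $O(n\log n)$, the permutation $\bPi$ costs $O(n)$, and multiplication by $\bP$ costs $O(n)$ because $\bP$ has exactly $n$ nonzero entries. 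The dominant cost is the Hadamard transform, giving $O(n\log n)$ overall.

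I do not anticipate a real obstacle here: the result is a derandomization of Theorem~\ref{thm:newgaussian} that leverages the distribution-agnostic form of Hanson-Wright. The only point that needs a sentence of care is verifying that Rademacher random variables meet the hypotheses of Theorem~\ref{thm:hw} with constants matching those used for the Gaussian case, so that the choice $r=O(\varepsilon^{-2}\log(1/\delta))$ continues to suffice; this is immediate from the definition of subgaussianity recalled in Section~\ref{sec:prelims}.
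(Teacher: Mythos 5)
Your proposal is correct and matches the paper's own argument: the paper proves Theorem~\ref{thm:newbernoulli} by the one-line observation that the Gaussian distribution enters only through Lemma~\ref{lem:gaussian}, whose Hanson--Wright-based proof applies verbatim to any subgaussian (in particular $\sber(1/2)$) entries, with the bit count $2n+n\log n$ coming from $\bD$, $\bPi$, and the nonzero entries of $\bP$ exactly as you tally them. Your write-up is simply a more explicit version of the same route, so there is nothing to flag.
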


\subsection{Improving the Sparsity Factor and Run-time}  \label{sec:run}
The only dense matrix used in $\bPhi$ is the  Hadamard matrix and results in $O(n \log n)$ time. We need it so that we can apply \thmref{AC09} in the proof of~\lemref{P_2}. However, Dasgupta {\it et al.}~\cite{DKS10} showed that a sparse version of randomized  Hadamard matrix suffices for the purpose. More, specifically, the following theorem was shown by Dasgupta {\it et al.}~\cite{DKS10}.
\begin{theorem} \label{thm:DKS} 
Let $a=16 \varepsilon^{-1} \log(1/\delta) \log(r/\delta), b=6a \log(3a/\delta).$ Let $\mathbf{G} \in \R^{n\times n}$ be a random block-diagonal matrix with $n/b$ blocks of matrices of the form $ \bW_{b} \bB$ matrix, where $\bB$ is a $b \times b$ diagonal matrix with non-zero entries picked i.i.d. from $\sber(1/2)$. Then we have
$ \p_{\mathbf{G}} \sparen{ \|\mathbf{G} \bx\|_\infty \geq a^{-1/2} } \leq \delta. $
\end{theorem}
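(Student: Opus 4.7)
My plan is to reduce each coordinate of $\mathbf{G}\bx$ to a Rademacher sum supported on a single block, then carry out a per-coordinate Hoeffding bound followed by a union bound, and finally bound the resulting sum by a convexity argument. Fix $\bx \in \bS^{n-1}$ and partition it as $\bx = (\bx^{(1)},\ldots,\bx^{(n/b)})$ with $\bx^{(j)} \in \R^b$; write $p_j := \|\bx^{(j)}\|_2^2$ so that $\sum_j p_j = 1$. Because $\mathbf{G}$ is block-diagonal, every row $i$ of $\mathbf{G}$ sits in a unique block $j = j(i)$, and $(\mathbf{G}\bx)_i = \sum_{k=1}^b (\bW_b)_{ik}\, \bB^{(j)}_{kk}\, \bx^{(j)}_k$. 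Conditioning on the deterministic Hadamard matrix, this is a Rademacher sum with coefficients $c_k = (\bW_b)_{ik}\bx^{(j)}_k$ satisfying $|c_k|\leq |\bx^{(j)}_k|/\sqrt b$ and $\sum_k c_k^2 = p_j/b$.

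Applying Hoeffding's inequality coordinate-wise and then a union bound over the $n = (n/b)\cdot b$ rows yields
\[
\p\sparen{\|\mathbf{G}\bx\|_\infty > t} \;\leq\; 2b \sum_{j=1}^{n/b} \exp\paren{-\frac{t^2 b}{2 p_j}}.
\]
Setting $t = a^{-1/2}$ and $c := b/(2a) = 3\log(3a/\delta)$, it suffices to prove $\sum_j \exp(-c/p_j) \leq \delta/(2b)$.

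The main step is bounding this sum. Consider $g(p) := \exp(-c/p)$ on $[0,1]$, with $g(0)=0$. A direct computation gives $g''(p) = (c/p^4)(c-2p)e^{-c/p}$, which is nonnegative on $[0,1]$ whenever $c \geq 2$; this holds for our choice of $c = 3\log(3a/\delta)$ as soon as $3a/\delta \geq e^{2/3}$, which is automatic for the regime of interest. Hence $g$ is convex on $[0,1]$, and writing $p_j = p_j\cdot 1 + (1-p_j)\cdot 0$ gives $g(p_j) \leq p_j\, g(1) = p_j e^{-c}$. Summing over $j$ and using $\sum_j p_j = 1$ yields
\[
\sum_j \exp(-c/p_j) \;\leq\; e^{-c} \;=\; \paren{\frac{\delta}{3a}}^{3}.
\]
Plugging back, the overall failure probability is at most $2b(\delta/(3a))^3 = \tfrac{12\log(3a/\delta)}{27\, a^2}\,\delta^3 \leq \delta$ for the stated parameters.

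The main obstacle is the convexity argument, since $g(p)=\exp(-c/p)$ is $S$-shaped in general and only becomes convex on $[0,1]$ under the condition $c \geq 2$; this must be verified carefully from the specific choice of $b$ relative to $a$. A fallback if this constraint failed would be to split the blocks into a ``heavy'' set $\{j : p_j \geq 1/a\}$ (of size at most $a$, to which \thmref{AC09} can be applied per-block with a tight union bound) and a ``light'' set (where $p_j < 1/a$ already gives $\exp(-c/p_j) \leq \exp(-b/2)$, which is negligible). The convexity route is cleaner and gives the exact form of $b$ quoted in the statement without any case analysis.
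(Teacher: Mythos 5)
The paper does not prove this statement at all: it is imported verbatim from Dasgupta, Kumar, and Sarlos~\cite{DKS10} ("the following theorem was shown by Dasgupta \emph{et al.}"), so there is no in-paper proof to compare against. Judged on its own, your argument is correct and self-contained. The reduction of each coordinate to a Rademacher sum within a single block is right (the normalized Hadamard entries have magnitude $b^{-1/2}$, giving $\sum_k c_k^2 = p_j/b$), Hoeffding plus a union bound over all $n$ rows is valid even though rows within a block share the same signs $\bB^{(j)}$ (a union bound needs no independence), and the convexity step is the real content: $g(p)=e^{-c/p}$ extended by $g(0)=0$ satisfies $g''(p)=(c/p^4)(c-2p)e^{-c/p}\ge 0$ on $[0,1]$ once $c=3\log(3a/\delta)\ge 2$, which holds for any meaningful parameter regime, so the maximum of $\sum_j g(p_j)$ over the simplex is attained at a vertex and equals $e^{-c}=(\delta/3a)^3$. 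The final bookkeeping $2b(\delta/3a)^3\le\delta$ checks out for $a\ge 1$ (and for $a<1$ the statement is vacuous since $\|\mathbf{G}\bx\|_\infty\le\|\bx\|_2=1$). Your fallback (splitting blocks into heavy ones with $p_j\ge 1/a$, of which there are at most $a$, and light ones where $e^{-c/p_j}$ is negligible) is closer in spirit to the case analysis in the original source; the convexity route you chose is cleaner and explains exactly why $b$ must scale as $a\log(a/\delta)$. One cosmetic point: the theorem as stated in the paper has $\log(r/\delta)$ to the first power in $a$, whereas the construction it is invoked for uses $\log^2(r/\delta)$; this discrepancy is in the paper's statement, not in your proof, and your argument is insensitive to it since it only uses the relation $b=6a\log(3a/\delta)$.
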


We can apply~\thmref{DKS} instead of~\thmref{AC09} in our proof of~\thmref{newbernoulli} as long as $1/a \leq \log (n/\delta)/n$. This in turn implies that we can use this lemma in our proof as long as 
$ \varepsilon \leq \paren{ \log(1/\delta) \log (n/\delta) \log^2(r/\delta)}/n. $
When this condition is fulfilled, $\mathbf{G}$ takes $n/b (b \log b)=n \log b$ time to compute matrix-vector multiplication. Since $\bP$ and $\bPi$ takes linear time, using $\mathbf{G}$ instead of randomized  Hadamard matrices and noting that $n/r$ is still permissible, we get~\thmref{newsparse}. 
If $\varepsilon n \geq \log(1/\delta) \log (n/\delta) \log^2(r/\delta) $, we have to argue further. A crucial point to note is that~\thmref{DKS} is true for any $a>1$. Using this observation, we show that the run time is $O(n \log r)$ in this case.  For the sake of completion, we present more details  in the full version\footnote{Alternatively, we can also use the idea of Ailon and Liberty~\cite[Sec 5.3]{AL09} and arrive at the same conclusion.}. 

\begin{versionC}
For a sparse vector $\bx$, we can get even better run-time.   Let $\mathsf{s}_\bx$ be the number of non-zero entries in a vector $\bx$. Then the running time of the block randomized Hadamard transform of~\thmref{DKS} is $O(\mathsf{s}_\bx b \log b + \mathsf{s}_\bx b). $ Plugging in the value of $b$, we get the final running time to be

 \[ O \paren{ \min \set{ \frac{\mathsf{s}_\bx}{\varepsilon} \log \paren{\frac{1}{\delta}} \log^2 \paren{\frac{k}{\delta}} \log \paren{\frac{1}{\delta \varepsilon}}, n } \log \paren{\frac{1}{\delta \varepsilon}} }. \]
\end{versionC}

\section{Applications in Differentially Private Algorithms} \label{sec:applications} \label{sec:DP}
In this section, we discuss the applications of our construction in differential privacy. Since $\bW$, $\bD$, and $\bPi$ are isometry, it seems that we can use the proof of~\cite{BBDS12,Upadhyay13} to prove that $\bP \bA^{\mathsf T}$ preserves differential privacy when $\bP$ is picked using Gaussian variables $\cN(0,1)$ and $\bA$ is an $m \times n$ private matrix. Unfortunately, the following attack shows this is not true.
\begin{versionC}
Let $\bP$ be as in~\figref{basic} and two neighbouring matrices $\bA$ and $\widetilde{\bA}$ be as follows:
\[ \bA= 	\begin{pmatrix} w  & \mathbf{0}^{n/r-1}  &
  				\mathbf{0}^{n/r}  &
  				\cdots  &
  				\mathbf{0}^{n/r} 
			\end{pmatrix} \quad \text{and} \quad	
\widetilde{\bA}= 	\begin{pmatrix} w  & \mathbf{0}^{n/r-1}  &
  				\mathbf{1}^{n/r}/\sqrt{n/r} &
  				\cdots  &
  				\mathbf{0}^{n/r} 
			\end{pmatrix},				
 \]  
 where $\mathbf{1}^{n/r}$ is all one row vector of dimension $n/r$. It is easy to see that $\bP \bA$ has zero entries at positions $n/r+1$ to $2n/r$, while $\bP \widetilde{\bA}$  has non-zero entries. Thereby, it breaches the privacy.
\end{versionC} 
\begin{versionA}
Let the  neighboring matrices be 
$\bA= 	\begin{pmatrix} w &\mathbf{0}^{n/r-1}  &
  				\mathbf{0}^{n/r}  &
  				\cdots  &
  				\mathbf{0}^{n/r} 
			\end{pmatrix}~\text{and}~	
\widetilde{\bA}= 	\begin{pmatrix} w & \mathbf{0}^{n/r-1}  &
  				\mathbf{1}^{n/r}/\sqrt{n/r} &
  				\cdots  &
  				\mathbf{0}^{n/r} 
			\end{pmatrix}	$, 
			 where $\mathbf{1}^{n/r}$ is all one row vector of dimension $n/r$. It is easy to see that $\bP \bA$ has zero entries  while $\bP \widetilde{\bA}$  has non-zero entries at positions $n/r+1$ to $2n/r$. This violates differential privacy.
\end{versionA}
This attack shows that we have to analyze the action of $\bPhi$ on the private matrix $\bA$. 

We assume that $m \leq n$. Let $\by=\bPi  \bW \bD\bx$ for a vector $\bx$.  $\bPhi$ is a linear map; therefore, without loss of generality, we can assume that $\bx$ is a unit vector. Let $\by^{(1)}, \cdots, \by^{(r)}$ be disjoint blocks of the vector $\by$ such that each blocks have same number of entries. The first observation is that even if we have single non-zero entries in $\by^{(i)}$ for all $1 \leq i \leq r$, then we can extend the proof of~\cite{BBDS12, Upadhyay14} to prove differential privacy. In~\lemref{zero}, we show that with high probability, $\by^{(i)}$ is not all zero vector for $1 \leq i \leq r$. Since Euclidean norm is a metric, we prove this by proving that the Euclidean norm of each of the vectors $\by^{(i)}$ is non-zero. Formally,
\begin{lemma} \label{lem:zero}
Let $\bx \in \bS^{n-1}$ be an arbitrary unit vector and $\by=\bPi  \bW \bD\bx$. Let $\by^{(1)}, \cdots, \by^{(r)}$ be the $r$ disjoint blocks of the vector $\by$ such that $\by^{(i)} =(\by_{(i-1)t+1}, \cdots, \by_{it})$ for $t=n/r$. 
 Then for an arbitrary $0< \theta <1$  and for all $i \in [r]$, 
\begin{versionA}
$ \p_{\bPi,\bD} \sparen{ \sqrt{n/t} \left\| \by^{(i)}  \right\|_2  \leq \theta  } \leq 2^{-\Omega((1 - \theta)^2 n^{2/3})}. $
\end{versionA}
\begin{versionC}
\[ \p_{\bPi,\bD} \sparen{  \left\| \by^{(i)}  \right\|_2  \leq \theta  } \leq 2^{-\Omega((1 - \theta)^2 n^{2/3})}. \]
\end{versionC}
\end{lemma}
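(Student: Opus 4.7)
The plan is to separate the two sources of randomness and reason about them sequentially: first, use the random signs in $\bD$ to flatten $\bW\bD\bx$ so that it has small $\ell_\infty$ norm, and then, conditioning on this event, use the uniform random permutation $\bPi$ to argue that each block $\by^{(i)}$ captures approximately its ``fair share'' of the squared mass $\|\bW\bD\bx\|_2^2 = 1$, so that $\|\by^{(i)}\|_2^2 \gtrsim t/n$.

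\smallskip
\textbf{Step 1 (flattening).} Apply \thmref{AC09} with a failure parameter $\delta_1$ (to be calibrated) to conclude that with probability at least $1-\delta_1$ over $\bD$, $\|\bW\bD\bx\|_\infty \leq K$, where $K = \sqrt{\log(n/\delta_1)/n}$. Since $\bPi$ only permutes coordinates, the same $\ell_\infty$ bound holds for $\by$. Call this event $\cE$; on its complement the claimed bound on the failure probability holds automatically as long as $\delta_1 \leq 2^{-\Omega((1-\theta)^2 n^{2/3})}$.

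\smallskip
\textbf{Step 2 (permutation concentration).} Fix $i \in [r]$ and condition on $\bW\bD\bx$ on the event $\cE$. Because $\bPi$ is uniform, the positions of $\by^{(i)}$ form a uniform size-$t$ sample without replacement from $[n]$. Writing $a_k = (\bW\bD\bx)_k^2$, we have $\|\by^{(i)}\|_2^2 = \sum_{k \in S_i} a_k$ with $\sum_{k=1}^n a_k = 1$ and $a_k \in [0, K^2]$. Hence $\E_{\bPi}[(n/t)\|\by^{(i)}\|_2^2] = 1$, and the event $\sqrt{n/t}\,\|\by^{(i)}\|_2 \leq \theta$ is precisely the lower-tail event $\tfrac{1}{t}\sum_{k \in S_i} a_k \leq \theta^2/n$, a deviation of $(1-\theta^2)/n$ from the sample mean. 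Applying Serfling's inequality (\thmref{serfling}) yields
\[
\p_{\bPi}\!\sparen{\sqrt{n/t}\,\|\by^{(i)}\|_2 \leq \theta \;\big|\; \cE} \;\leq\; 2\exp\!\paren{-\Omega\paren{\frac{(1-\theta^2)^2 \, t}{\log^2(n/\delta_1)}}}.
\]

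\smallskip
\textbf{Main obstacle and calibration.} The crux is the tension in the choice of $\delta_1$: a smaller $\delta_1$ weakens the $\ell_\infty$ bound (larger $K$) and so weakens the Serfling estimate in Step~2, while a larger $\delta_1$ dominates the overall failure probability. Under the hypothesis $r \leq n^{1/2-\tau}$ we have $t = n/r \geq n^{1/2+\tau}$; choosing $\tau$ large enough that $t \gtrsim n^{2/3}$ and then setting $\delta_1 = 2^{-c(1-\theta)^2 n^{2/3}}$ for a small constant $c$ balances the two contributions, so that both match the target $2^{-\Omega((1-\theta)^2 n^{2/3})}$, with $\mathrm{polylog}(n)$ factors absorbed into the $\Omega(\cdot)$. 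A final union bound over the failure in $\bD$ and in $\bPi$ completes the proof. (Alternatively, a Bernstein-style refinement using $\mathrm{Var}(a_k) \leq K^2/n$ shaves one logarithmic factor in Step~2 and may be needed to match the exponent cleanly.)
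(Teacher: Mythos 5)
Your strategy is the same as the paper's: flatten $\bW\bD\bx$ in $\ell_\infty$ via \thmref{AC09}, then treat each block of $\bPi\by$ as a size-$t$ sample without replacement, apply Serfling (\thmref{serfling}), and finish with a union bound over the two failure events and over the $r$ blocks. Your Steps 1 and 2 are set up correctly, and your bookkeeping in Step 2 (each summand has range $K^2$, hence $K^4=\log^2(n/\delta_1)/n^2$ in Serfling's denominator) is in fact more careful than the paper's, which records the exponent as $\zeta^2 n/(\log n + n^{1/3})$ with only one power of the range and with $n$ in place of $t$.

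The gap is in the final calibration, and it is a genuine one rather than a matter of polylogarithmic slack. With $\delta_1 = 2^{-c(1-\theta)^2 n^{2/3}}$ you get $\log(n/\delta_1)=\Theta((1-\theta)^2 n^{2/3})$, so your Step 2 exponent is $\Theta\paren{(1-\theta^2)^2\, t/\paren{(1-\theta)^4 n^{4/3}}} = O(n^{-1/3})$ even at $t=n$: the Serfling bound becomes vacuous. Moreover, no choice of $\delta_1$ resolves the tension you correctly identified: Step 1 forces $\log(n/\delta_1)=\Omega((1-\theta)^2 n^{2/3})$, while making the Step 2 exponent $\Omega((1-\theta)^2 n^{2/3})$ forces $\log^2(n/\delta_1) = O\paren{t/((1-\theta)^2 n^{2/3})} = O(n^{1/3}/(1-\theta)^2)$, i.e.\ $\log(n/\delta_1)=O(n^{1/6}/(1-\theta))$ --- a contradiction; the Bernstein-style refinement you mention (one power of $\log(n/\delta_1)$ in place of two) only relaxes this to $\log(n/\delta_1)=O(n^{1/3})$, still incompatible. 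So the flatten-then-sample route, executed honestly, does not deliver the stated $2^{-\Omega((1-\theta)^2 n^{2/3})}$; the paper's own proof obscures this by fixing the flattening threshold at $\sqrt{(\log n + n^{1/3})/n}$ (whose failure probability is only $2^{-\Theta(n^{1/3})}$, already too large for the claimed bound) and by the loose form of its Serfling exponent. Your instinct that something sharper than Hoeffding--Serfling is required is right, but that sharper ingredient is load-bearing and must actually be supplied --- as written, your Step 3 fails.
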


\begin{versionC}
{\begin{proof}
Let us denote by $t=n/r$. From~\thmref{AC09}, we know that $\bW \bD \bx$ is bounded random variable with entries at most $\sqrt{(\log(n)+n^{1/3})/n}$ with probability $1-2^{-c_3n/3}.$   Let $\by= \bPi \bW \bD \bx$, and  $\by^{(j)} = \sqrt{n/t}(\by_{(j-1)t+1}, \cdots, \by_{jt})$

Fix a $j$. First note that $\E[\|\by^{(j)}\|^2]=1$. Since the action of $\bPi$ results in sampling without replacement; therefore, Serfling's bound~\cite{Serfling74} gives
\begin{align*}
\p \sparen{ \left\| \by^{(j)} \right\|_2^2  \leq 1- \zeta } \leq \exp \paren{-\frac{\zeta^2 n }{ \log(n)+n^{1/3}}} \leq \exp(- c_3' \zeta^2 n^{2/3})
 \end{align*}
 for $\zeta <1$.
Setting $\theta = 1-\zeta$ and using union bound over all $j$ gives the lemma. 
\end{proof}
}
\end{versionC}
\begin{versionA}
We prove this lemma in the full version.
\end{versionA}
Therefore, the above attack cannot be carried out in this case. However, this does not show that publishing $\bPhi \bA^{\mathsf T}$ preserves differential privacy. To prove this, we need to argue further. 
We use~\lemref{zero} to prove the following result.
\begin{lemma} \label{lem:singular}
	Let $t$ be as in~\lemref{zero}. Let $\bPi_{1..t}$ denotes a permutation matrix on $\set{1,\cdots, n}$ restricted to any consecutive $t$ rows. Let $\bW$ be the $n \times n$ Walsh-Hadamard matrix and $\bD$ be a diagonal Rademacher matrix. Let $\mathbf{B}=\sqrt{N/t}\bPhi \bW \bD$. Then $$ (1-\varepsilon)\I \preceq \mathbf{B}^{\mathsf T} \mathbf{B} \preceq (1+\varepsilon) \I.$$ 
\end{lemma}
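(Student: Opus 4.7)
The claim is a two-sided spectral bound, which is equivalent to the uniform statement $\sup_{\bx} \bigl|\|\mathbf{B}\bx\|_2^2 - 1\bigr| \le \varepsilon$ over unit vectors $\bx$ in the relevant domain. The plan is to (i) upgrade Lemma~\ref{lem:zero} to a two-sided pointwise concentration estimate, and (ii) lift this pointwise bound to a uniform one via a standard $\varepsilon$-net argument.

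For step (i), fix a unit vector $\bx$ and set $\by = \bPi \bW\bD\bx$, so that $\mathbf{B}\bx$ is $\sqrt{n/t}$ times the selected block $\by^{(i)}$. Theorem~\ref{thm:AC09} guarantees that $\|\bW\bD\bx\|_\infty \lesssim \sqrt{(\log n + n^{1/3})/n}$ except with probability $2^{-\Omega(n^{1/3})}$, so after conditioning, the entries of $\by^{(i)}$ are obtained by sampling $t$ coordinates without replacement from a bounded list. Serfling's inequality (Theorem~\ref{thm:serfling}) is two-sided, so the same argument producing Lemma~\ref{lem:zero} also yields the matching upper tail
\begin{equation*}
\p\!\left[\sqrt{n/t}\,\|\by^{(i)}\|_2 \ge 1+\zeta\right] \;\le\; 2^{-\Omega(\zeta^{2}n^{2/3})}.
\end{equation*}
Combining both tails with $\zeta = \varepsilon/4$ and the identity $|a^2-1|\le 3|a-1|$ for $a$ close to $1$ gives
\begin{equation*}
\p\!\left[\bigl|\|\mathbf{B}\bx\|_2^2 - 1\bigr| > \varepsilon/2\right] \;\le\; 2^{-\Omega(\varepsilon^{2}n^{2/3})}
\end{equation*}
for any fixed unit $\bx$.

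For step (ii), I would pick a $\gamma$-net $\cN_\gamma$ of the unit sphere in the subspace on which the bound must be established (in the differential privacy application, the at-most-$m$-dimensional column span of the private matrix $\bA^{\mathsf T}$), with $|\cN_\gamma| \le (3/\gamma)^{m}$. Choosing $\gamma=\Theta(\varepsilon)$ and union-bounding the per-vector estimate from (i) over $\cN_\gamma$, then invoking the standard net-to-sphere approximation lemma, extends the estimate from $\cN_\gamma$ to every unit vector in the subspace with the slack inflating from $\varepsilon/2$ to $\varepsilon$. Translated back, all singular values of $\mathbf{B}$ restricted to this subspace lie in $[\sqrt{1-\varepsilon},\sqrt{1+\varepsilon}]$, which is exactly the claimed inequality.

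The main obstacle is calibrating the tail rate of Lemma~\ref{lem:zero}, which is only $2^{-\Omega(n^{2/3})}$, against the $2^{\Theta(m)}$ cardinality of the net. The union bound succeeds only when $m \ll \varepsilon^{2}n^{2/3}/\log(1/\varepsilon)$, so the lemma cannot literally hold on all of $\R^{n}$; rather, it is used within the privacy proof on the low-dimensional column span of $\bA^{\mathsf T}$, and matching the parameter regime to the dimension $m$ of that span is the key subtlety. A secondary technical point is carefully chaining the two probabilistic events (the $\ell_\infty$ bound from Theorem~\ref{thm:AC09} and the Serfling tail) so that the conditioning does not blow up the final failure probability.
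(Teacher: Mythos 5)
Your proposal is correct in substance, but it takes a genuinely different --- and considerably more careful --- route than the paper. The paper's own proof of \lemref{singular} is three lines: it rewrites $\| \mathbf{B}^{\mathsf T}\mathbf{B}\|_2 = \max_{\|\bx\|_2=1}\|\mathbf{B}\bx\|_2^2$ and then simply asserts that, by \lemref{zero}, with probability $1-\delta$ the bound $1-\varepsilon \le \|\mathbf{B}\bx\|_2^2 \le 1+\varepsilon$ holds \emph{for all} $\bx$ simultaneously, closing with a remark about eigenvectors of $\mathbf{B}^{\mathsf T}\mathbf{B}$. It never upgrades the one-sided tail of \lemref{zero} to a two-sided one, and it supplies no union bound or covering argument to pass from a fixed $\bx$ to all $\bx$ --- exactly the two steps you make explicit. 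Your step (i) is right (and can be streamlined: Serfling is applied directly to $\|\by^{(i)}\|_2^2-1$, which is an average of boundedly many sampled-without-replacement terms, so the $|a^2-1|\le 3|a-1|$ detour is unnecessary), and your step (ii), the $\gamma$-net with the union bound, is what an actual proof requires. Your closing caveat is also the correct diagnosis of why the lemma cannot hold as literally stated: $\mathbf{B}$ is $t\times n$ with $t=n/r<n$, so $\mathbf{B}^{\mathsf T}\mathbf{B}$ has rank at most $t$ and the lower bound $(1-\varepsilon)\I\preceq\mathbf{B}^{\mathsf T}\mathbf{B}$ fails on its kernel; and the $2^{-\Omega(n^{2/3})}$ per-vector tail cannot survive a union bound over a net of the full sphere in $\R^n$. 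The statement is only meaningful on a subspace of dimension $m\ll \varepsilon^{2}n^{2/3}/\log(1/\varepsilon)$, which is how \corref{singular} and \thmref{DP} actually use it. In short, your proof fills genuine gaps in the paper's argument rather than rederiving it; what the paper's terse version buys is brevity at the cost of correctness of the statement as written.
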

In other words, what~\lemref{singular} says is that the singular values of $\mathbf{B}$ is between $(1\pm \varepsilon)^{1/2}$. This is because all the singular values of $\I$ are $1$.

\begin{proof}
We have the following:
\begin{align*}
	\| \mathbf{B}^{\mathsf T} \mathbf{B}  \|_2 &=  \max_{\bx \in \R^n} \frac{\bx^{\mathsf T} \mathbf{B}^{\mathsf T} \mathbf{B} \bx}{\brak{\bx,\bx}}  \\
							&= \max_{\bx \in \R^n} \frac{\brak{\mathbf{B} \bx, \mathbf{B} \bx}}{\brak{\bx,\bx}} \\
							& = \max_{\| \bx\|_2 =1} \| \mathbf{B} \bx\|_2^2 . 
\end{align*}
From~\lemref{zero}, with probability at least $1 -\delta$, for all $\bx$,
\begin{align} 1-\varepsilon \leq \max_{\| \bx\|_2 =1} \| \mathbf{B} \bx\|_2^2 \leq 1 + \varepsilon. \label{eq:singular} \end{align}
The lemma follows from the observation that~\eqnref{singular} holds for every $\bx \in \R^n$. In particular, if $\set{\mathbf{v}_1, \cdots, \mathbf{v}_k}$ are the first $k$ eigen-vectors of $\mathbf{B}^{\mathsf T} \mathbf{B}$, then it holds for the subspace of $\R^n$ orthogonal to the subspace spanned by the vectors $\set{\mathbf{v}_1, \cdots, \mathbf{v}_k}$.
\end{proof}
Once we have this lemma, the following is a simple corollary.
\begin{corollary} \label{cor:singular}
Let $\bA$ be an $n \times d$ matrix such that all the singular values of $\bA$ are at least $\sigma_\mathsf{min}$. Let the matrix $\mathbf{B}$ be as formed in the statement of~\lemref{singular}. Then all the singular values of $\mathbf{B} \bA$ are at least $\sqrt{1-\varepsilon} \sigma_\mathsf{min}$ with probability $1-\delta$.
\end{corollary}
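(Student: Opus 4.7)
The plan is to derive the corollary directly from \lemref{singular} by a one-line Rayleigh-quotient argument. Recall that \lemref{singular} gives the two-sided operator inequality $(1-\varepsilon)\I \preceq \mathbf{B}^{\mathsf T} \mathbf{B} \preceq (1+\varepsilon)\I$ with probability at least $1-\delta$, so we can condition on this event throughout.

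First I would characterize the smallest singular value of $\mathbf{B}\bA$ variationally: $\sigma_{\min}(\mathbf{B}\bA)^2 = \min_{\|\bx\|_2=1} \bx^{\mathsf T} \bA^{\mathsf T} \mathbf{B}^{\mathsf T} \mathbf{B} \bA \bx$. Next I would apply \lemref{singular} to the vector $\bA \bx$ (rather than directly to $\bx$), giving
\[
\bx^{\mathsf T} \bA^{\mathsf T} \mathbf{B}^{\mathsf T} \mathbf{B} \bA \bx \;\geq\; (1-\varepsilon)\, \bx^{\mathsf T} \bA^{\mathsf T} \bA \bx \;=\; (1-\varepsilon)\,\|\bA \bx\|_2^2.
\]
Then, using the hypothesis that every singular value of $\bA$ is at least $\sigma_{\min}$, we have $\|\bA \bx\|_2^2 \geq \sigma_{\min}^2 \|\bx\|_2^2 = \sigma_{\min}^2$ for every unit vector $\bx$. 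Chaining the two inequalities and taking the minimum over $\bx$ yields $\sigma_{\min}(\mathbf{B}\bA) \geq \sqrt{1-\varepsilon}\, \sigma_{\min}$, which is the claim.

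There is no real obstacle here; the only thing to be careful about is the quantifier order. The bound in \lemref{singular} holds simultaneously for all $\bx \in \R^n$ with probability $1-\delta$ (as emphasized at the end of its proof), so we may freely substitute the vector $\bA \bx$ inside the minimization, which is what makes the one-shot argument go through rather than having to take a union bound over a net of directions in the column space of $\bA$.
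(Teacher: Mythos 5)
Your proof is correct and is exactly the intended derivation: the paper gives no explicit proof (it merely remarks that the statement is ``a simple corollary'' of \lemref{singular}), and the Rayleigh-quotient chaining $\bx^{\mathsf T}\bA^{\mathsf T}\mathbf{B}^{\mathsf T}\mathbf{B}\bA\bx \geq (1-\varepsilon)\|\bA\bx\|_2^2 \geq (1-\varepsilon)\sigma_{\mathsf{min}}^2$ is the canonical way to fill that gap. Your observation about quantifier order is also the right one, since the lemma is a PSD ordering that holds for all vectors simultaneously on the good event.
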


Combining~\corref{singular} with the result of Upadhyay~\cite{Upadhyay14} yields the following result. 

\begin{theorem} \label{thm:DP}
	If the singular values of an $m \times n$ input matrix $\bA$ are at least $\frac{\ln(4/\beta)\sqrt{16 r \log (2/\beta)}}{\alpha(1+\varepsilon)}$, then $\bPhi \bA^{\mathsf T}$, where $\bPhi$ is as in~\figref{basic} is $(\alpha, \beta+\delta)$-differentially private. It uses $2n+n\log n$ samples and takes $O(mn \log b)$ time, where $ b= \min \{ r,c_1 \varepsilon^{-1} \log(1/\delta) \log(r/\delta) \log((16 \varepsilon^{-1} \log(1/\delta) \log(r/\delta))/\delta)\}$. 
\end{theorem}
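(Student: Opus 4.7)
The plan is to factor $\bPhi\bA^{\mathsf T} = r^{-1/2}\,\bP(\bPi\bW\bD)\bA^{\mathsf T}$ and reduce to the Gaussian-mechanism analysis of Blocki et al.~\cite{BBDS12} and Upadhyay~\cite{Upadhyay14}. Write $\mathbf{B}=\bPi\bW\bD$ and $\bC=\mathbf{B}\bA^{\mathsf T}$. The first step is to invoke Corollary~\ref{cor:singular} to get $\sigma_{\min}(\bC)\geq \sqrt{1-\varepsilon}\,\sigma_{\min}(\bA)$, which, together with the hypothesis on $\sigma_{\min}(\bA)$ and a suitable calibration of the $(1+\varepsilon)$ denominator against the $\sqrt{1-\varepsilon}$ loss, guarantees $\sigma_{\min}(\bC) \geq \ln(4/\beta)\sqrt{16r\log(2/\beta)}/\alpha$ except on an event of probability at most $\delta/2$.

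Second, I would analyze the publication of $\bP\bC$ row by row. Row $i$ has the form $\mathbf{g}_i^{\mathsf T}\bC_{(i)}$, where $\bC_{(i)}$ is the $t\times m$ block of $\bC$ consisting of the rows of $\bC$ on which the $i$-th row of $\bP$ is supported, and $\mathbf{g}_i\in\R^t$ is Gaussian. For a pair of neighbours $\bA,\widetilde{\bA}$ differing by a rank-one update $\mathbf{u}\mathbf{v}^{\mathsf T}$, the $i$-th row difference is $\mathbf{g}_i^{\mathsf T}(\mathbf{B}\mathbf{v})_{(i)}\mathbf{u}^{\mathsf T}$, and Lemma~\ref{lem:zero} guarantees that $\mathbf{B}\mathbf{v}$ has non-trivial energy in every one of the $r$ blocks. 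Consequently I can apply the Blocki-et-al.~\cite{BBDS12} analysis row-wise: each row preserves $(\alpha_0,\beta_0)$-differential privacy for $\alpha_0 = \alpha / \sqrt{2r\ln(2/\beta)}$ and $\beta_0 = \beta/(2r)$. Advanced composition via Theorem~\ref{thm:DRV10} over the $r$ rows then yields $(\alpha,\beta)$-differential privacy, and folding in the failure events from Corollary~\ref{cor:singular} and Lemma~\ref{lem:zero} gives the claimed $(\alpha,\beta+\delta)$-guarantee. The sample complexity and the $O(mn\log b)$ runtime are inherited directly from Theorem~\ref{thm:newsparse} applied to each of the $m$ rows of $\bA^{\mathsf T}$.

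The principal obstacle I anticipate is converting the global singular-value lower bound on $\bC$ into a per-block bound on each $\bC_{(i)}$, since $\bP$ is block-sparse rather than a full Gaussian matrix and the Blocki-et-al.\ argument needs a large-covariance hypothesis for every published row. The explicit attack at the start of \secref{DP} shows this is exactly where a direct transfer of the earlier proofs fails: without per-block control, the block structure of $\bP$ leaks neighbour information at the zero blocks of the difference. Lemma~\ref{lem:zero} defuses that attack by forcing every block of $\mathbf{B}\mathbf{v}$ to be non-negligible, and the near-orthonormal-rows property of each consecutive $t$-row sub-block of $\mathbf{B}$ (implicit in Lemma~\ref{lem:singular}) lets one pass from $\sigma_{\min}(\bC) \geq \sigma$ to a uniform lower bound on $\sigma_{\min}(\bC_{(i)})$ that is strong enough for the row-wise Gaussian-mechanism proof. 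This is the step where the argument genuinely exploits the structure of the new distribution rather than treating $\bP$ as an opaque Gaussian matrix, and it is what allows the reduction to Upadhyay~\cite{Upadhyay14} to go through.
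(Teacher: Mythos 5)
Your proposal follows essentially the same route as the paper: factor out $\bP$, use Lemma~\ref{lem:zero} to defuse the zero-block attack, use Lemma~\ref{lem:singular}/Corollary~\ref{cor:singular} to lower-bound the singular values of each $t\times m$ block $(\bPi\bW\bD)_{(i)}\bA^{\mathsf T}$ fed to a Gaussian row of $\bP$, run the Blocki-et-al./Upadhyay density-ratio argument per row, and finish with Theorem~\ref{thm:DRV10} composition plus the failure probabilities accounting for the extra $\delta$. The ``principal obstacle'' you flag (per-block rather than global spectral control) is exactly what the paper's Lemma~\ref{lem:singular} and Corollary~\ref{cor:singular} are stated to provide for each consecutive $t$-row restriction, so your sketch matches the paper's proof in all essential respects.
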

The extra $\delta$ term in~\thmref{DP} is due to~\corref{singular}. 
\begin{versionA}
We give a detail proof in the full version.
\end{versionA}
\begin{versionC}
\begin{proof}
Let $\bA$ be the private input matrix and $\widetilde{\bA}$ be the neighbouring matrix. In other word, there exists a unit vector $\mathbf{v}$ such that $\bA - \widetilde{\bA}= \bE= \mathbf{v} \mathbf{e}_i^{\mathsf T}  .$ Let $\bU \bSigma \bV^{\mathsf{T}}$ ($\widetilde{\bU} \widetilde{\bSigma} \widetilde{\bV}^{\mathsf{T}}$, respectively) be the singular value decomposition of $\bA$ ($\widetilde{\bA}$, respectively). Further, since the singular values of $\bA$ and $\widetilde{\bA}$ is at least $\sigma_{\mathsf{min}}=\paren{ \frac{ \sqrt{r\log (2/\beta)} \log (r/ \beta)}{\alpha (1+\varepsilon)}}$, we can write 
\[ \bA = \bU (\sqrt{\bLambda^2 + \sigma_{\mathsf{min}}^2 \I }) \bV^{\mathsf T}~\text{ and}~
  \widetilde{\bA} = \widetilde{\bU} (\sqrt{\widetilde{\bLambda}^2 + \sigma_{\mathsf{min}}^2 \I }) \widetilde{\bV}^{\mathsf T}
 \]
for some  diagonal matrices $\bLambda$ and $\widetilde{\bLambda}$.

The basic idea of the proof is as follows. Recall that we analyzed the action of $\bPhi$ on a unit vector $\bx$ through a series of composition of matrices. In that composition, the last matrix was $\bP_1$, a random Gaussian matrix. We know that this matrix preserves differential privacy if the input to it has certain spectral property. Therefore, we reduce our proof to proving this; the result then follows using the proof of~\cite{BBDS12,Upadhyay14}.

Now consider any row $j$ of the published matrix. It has the form 
\begin{align} (\bPhi \bA^{\mathsf T})_{j:} = \paren{\bPhi \bV  (\sqrt{\bLambda^2 + \sigma_{\mathsf{min}}^2 \I })  \bU^{\mathsf T}}_{j:}~\text{and}~(\bPhi \widetilde{\bA}^{\mathsf T})_{j:} = \paren{\bPhi \widetilde{\bV}  (\sqrt{\widetilde{\bLambda}^2 + \sigma_{\mathsf{min}}^2 \I })  \widetilde{\bU}^{\mathsf T}}_{j:}, \label{eq:rowi}
\end{align} respectively. In what follows, we show that the distribution for the first row of the published matrix is differentially private; the proof for the rest of the rows follows the same proof technique.~\thmref{DP} is then obtained using~\thmref{DRV10}. Expanding on the terms in~\eqnref{rowi} with $i=1$, we know that the output is distributed as per $\bPhi_1 (\bPi \bW \bD \bV)_{1..t} (\sqrt{\bLambda^2 + \sigma_{\mathsf{min}}^2 \I })  \bU^{\mathsf T}$, where $(\bPi \bW \bD \bV)_{1..t}$ represents the first $t=n/r$ rows of the matrix $\bPi \bW \bD \bV$. Using the notation of~\lemref{singular}, we write it as $\bPhi_1 \mathbf{B} \bV (\sqrt{\bLambda^2 + \sigma_{\mathsf{min}}^2 \I })  \bU^{\mathsf T}$
 From~\corref{singular}, we know that with probability at least $1-\delta$, we know that the singular values of $\mathbf{B} \mathbf{A}^{\mathsf T}$ are at least $\sqrt{1-\varepsilon} \sigma_\mathsf{min}$. With a slight abuse of notation, let us denote by $ \bV' (\sqrt{\bLambda^2 + \sigma_{\mathsf{min}}^2 \I })  \bU^{\mathsf T}$ the singular value decomposition of $\mathbf{B} \bA$. We perform the same substitution to derive the singular value decomposition of $\mathbf{B}\widetilde{\bA}^{\mathsf T}$. Let us denote it by $ \widetilde{\bV}' (\sqrt{\widetilde{\bLambda}^2 + \sigma_{\mathsf{min}}^2 \I })  \widetilde{\bU}^{\mathsf T}$.  

Since $\bPhi_1 \sim \cN(\mathbf{0}^t, \I_{t \times t})$, the probability density function corresponding to $\bPhi_1 \bV' (\sqrt{\bLambda^2 + \sigma_{\mathsf{min}}^2 \I })  \bU^{\mathsf T}$ and $\bPhi_1 \widetilde{\bV}' (\sqrt{\widetilde{\bLambda}^2 + \sigma_{\mathsf{min}}^2 \I })  \widetilde{\bU}^{\mathsf T}$ is as follows:
\begin{align}
\frac{1}{ \sqrt{(2\pi)^d \det((\bV' \bSigma \bU^{\mathsf T}) \bU \bSigma \bV'^{\mathsf T}) } } \exp \paren{- \frac{1}{2} \bx^{\mathsf T} \paren{ (\bV' \bSigma \bU^{\mathsf T}) \bU \bSigma \bV'^{\mathsf T} }^{-1} {\bx}}  \\
\frac{1}{\sqrt{(2\pi)^d \det((\widetilde{\bV}' \widetilde{\bSigma} \widetilde{\bU}^{\mathsf T}) \widetilde{\bU} \widetilde{\bSigma} \widetilde{\bV}'^{\mathsf T})}} \exp  \paren{- \frac{1}{2} \bx^{\mathsf T} \paren{(\widetilde{\bV}' \widetilde{\bSigma} \widetilde{\bU}^{\mathsf T}) \widetilde{\bU} \widetilde{\bSigma} \widetilde{\bV}'^{\mathsf T}}^{-1} {\bx} },
\end{align}
where $\det(\cdot)$ represents the determinant of the matrix.

We first prove the following:
\begin{align} 
\exp\paren{- \frac{\alpha}{\sqrt{4 r \ln (2/\beta)}}} \leq  \sqrt{\frac {\det((\widetilde{\bV}' \widetilde{\bSigma} \widetilde{\bU}^{\mathsf T}) \widetilde{\bU} \widetilde{\bSigma} \widetilde{\bV}'^{\mathsf T}) } {\det((\bV' \bSigma \bU^{\mathsf T}) \bU \bSigma \bV'^{\mathsf T})}} \leq  \exp\paren{ \frac{\alpha}{\sqrt{4 r \ln (2/\beta)}}}  \label{eq:first}.
\end{align} 

This part follows simply as in Blocki {\it et al.}~\cite{BBDS12}. More concretely, we have $\det((\bV' \bSigma \bU^{\mathsf T}) \bU \bSigma \bV'^{\mathsf T}) = \prod_i \sigma_i^2$, where $\sigma_1 \geq \cdots \geq \sigma_m \geq  \sigma_{\mathsf{min}}$ are the singular values of $\bA$. 
 Let $\widetilde{\sigma}_1 \geq \cdots \geq \widetilde{\sigma}_m \geq  \sigma_{\mathsf{min}}$ be its singular value  for $\widetilde{\bA}$. Since the singular values of $\bA - \widetilde{\bA}$ and $\widetilde{\bA} -\bA$ are the same,  $\sum_i(\sigma_i - \widetilde{\sigma}_i) \leq 1$ using the Linskii's theorem. Therefore, 
\begin{align*} 
\sqrt{\frac{\det((\widetilde{\bV}' \widetilde{\bSigma}^2 \widetilde{\bV}'^{\mathsf T}) } {\det((\bV' \bSigma^2 \bV'^{\mathsf T})}}  = \sqrt{\prod_i \frac{\widetilde{\sigma}_i^2}{\sigma_i^2}} \leq \exp \paren{\frac{\alpha}{32 \sqrt{r \log (2/\beta)} \log (r/\beta)} }\sum_i (\widetilde{\sigma}_i - \sigma_i) \leq e^{\alpha_0/2}. \end{align*}

Let $\beta_0 =\beta/2r$. In the second stage, we prove the following:
\begin{align}
 \p \sparen{   \left|
 \bx^{\mathsf T} \paren{ (\bV' \bSigma \bU^{\mathsf T}) \bU \bSigma \bV'^{\mathsf T} }^{-1} {\bx} 
 - 
 \bx^{\mathsf T} \paren{(\widetilde{\bV}' \widetilde{\bSigma} \widetilde{\bU}^{\mathsf T}) \widetilde{\bU} \widetilde{\bSigma} \widetilde{\bV}'^{\mathsf T}}^{-1} {\bx} \right| \leq \alpha_0} \geq 1 -\beta_0.
  \label{eq:second} 
\end{align}
Every row of the published matrix is distributed identically; therefore, it suffices to analyze the first row. The first row is constructed by multiplying  a $t$-dimensional  vector $\bPhi_1$ that has entries 
picked 
from a normal distribution $\cN(0,1)$. Note that $\E[\bPhi_{1}]=\mathbf{0}^t$ and $\cov(\bPhi_i)=\I$. Therefore, using the fact that $\bU \bSigma \bV'^{\mathsf T} - \widetilde{\bU} \widetilde{\bSigma} \widetilde{\bV}'^{\mathsf T} = \mathbf{E}' = \mathbf{v}' \mathbf{e}_i^{\mathsf T}$, where $\mathbf{v}'$ is a vector restricted to the first $t$ columns of $\mathbf{v}$.
\begin{align*} 
& \bx^{\mathsf T} \paren{ (\bV' \bSigma \bU^{\mathsf T}) \bU \bSigma \bV'^{\mathsf T} }^{-1} {\bx}  - 
 \bx^{\mathsf T} \paren{(\widetilde{\bV}' \widetilde{\bSigma} \widetilde{\bU}^{\mathsf T}) \widetilde{\bU} \widetilde{\bSigma} \widetilde{\bV}'^{\mathsf T}}^{-1} {\bx} \\
 & \qquad = \bx^{\mathsf T} \sparen{ \paren{ (\bV' \bSigma \bU^{\mathsf T}) \bU \bSigma \bV'^{\mathsf T} }^{-1}   -  \paren{(\widetilde{\bV}' \widetilde{\bSigma} \widetilde{\bU}^{\mathsf T}) \widetilde{\bU} \widetilde{\bSigma} \widetilde{\bV}'^{\mathsf T}}^{-1}} {\bx}  \\
 & \qquad = \bx^{\mathsf T} \sparen{ \paren{ (\bV' \bSigma \bU^{\mathsf T}) \bU \bSigma \bV'^{\mathsf T} }^{-1} \paren{ (\bV' \bSigma \bU^{\mathsf T}) \mathbf{E}' + \mathbf{E}'^{\mathsf T} \widetilde{\bU} \widetilde{\bSigma} \widetilde{\bV}'^{\mathsf T}  }  \paren{(\widetilde{\bV}' \widetilde{\bSigma} \widetilde{\bU}^{\mathsf T}) \widetilde{\bU} \widetilde{\bSigma} \widetilde{\bV}'^{\mathsf T}}^{-1}} {\bx} .
\intertext{Also $\bx^{\mathsf T} = \bPhi_1 \bV' \bSigma  \bU^{\mathsf T} $. This further simplifies to}
 & \qquad = \bPhi_1 \bV' \bSigma  \bU^{\mathsf T} \sparen{ \paren{ \bV' \bSigma^2 \bV'^{\mathsf T} }^{-1} \paren{ (\bV' \bSigma \bU^{\mathsf T}) \mathbf{E}' + \mathbf{E}'^{\mathsf T} \widetilde{\bU} \widetilde{\bSigma} \widetilde{\bV}'^{\mathsf T}  }  \paren{(\widetilde{\bV}' \widetilde{\bSigma}^2 \widetilde{\bV}'^{\mathsf T}}^{-1}}  \bU \bSigma  \bV'^{\mathsf T} \bPhi_1^{\mathsf T} \\
 & \qquad = \bPhi_1 \sparen{ \paren{\bV' \bSigma  \bU^{\mathsf T} } \paren{ \bV' \bSigma^2 \bV'^{\mathsf T} }^{-1} (\bV' \bSigma \bU^{\mathsf T}) \mathbf{E}'  \paren{\widetilde{\bV}' \widetilde{\bSigma}^2 \widetilde{\bV}'^{\mathsf T}}^{-1}  \bU \bSigma  \bV'^{\mathsf T}} \bPhi_1^{\mathsf T}   \\
 & \qquad \quad +\bPhi_1 \sparen{ \paren{ \bV' \bSigma  \bU^{\mathsf T} } \paren{ \bV' \bSigma^2 \bV'^{\mathsf T} }^{-1}  \mathbf{E}'^{\mathsf T} \paren{\widetilde{\bU} \widetilde{\bSigma} \widetilde{\bV}'^{\mathsf T} }   \paren{\widetilde{\bV}' \widetilde{\bSigma}^2 \widetilde{\bV}'^{\mathsf T}}^{-1}  \bU \bSigma  \bV'^{\mathsf T}} \bPhi_1^{\mathsf T} .
 \end{align*}

Using the fact that $\mathbf{E}'= \mathbf{v}'\mathbf{e}_i^{\mathsf T}$ for some $i$, we can write the above expression in the form of $t_1 t_2 + t_3t_4$, where 
\begin{align*}
t_1&=\bPhi_1 \paren{\bV' \bSigma  \bU^{\mathsf T} } \paren{ \bV' \bSigma^2 \bV'^{\mathsf T} }^{-1} (\bV' \bSigma \bU^{\mathsf T}) \mathbf{v}', \\
t_2 &= \mathbf{e}_i^{\mathsf T}  \paren{\widetilde{\bV}' \widetilde{\bSigma}^2 \widetilde{\bV}'^{\mathsf T}}^{-1}  \bU \bSigma  \bV'^{\mathsf T} \bPhi_1^{\mathsf T}, \\
t_3 &= \bPhi_1 \paren{ \bV' \bSigma  \bU^{\mathsf T} } \paren{ \bV' \bSigma^2 \bV'^{\mathsf T} }^{-1}  \mathbf{e}_i, \\
t_4 &=\mathbf{v}'^{\mathsf T} \paren{\widetilde{\bU} \widetilde{\bSigma} \widetilde{\bV}'^{\mathsf T} }   \paren{\widetilde{\bV}' \widetilde{\bSigma}^2 \widetilde{\bV}'^{\mathsf T}}^{-1}  \bU \bSigma  \bV'^{\mathsf T} \bPhi_1^{\mathsf T}.
\end{align*}


Recall that $ \sigma_{\mathsf{min}}=\paren{ \frac{ \sqrt{r\log (2/\beta)} \log (r/ \beta)}{\alpha}}$, $\bU \bSigma \bV'^{\mathsf T} - \widetilde{\bU} \widetilde{\bSigma} \widetilde{\bV}'^{\mathsf T}  = \mathbf{v}' \mathbf{e}_i^{\mathsf T} $.
 Now since $\| \widetilde{\bSigma} \|_2, \| \bSigma \|_2 \geq  \sigma_{\mathsf{min}}$, $\| \mathbf{v}' \| \leq \| \mathbf{v} \| \leq 1$, and that every term $t_i$ in the above expression is a linear combination of a Gaussian, i.e., each term is distributed as per $ \cN(0,\|t_i\|^2)$, we have 
\begin{align*}
\|t_1\| \leq 1,~\|t_2 \| \leq  \frac{ 1}{ \sigma_{\mathsf{min}}},~\|t_3\| \leq \frac{1}{ \sigma_{\mathsf{min}}} + \frac{1 }{ \sigma_{\mathsf{min}}^2},~\|t_4\| \leq 1  + \frac{1 }{ \sigma_{\mathsf{min}}}.
\end{align*}


 Using the concentration bound on the Gaussian distribution, each term, $t_1,t_2,t_3$, and $t_4$, is less than $\|t_i\| \ln (4/\beta_0)$ with probability $1 - \beta_0/2$. From the fact that $2\paren{\frac{1}{ \sigma_{\mathsf{min}}} + \frac{1}{ \sigma_{\mathsf{min}}^2}} \ln (4/\beta_0) \leq \alpha_0$, we have~\eqnref{second}.
 
 Combining~\eqnref{first} and~\eqnref{second} with~\lemref{zero} gives us~\thmref{DP}.

\end{proof}
\end{versionC}
\thmref{DP} improves the efficiency of both the algorithms of~\cite{BBDS12} (without sparsification trick~\cite{Upadhyay13}). 
\begin{theorem}
One can $(\alpha,\beta+2^{-\Omega((1 - \theta)^2 n^{2/3})})$-differentially private  answer both cut-queries on a graph and directional covariance queries of a matrix in time $O(n^2 \log r)$. Let $\cQ$ be the set of queries  such that $|\cQ| \leq 2^r$. The total additive error incurred to answer all $\cQ$ queries is $O(|S| \sqrt{\log |\cQ|/\alpha})$ for cut-queries on a graph, where $S$ is the number of vertices in the cut-set, and $O(\alpha^{-2} \log |\cQ|)$ for directional covariance queries.
\end{theorem}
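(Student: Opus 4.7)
The plan is to derive this corollary by applying Theorem~\ref{thm:DP} to the input representations of a graph and of a data matrix, in the same spirit as Blocki~{\it et al.}~\cite{BBDS12} and Upadhyay~\cite{Upadhyay13}, and then invoking Theorem~\ref{thm:newsparse} to bound the error uniformly over the query class $\cQ$. For cut-queries on a graph $G=(V,E)$, I represent $G$ by its weighted edge--vertex incidence matrix $\bE$, so that any cut induced by $S\subseteq V$ has value $|E(S,\bar S)|=\|\bE \mathbf{1}_S\|_2^2$. For directional covariance queries on an $m\times n$ matrix $\bA$, the answer to a query $\bv\in\bS^{n-1}$ is $\bv^{\mathsf T}\bA^{\mathsf T}\bA\bv=\|\bA\bv\|_2^2$. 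In both cases the published object is $\bM=\bPhi\bA^{\mathsf T}$ (respectively $\bPhi\bE^{\mathsf T}$), and the analyst answers a query by reporting $\|\bM\bv\|_2^2$.

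Before invoking Theorem~\ref{thm:DP}, I need to ensure the spectral condition that every singular value is at least $\ln(4/\beta)\sqrt{16r\log(2/\beta)}/(\alpha(1+\varepsilon))$. Following the standard reduction, I replace the input $\bA$ (resp.\ $\bE$) by $\widehat{\bA}=\sqrt{\bA^{\mathsf T}\bA+\sigma_{\min}^2\I}$, which shifts every singular value up by the required amount while changing $\|\bA\bv\|_2^2$ by exactly $\sigma_{\min}^2\|\bv\|_2^2$, a known correction the analyst can subtract. Since $\widehat{\bA}-\widehat{\widetilde{\bA}}$ has the same Linskii-style singular-value bound as $\bA-\widetilde{\bA}$, neighbouring inputs remain neighbouring. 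With this preprocessing, Theorem~\ref{thm:DP} immediately yields $(\alpha,\beta+2^{-\Omega((1-\theta)^2 n^{2/3})})$-differential privacy for the release of $\bM$, where the extra term is inherited from Corollary~\ref{cor:singular} via Lemma~\ref{lem:zero}.

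For utility, I fix a single query $\bv$ and apply Theorem~\ref{thm:newsparse} to the vector $\bA^{\mathsf T}\bv/\|\bA^{\mathsf T}\bv\|_2$, setting $r=O(\alpha^{-2}\log|\cQ|)$ and accuracy $\varepsilon_0=O(\sqrt{\log|\cQ|}/\alpha)$ with failure probability $\delta_0=1/(2|\cQ|)$. A union bound over all $|\cQ|\le 2^r$ queries then guarantees that every query's answer is preserved up to multiplicative factor $1\pm\varepsilon_0$. For covariance queries this yields an additive error of $O(\alpha^{-2}\log|\cQ|)$ as stated. For cut-queries, $\|\mathbf{1}_S\|_2^2=|S|$, so the multiplicative perturbation translates into the additive bound $O(|S|\sqrt{\log|\cQ|/\alpha})$, matching the Blocki~{\it et al.}\ guarantee. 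The running time follows from Theorem~\ref{thm:newsparse}: computing $\bPhi\bA^{\mathsf T}$ requires one application of $\bPhi$ per column of $\bA^{\mathsf T}$, each taking $O(n\log b)$ time, for total preprocessing time $O(n^2\log b)=O(n^2\log r)$; answering a query after preprocessing is cheaper.

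The main obstacle I anticipate is verifying that the spectral-perturbation reduction interacts cleanly with our distribution, because $\bP$ contains many deterministically zero entries and the attack described in Section~\ref{sec:DP} shows that naive extensions of the Blocki--Upadhyay proof can fail. The resolution is precisely the chain Lemma~\ref{lem:zero}~$\Rightarrow$~Lemma~\ref{lem:singular}~$\Rightarrow$~Corollary~\ref{cor:singular}, which shows that after the intermediate isometry $\bPi\bW\bD$ acts on $\bA^{\mathsf T}$, the resulting matrix still has singular values bounded below by $\sqrt{1-\varepsilon}\,\sigma_{\min}$, so that the final Gaussian stage $\bP_1$ can be analysed exactly as in the dense Gaussian case of~\cite{BBDS12,Upadhyay14}; this is already packaged inside Theorem~\ref{thm:DP}, and the corollary is then a direct instantiation.
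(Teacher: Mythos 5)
Your proposal matches the paper's (implicit) argument: the paper states this result as a direct corollary of Theorem~\ref{thm:DP} together with the utility analysis of Blocki {\it et al.}~\cite{BBDS12}, which is exactly the reduction you carry out (incidence-matrix/covariance representation, spectral shift by $\sigma_{\min}$, privacy from Theorem~\ref{thm:DP}, union bound over $|\cQ|\le 2^r$ queries for utility, and the $O(n^2\log r)$ preprocessing time from the fast matrix--vector multiplication). The only minor imprecision is in the cut-query error accounting --- the $|S|$ factor in the additive error arises from the JL distortion acting on the $\sigma_{\min}^2\|\mathbf{1}_S\|_2^2$ regularization term that the analyst subtracts, not directly from $\|\mathbf{1}_S\|_2^2$ --- but this is the standard Blocki {\it et al.} bookkeeping to which you already defer.
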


\begin{versionA}
\paragraph{Differentially-Private Manifold Learning.}
\end{versionA}
\begin{versionC}
\subsection{Differentially private manifold learning}
 We next show how we can use~\thmref{DP} to differentially private manifold learning.
\end{versionC}
The manifold learning is often considered as non-linear analog of principal component analysis. It is defined as follows: given $\bx_1,\cdots, \bx_m \in \R^n$ that lie on an $k$-dimensional manifold $\mathscr{M}$ that can be described by $f:\mathscr{M} \rightarrow \R^n$, find $\by_1, \cdots, \by_m$ such that $\by_i=f(\bx_i)$. We consider two set of sample points neighbouring if they differ by exactly one points which are apart by Euclidean distance at most $1$.~\thmref{DP} allows us to transform a known algorithm ({\em Isomap}~\cite{TDL00}) for manifold learning to a non-interactive differentially private algorithm under a smoothness condition on the manifold. 
The design of our algorithm require some care to restrict the additive error.
We guarantee that the requirements of Isomap is fulfilled by our algorithm. 
\begin{versionA}
We defer the algorithm and our proof of the following theorem in the full version.
\end{versionA}
\begin{theorem} \label{thm:manifold}
Let $\mathscr{M}$ be a compact $k$-dimensional Riemannian manifold of $\R^n$ having condition number $\kappa$, volume $\mathsf{V}$ , and geodesic covering regularity $\mathsf{r}$. If $n^{1/2-\tau} \varepsilon^2 \geq c k \log (n \mathsf{V}/\kappa) \log(1/\delta))$ for an universal constant $c$, then we can learn the manifold in an $(\varepsilon,\delta+2^{-\Omega((1 - \theta)^2 n^{2/3})} )$-differentially private manner.
\end{theorem}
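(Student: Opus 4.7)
The plan is to reduce the problem to running a known non-private manifold learning algorithm (Isomap of~\cite{TDL00}) on a suitable random projection of the input points, and to invoke~\thmref{DP} for the privacy guarantee together with a manifold-version of the Johnson--Lindenstrauss lemma for utility. Stack the input points $\bx_1,\dots,\bx_m$ as the columns of an $n\times m$ matrix $\bA$. I will pick a target dimension
\[
 r \;=\; c_0\,\varepsilon^{-2}\,k\log(n\mathsf{V}/\kappa)\,\log(1/\delta),
\]
which is precisely the dimension at which the manifold-JL result of Baraniuk--Wakin shows that, with probability $1-\delta$, a random matrix satisfying {\sf JLP} with parameters $(\varepsilon,\delta)$ preserves every pairwise geodesic and Euclidean distance on $\mathscr{M}$ up to a multiplicative factor $(1\pm\varepsilon)$. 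The hypothesis $n^{1/2-\tau}\varepsilon^2 \geq c\,k\log(n\mathsf{V}/\kappa)\log(1/\delta)$ is exactly what is needed to ensure $r \leq n^{1/2-\tau}$ so that the distribution of~\figref{basic} is applicable and~\thmref{newsparse} gives {\sf JLP}.

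First I would publish $\bPhi\bA^{\mathsf T}$, where $\bPhi\sim\cD$ with $\cD$ the distribution of~\figref{basic} at the $r$ above, after first ensuring the spectral precondition of~\thmref{DP}. Concretely, to guarantee the singular-value lower bound $\sigma_{\min}(\bA)\ge \sigma^\star := \ln(4/\beta)\sqrt{16r\log(2/\beta)}/(\alpha(1+\varepsilon))$ demanded by~\thmref{DP}, I would shift by appending the rows of $\sigma^\star\,\I_n$ to $\bA$ (this is the standard trick used in~\cite{BBDS12,Upadhyay14}). Neighboring datasets remain neighboring after the shift, and the spectral precondition is satisfied by construction. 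Applying~\thmref{DP} to the shifted matrix immediately gives $(\alpha,\beta+\delta')$-differential privacy with $\delta' = 2^{-\Omega((1-\theta)^2 n^{2/3})}$, using the failure probability of~\lemref{zero}. Rescaling $(\alpha,\beta)$ to $(\varepsilon,\delta)$ yields the bound in the theorem statement.

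For utility, I would hand the projected points $\by_i = \bPhi\bx_i$ (recovered from the columns of the released matrix after subtracting the deterministic shift) to Isomap. Isomap only needs a matrix of approximate geodesic distances as input, and by the manifold-JL guarantee invoked above, with probability $1-\delta$ the geodesic distances on $\bPhi(\mathscr{M})$ are a $(1\pm\varepsilon)$-distortion of those on $\mathscr{M}$; hence the correctness of Isomap carries over. A union bound over the JL failure event and the event of~\lemref{zero} absorbs into the stated $\delta+2^{-\Omega((1-\theta)^2 n^{2/3})}$ failure term. Since $\bPhi$ is applied non-interactively and the projected point set is released once, the mechanism is non-adaptive.

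The main obstacle is handling the interaction between the spectral precondition needed by~\thmref{DP} and the geometric structure that Isomap consumes: naively adding $\sigma^\star\I$ destroys the manifold, so one must be careful to separate the deterministic shift (which is public and can be subtracted off by the analyst before passing points to Isomap) from the random part $\bPhi\bA^{\mathsf T}$ that carries the privacy guarantee, and to verify that subtraction of the shift by a public post-processing step neither violates differential privacy (post-processing is free) nor degrades the JL distortion beyond the $(1\pm\varepsilon)$ factor already absorbed. A secondary issue is that the manifold-JL dimension bound is stated for generic sub-Gaussian matrices; one must check that the distribution of~\figref{basic} enjoys the same concentration on tangent spaces of $\mathscr{M}$, which follows by covering $\mathscr{M}$ by an $\varepsilon$-net of size $(nV/\kappa)^{O(k)}$ and applying the tail bound of~\thmref{newsparse} at confidence $\delta/(nV/\kappa)^{O(k)}$.
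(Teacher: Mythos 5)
Your overall architecture matches the paper's: pad the data to force the singular values above the threshold of \thmref{DP}, project with the distribution of~\figref{basic} at a target dimension $r = O(\varepsilon^{-2}k\log(n\mathsf{V}/\kappa)\log(1/\delta))$ so that the Baraniuk--Wakin manifold-JL guarantee applies and the hypothesis $n^{1/2-\tau}\varepsilon^2 \geq ck\log(n\mathsf{V}/\kappa)\log(1/\delta)$ ensures $r\leq n^{1/2-\tau}$, and then feed the projected points to Isomap. The paper's padding is per-point ($\bx_i = \begin{pmatrix}\sigma_{\min}\mathbf{e}_i & \by_i\end{pmatrix}^{\mathsf T}$ in $\R^{2n}$), which is the same trick you describe.

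The genuine gap is your claim that the deterministic shift ``can be subtracted off by the analyst before passing points to Isomap'' as free post-processing. The analyst receives $\bPhi\mathbf{X}$, not $\mathbf{X}$, and the shift appears in the output only as $\bPhi$ applied to the padding block. Since the randomness of $\bPhi$ is the sole source of the differential-privacy guarantee, $\bPhi$ cannot be public; and publishing $\bPhi$ applied to the (data-independent) padding block $\sigma^\star\I$ would reveal $\sigma^\star\bPhi$, i.e., $\bPhi$ itself, destroying privacy. So the subtraction step is not available, and your utility argument---which assumes Isomap sees an undistorted $(1\pm\varepsilon)$-embedding of the original manifold---does not go through as written. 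The paper instead keeps the padding in place and pays for it: it shows, via a graph-distance approximation in the style of Bernstein et al.\ and a perturbation analysis of the Isomap Gram matrix, that the residual variance degrades by an additive $O(\varepsilon(\mathsf{d}^2+\sigma_{\min}^2))$ term, and it wraps the whole thing in an iterative loop that privately estimates the intrinsic dimension (Grassberger--Procaccia) and increments $r$ until the residual variance is acceptable. To repair your proof you would need to replace the subtraction step with such a quantitative bound on how the padding perturbs the geodesic/graph distances and the Isomap spectrum. Your final paragraph's concern about extending the manifold-JL bound to the distribution of~\figref{basic} via a net over $\mathscr{M}$ is well placed and is essentially how the paper justifies invoking the Baraniuk--Wakin theorem.
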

\begin{versionC}
\begin{proof} We  convert the non-private algorithm of Hegde, Wakin, and Baraniuk~\cite{HWB} to a private algorithm for manifold learning. We consider two set of sample points neighbouring if they differ by exactly one point and the Euclidean distance between those two different points is  at most $1$. An important parameter for all manifold learning algorithms is the {\em intrinsic dimension} of a point. Hegde {\it et al.}~\cite{HWB} focussed their attention on the Grassberger and Procaccia's algorithm for the estimation of intrinsic dimension~\cite{GP}. It computes the {\em correlated dimension} of the sampled point. 
\begin{definition}
	Let $X=\set{\bx_1, \cdots, \bx_m}$ be a finite data points of the underlying dimension $k$. Then define 
	\[ C_m(d) = \frac{1}{m(m-1)} \sum_{i \neq j} \mathbf{1}[\| \bx_i - \bx_i \|_2 <r], \]
	where $\mathbf{1}$ is the indicator function. Then the {\em correlated dimension} of $\mathbf{X}$ is defined as 
	\[ \widehat{k} (d_1, d_2) := \frac{\log (C_m(d_1)/C_m(d_2))}{\log (d_1/d_2)}. \]
\end{definition}

We embed the data points to a low-dimension subspace. It is important to note that if the embedding dimension is very small, then it is highly likely that distinct point wold be mapped to same embedded point. Therefore, we need a find the minimal embedding subspace that suffices for our purpose. 

Our analysis would centre around a very popular algorithm for manifold learning known as {\em Isomap}~\cite{TDL00}. It is a non-linear algorithm that is used to generate a mapping from a set of sampled points in a $k$-dimensional manifold into a Euclidean space of dimension $k$.  It attempts to preserve the metric structure of the manifold, i.e., the geodesic distances between the points. The idea is to define a suitable graph which approximates the geodesic distances and then perform classical multi-dimensional scaling to obtain a reduced $k$-dimensional representation of the data. 

There are two key parameters in the Isomap algorithm that we need to control: intrinsic dimension and the {\em residual variance} which measures how well a given dataset can be embedded not a $k$-dimensional Euclidean space. 

Baraniuk and Wakin~\cite{BW09} proved the following result on the distortion suffered by pairwise geodesic distances and Euclidean distances if we are given certain number of sample points.
\begin{theorem} \label{thm:geodesic}
Let $\mathscr{M}$ be a compact $k$-dimensional manifold in $\R^n$ having volume $\mathsf{V}$ and condition number $\kappa$. Let $\bPhi$ be a random projection that satisfies~\thmref{JL} such that 
\[ r \geq O \paren{\frac{k \log (n \mathsf{V} \kappa) \log(1/\delta)}{ \varepsilon^{2}} }.  \]
Then with probability $1-\delta$, for every pair of points $\bx , \by \in \mathscr{M}$, 
\[ (1-\varepsilon) \sqrt{r/n} \| \bx - \by \|_{\mathsf{Geo}} \leq \| \bPhi \bx - \bPhi \by \|_{\mathsf{Geo}} \leq (1+\varepsilon) \sqrt{r/n} \| \bx - \by \|_{\mathsf{Geo}}, \]
where $\| \bx - \by \|_{\mathsf{Geo}}$ stands for the geodesic distance between $\bx$ and $\by$.
\end{theorem}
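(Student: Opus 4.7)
}

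The plan is to reduce a continuous statement (preservation of geodesic distances for \emph{every} pair on the manifold) to the discrete Johnson--Lindenstrauss guarantee applied to a carefully chosen finite net, and then to transfer Euclidean chord preservation to geodesic preservation using the smoothness encoded in the condition number $\kappa$. First, I would construct a hierarchical $\eta$-net $\mathcal{N}$ of $\mathscr{M}$, where the resolution $\eta$ is chosen as a small polynomial factor of $\kappa \varepsilon$. Using the geodesic covering regularity $\mathsf{r}$ together with the volume $\mathsf{V}$, the standard packing estimate gives $|\mathcal{N}| \le (\mathsf{r}\mathsf{V}/\eta^k) \cdot \mathrm{poly}(1/\kappa)$, so $\log|\mathcal{N}| = O(k \log(n \mathsf{V} \kappa))$. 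Invoking \thmref{JL} on the finite set of $\binom{|\mathcal{N}|}{2}$ pairwise differences (plus the $k$-dimensional tangent-space unit vectors at each net point, which one also nets), the chosen target dimension $r = \Omega(\varepsilon^{-2}(k \log(n\mathsf{V}\kappa) + \log(1/\delta)))$ suffices, via a union bound, to ensure that with probability $1-\delta$ the rescaled projection $\sqrt{n/r}\,\bPhi$ preserves every such vector up to a $(1\pm \varepsilon/C)$ factor for a suitable constant $C$.

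Second, I would convert this discrete Euclidean guarantee into a \emph{local} geodesic guarantee via the reach $1/\kappa$. A classical fact (see Federer/Niyogi--Smale--Weinberger) is that if $\bx,\by \in \mathscr{M}$ satisfy $\|\bx-\by\|_2 \le \eta \ll 1/\kappa$, then $\|\bx-\by\|_{\mathsf{Geo}} = \|\bx-\by\|_2(1 + O(\kappa^2 \eta^2))$. Moreover, locally the manifold is well-approximated by its tangent space: any unit secant $(\bx-\by)/\|\bx-\by\|_2$ lies within angle $O(\kappa \eta)$ of a tangent vector at a nearby net point. Because $\bPhi$ preserves norms of tangent unit vectors up to $(1\pm\varepsilon/C)$, and because shifting the argument by an $O(\kappa\eta)$-perturbation propagates to at most an $O(\varepsilon)$ multiplicative distortion on norms by a standard perturbation bound (as in the tangent-bundle argument of Baraniuk--Wakin), we conclude that for every \emph{nearby} pair $(\bx,\by)$ on $\mathscr{M}$,
\[
(1 - \varepsilon/2)\sqrt{r/n}\,\|\bx-\by\|_{\mathsf{Geo}} \;\le\; \|\bPhi\bx - \bPhi\by\|_2 \;\le\; (1+\varepsilon/2)\sqrt{r/n}\,\|\bx-\by\|_{\mathsf{Geo}}.
\]

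Third, I would upgrade this local statement to a \emph{global} geodesic statement by a chaining/path-decomposition argument. For any pair $\bx,\by \in \mathscr{M}$, let $\gamma$ be a minimizing geodesic. Partition $\gamma$ into consecutive short arcs $\gamma_1, \dots, \gamma_M$ each of geodesic length $\le \eta$; the image geodesic of $\bPhi\bx,\bPhi\by$ in $\bPhi(\mathscr{M})$ (once one shows $\bPhi$ restricted to $\mathscr{M}$ is an embedding, which follows from the local bi-Lipschitz property) decomposes analogously. Summing the $(1\pm\varepsilon/2)$ local distortions over the $M$ pieces and using that geodesic length is additive along $\gamma$, I obtain the desired two-sided $(1\pm \varepsilon)\sqrt{r/n}$ bound on $\|\bPhi\bx-\bPhi\by\|_{\mathsf{Geo}}$. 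A small technical wrinkle is the upper bound: one must ensure the shortest path in $\bPhi(\mathscr{M})$ between $\bPhi\bx$ and $\bPhi\by$ is not significantly shorter than the image of $\gamma$, which follows again from the preservation of local secants together with the condition-number hypothesis, ensuring $\bPhi(\mathscr{M})$ inherits a reach of order $(1-\varepsilon)\sqrt{r/n}/\kappa$.

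The \textbf{main obstacle} is the tangent-bundle extension: going from a JL bound on a finite net of points to a uniform bound on all tangent unit vectors and all secant chords simultaneously. This is where the $k$-factor in the target dimension enters: one must net not just the $|\mathcal{N}|$ points but also the $(k{-}1)$-sphere of unit tangent directions at each, and bound the perturbation when moving from a netted tangent to a nearby true secant. The delicate part is ensuring the $O(\kappa \eta)$ tangential perturbation does not blow up after $M = O(D_{\mathsf{Geo}}/\eta)$ concatenations in the chaining step; this is controlled by choosing $\eta$ as a sufficiently small polynomial in $\kappa/\varepsilon$ while keeping $\log|\mathcal{N}|$ within the budget $O(k\log(n\mathsf{V}\kappa))$.
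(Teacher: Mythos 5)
The paper does not prove this statement at all: it is quoted verbatim as a known result of Baraniuk and Wakin~\cite{BW09} (``Baraniuk and Wakin proved the following result\ldots''), so there is no in-paper argument to compare yours against. Your sketch is, in essence, a reconstruction of the Baraniuk--Wakin proof itself: a finite net on $\mathscr{M}$ of resolution tied to $\kappa$ and $\varepsilon$ whose cardinality is controlled by the volume and covering regularity, an application of the JL guarantee (with a union bound) to the net points together with netted tangent directions, and a condition-number/reach argument to pass from the netted vectors to all secants and tangent vectors uniformly. That is the right plan, and the parameter accounting ($\log|\mathcal{N}| = O(k\log(n\mathsf{V}\kappa))$, hence $r = \Omega(\varepsilon^{-2}(k\log(n\mathsf{V}\kappa)+\log(1/\delta)))$) matches the stated bound.

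Two remarks on the last step, which is where your sketch is least economical. Once you have uniform $(1\pm\varepsilon)$ preservation of all tangent vectors of $\mathscr{M}$, the geodesic claim follows by integrating along curves rather than by partition-and-sum over secants: for any path $\gamma$ on $\mathscr{M}$, the length of $\bPhi\circ\gamma$ is $\int \|\bPhi\,\gamma'(t)\|\,dt$, which is within $(1\pm\varepsilon)\sqrt{r/n}$ of $\int\|\gamma'(t)\|\,dt$; applying this to a minimizing geodesic gives the upper bound on $\|\bPhi\bx-\bPhi\by\|_{\mathsf{Geo}}$, and applying it to the preimage of a minimizing path in $\bPhi(\mathscr{M})$ (using that $\bPhi|_{\mathscr{M}}$ is injective, which the local bi-Lipschitz property gives you) yields the lower bound. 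In particular, the ``main obstacle'' you flag about error accumulation over $M$ concatenated arcs is not a real obstacle: each arc suffers the \emph{same multiplicative} $(1\pm\varepsilon)$ distortion, and summing lengths preserves a common multiplicative factor, so nothing compounds with $M$. The genuinely delicate part is exactly the tangent-bundle extension you identify (netting unit tangent spheres and controlling the $O(\kappa\eta)$ secant-to-tangent perturbation), which is where the $k\log(n\mathsf{V}\kappa)$ term and the condition-number hypothesis are consumed, just as in~\cite{BW09}.
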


Therefore, we know that the embedded points represents the original manifold very well. To round things up, we need to estimate how well Isomap performs when only the embedded vectors are given. 
The Isomap algorithm for manifold learning uses a preprocessing step that estimates the intrinsic dimension of the manifold. The algorithm mentioned in~\figref{basic} computes an intrinsic dimension $\widehat{k}$ and a minimal dimension of the embedding $r$ before it invokes the Isomap algorithm on the embedded points. The privacy is preserved because of our transformation of sampled points $\set{\by_i}$ to a set of points $\set{\bx_i}$. This is because the matrix $\mathbf{X}$ formed by the set of points $\set{\bx_i}$ has its singular values greater than $\sigma_{\mathsf{min}}$ required in~\thmref{DP}. So in order to show that the algorithm mentioned in~\figref{manifold} learns the manifold, we need to show that the intrinsic dimension is well estimated and there is a bound on the residual variance due to embedding.
In other words, we need results for the following two and we are done.
\begin{enumerate}
	\item Preprocessing Stage: We need an  bound on {\em correlated dimension} of the sample points in the private setting.  
	\item Algorithmic Stage: We need to bound the error on the residual variance in the private setting.
\end{enumerate}

Once we have the two results, the algorithm for manifold learning follows using the standard methods of Isomap~\cite{TDL00}. For the correctness of the algorithm, refer to Hegde {\it et al.}~\cite{HWB} and the original work on Isomap~\cite{TDL00}. Here we just bound the correlated dimension and the residual variance. For the former, we use the algorithm of Grassberger and Procaccia~\cite{GP}. For the latter, we analyze the effect of perturbing the input points so as to guarantee privacy. 

\begin{figure} [t]
\begin{center}
\fbox{
\begin{minipage}[l]{6in}
\medskip
{\bf Input:} A set of sampled point $Y=\set{\by_1, \cdots, \by_m}$ in $\R^n$. \\
{\bf Initialization:} $r :=1$.  Construct $\bx_i \in \R^{2n}$ by setting $\bx_i = \begin{pmatrix} \sigma_{\mathsf{min}} \mathbf{e}_i & \by_i \end{pmatrix}^{\mathsf T}$ for all $1 \leq i \leq m$. Construct the matrix $\mathbf{X}$ with columns $\set{\bx_1, \cdots, \bx_m}$.\\

while residual variance $\geq \zeta$ do
\begin{itemize}
	\item Sample $\bPhi$ from a distribution over $r \times 2n$ random matrices defined in~\figref{basic}
	\item Run the Grassberger-Procaccia's algorithm on $\bPhi \mathbf{X}$.
	\item Use the estimate $\widehat{k}$ to perform Isomap on $\bPhi \mathbf{X}$.
	\item Calculate the residual variance.
	\item Increment $r$.
\end{itemize}
	Invoke the Isomap algorithm~\cite{TDL00} with the intrinsic dimension $\widehat{k}$, points $\bx_1, \cdots, \bx_m$ as sample points, and an $r \times n$ matrix picked as per~\figref{basic}.
\end{minipage}
}\caption{Differentially Private Manifold Learning} \label{fig:manifold} 
\end{center}
\end{figure}

\begin{remark}
Before we proceed with the proofs, a remark is due here. Note that we need to compute $\bPhi \mathbf{X}$ until the residual error is more than some threshold of tolerance. If we simply use Gaussian random matrices this would require $O(rn^2)$ time per iteration, an order comparable to the run-time of Isomap~\cite{TDL00}. On other hand, using our $\bPhi$, we need $O(r n \log r)$ time per iteration. This leads to a considerable improvement in the run-time of the learning algorithm.
\end{remark}
Since the proof of Hegde {\it et al.}~\cite{HWB} for the preprocessing step does not need the specifics of the sampled point, but only use the properties of $\bPhi$ in preserving the pairwise Euclidean distance, the first bound follows as in Hegde {\it et al.}~\cite{HWB}. 
\begin{theorem}
Let $\mathscr{M}$ be a compact $k$-dimensional manifold in $\R^n$ having volume $\mathsf{V}$ and condition number $\kappa$. Let $\set{\bx_1, \cdots, \bx_m}$ be a finite data points of the underlying dimension $k$.  Let $\widehat{k}$ be the dimension estimate of the algorithm of Grassberger and Procaccia~\cite{GP} over the range $(d_{\mathsf{min}}, d_{\mathsf{max}}).$ Let $\eta=-\log (d_{\mathsf{min}}/ d_{\mathsf{max}})$. Suppose  $d_{\mathsf{max}} \leq \kappa/2$. Fix a $0<\eta <1$ and $0 < \delta <1$.
Let $\bPhi$ be a random projection that satisfies~\thmref{JL} such that 
\[ r \geq O \paren{\frac{k \log (n \mathsf{V} \kappa) \log(1/\delta)}{ \eta^2 \zeta^{2}} }.  \]
Let $\widehat{k}_{\bPhi}$ be the estimated correlation dimension on $\bPhi X$ over the range $(d_{\mathsf{min}}\sqrt{r/n}, d_{\mathsf{max}}\sqrt{r/n}).$ Then with probability at least $1-\delta$,
\[ (1-\eta)\widehat{k} \leq \widehat{k}_{\bPhi} \leq (1+\eta)\widehat{k}.  \]
\end{theorem}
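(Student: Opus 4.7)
My plan is to lift the statement through Baraniuk and Wakin's manifold Johnson-Lindenstrauss bound (\thmref{geodesic}) and then translate distance preservation into preservation of the Grassberger-Procaccia correlation sum. First I would apply \thmref{geodesic} with accuracy parameter $\varepsilon := c\eta\zeta$, where $c$ is a small universal constant; the hypothesis $r = \Omega(k\log(n\mathsf{V}\kappa)\log(1/\delta)/(\eta^2\zeta^2))$ is exactly the projection dimension required. With probability at least $1-\delta$, all geodesic distances on $\mathscr{M}$ are then preserved by $\bPhi$ up to a multiplicative factor $(1\pm\varepsilon)\sqrt{r/n}$. Because $d_{\mathsf{max}}\le\kappa/2$ lies safely below the reach of $\mathscr{M}$, Euclidean distances between manifold points within this range agree with their geodesic counterparts to within a $(1+O(d_{\mathsf{max}}/\kappa))$ factor; folding that constant distortion into $\varepsilon$ (by adjusting $c$), we conclude that every pairwise Euclidean distance among the samples $\{\bx_i\}$ satisfies $(1-\varepsilon)\sqrt{r/n}\|\bx_i-\bx_j\|_2 \le \|\bPhi\bx_i-\bPhi\bx_j\|_2 \le (1+\varepsilon)\sqrt{r/n}\|\bx_i-\bx_j\|_2$.

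Conditioned on this good event, I would next derive a sandwich inequality for the correlation sum. For any threshold $d$, the distance bound gives
\[\mathbf{1}\bigl[\|\bx_i-\bx_j\|_2 < d/(1+\varepsilon)\bigr] \le \mathbf{1}\bigl[\|\bPhi\bx_i-\bPhi\bx_j\|_2 < d\sqrt{r/n}\bigr] \le \mathbf{1}\bigl[\|\bx_i-\bx_j\|_2 < d/(1-\varepsilon)\bigr],\]
and averaging over the $\binom{m}{2}$ unordered pairs yields $C_m(d/(1+\varepsilon)) \le C_{\bPhi}(d\sqrt{r/n}) \le C_m(d/(1-\varepsilon))$. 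Applying this separately at $d=d_{\mathsf{min}}$ and $d=d_{\mathsf{max}}$ produces matching two-sided bounds on the ratio $C_{\bPhi}(d_{\mathsf{min}}\sqrt{r/n})/C_{\bPhi}(d_{\mathsf{max}}\sqrt{r/n})$ purely in terms of $C_m$ evaluated at the four perturbed scales $d_{\mathsf{min}}(1\pm\varepsilon)$ and $d_{\mathsf{max}}(1\pm\varepsilon)$.

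Finally, I would substitute these bounds into the definition
\[\widehat{k}_{\bPhi} \;=\; \frac{\log C_{\bPhi}(d_{\mathsf{min}}\sqrt{r/n}) - \log C_{\bPhi}(d_{\mathsf{max}}\sqrt{r/n})}{\log(d_{\mathsf{min}}/d_{\mathsf{max}})},\]
noting that $\log(d_{\mathsf{min}}/d_{\mathsf{max}}) = -\eta$. Within the regression window the Grassberger-Procaccia estimator uses, the correlation sum obeys the local power law $C_m(d) \asymp c_0\, d^{\widehat{k}}$, so a first-order expansion gives $\log C_m(d_\ast(1\pm\varepsilon)) = \log C_m(d_\ast) + \widehat{k}\log(1\pm\varepsilon)$ at each endpoint $d_\ast \in \{d_{\mathsf{min}},d_{\mathsf{max}}\}$. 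Combining with the identity $\log C_m(d_{\mathsf{min}})-\log C_m(d_{\mathsf{max}}) = -\widehat{k}\eta$ (which is the very definition of $\widehat{k}$) and using $|\log((1+\varepsilon)/(1-\varepsilon))| \le 3\varepsilon$, one arrives at $\widehat{k}_{\bPhi} = \widehat{k}\bigl(1 \pm 2\varepsilon/\eta\bigr)$; the choice $\varepsilon = \eta\zeta/2$ then delivers the claimed $\widehat{k}(1 \pm \zeta)$ accuracy (which the statement renames $(1\pm\eta)$ via a slight reuse of the symbol $\eta$).

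The main technical obstacle is the jump discontinuity of the indicator defining $C_m$: a small relative shift in the threshold can in principle flip many pairs at once unless the correlation sum is ``smooth'' near the endpoints. Making the first-order expansion rigorous therefore requires the local power-law behaviour implicit in the Grassberger-Procaccia setup, which in turn leans on the smoothness of $\mathscr{M}$ encoded by $\kappa$, $\mathsf{V}$, and the geodesic covering regularity $\mathsf{r}$. Concretely, one must bound the number of sample pairs whose Euclidean separation lies in the $\varepsilon$-relative annulus around $d_{\mathsf{min}}$ or $d_{\mathsf{max}}$; this is the only step where the manifold-regularity hypotheses are doing genuine work beyond what the JL step has already delivered, and it is the place where the proof should be written out most carefully.
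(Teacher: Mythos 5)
Your argument is correct and is essentially the proof the paper relies on: the paper does not spell this step out, but simply observes that the preprocessing bound of Hegde, Wakin and Baraniuk uses only the fact that $\bPhi$ preserves pairwise distances (supplied here via \thmref{geodesic}), and their proof is exactly your sandwich $C_m(d/(1+\varepsilon)) \le C_{\bPhi}(d\sqrt{r/n}) \le C_m(d/(1-\varepsilon))$ followed by a perturbation of the Grassberger--Procaccia log-ratio. The only things you add are making explicit the local power-law/stability assumption on $C_m$ near $d_{\mathsf{min}}, d_{\mathsf{max}}$ (which the paper and its source leave implicit) and the geodesic-to-Euclidean bridge via $d_{\mathsf{max}} \le \kappa/2$, which you could avoid altogether since the Baraniuk--Wakin embedding theorem also preserves ambient Euclidean distances directly.
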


For the second part, we need  to take care of two types of perturbation: the input perturbation to convert the original sampled points to points in $\R^{2n}$ and the distortion due to embedding in a lower dimensional subspace. For this, we need to analyze the effect of the input perturbation.  We prove the following result.
\begin{theorem}
Let $\mathscr{M}$ be a compact $k$-dimensional manifold in $\R^n$ having volume $\mathsf{V}$ and condition number $\kappa$. Let $\set{\bx_1, \cdots, \bx_m}$ be the points formed by finite data points of the underlying dimension $k$ as in~\figref{manifold}. Let $\mathsf{d}$ be the diameter of the datasets where the distance metric used to compute the diameter is the geodesic distance.
Let $\bPhi$ be a random projection that satisfies~\thmref{JL} such that 
\[ r \geq O \paren{\frac{k  \log (n \mathsf{V} \kappa) \log(1/\delta)}{ \varepsilon^{2}} }.  \]
Let $R$ and $R_{\bPhi}$ be the residual variance obtained when Isomap generates a $k$-dimensional embedding of $\mathbf{X}$ and $\bPhi \mathbf{X}$, respectively. Then we can construct a graph such that with probability at least $1-\delta$, 
\[ R_{\bPhi} \leq R + O \paren{c\varepsilon \mathsf{d}^2 + c\varepsilon\paren{ \frac{\ln(4/\beta)\sqrt{16 r \log (2/\beta)}}{\alpha}}^2 } .\]
\end{theorem}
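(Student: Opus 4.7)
The plan is to compare, stage by stage, the geodesic distance matrix used by \emph{Isomap} when it is run on the original samples $\{\by_i\}$ versus on the projected perturbed samples $\{\bPhi \bx_i\}$, and then translate the resulting distance matrix perturbation into a bound on the residual variance using classical MDS perturbation theory. Since residual variance is defined as $R = 1 - \rho^2(\widehat{D}_\mathscr{M}, D_Y)$ where $\widehat{D}_\mathscr{M}$ is the geodesic distance matrix and $D_Y$ is the output Euclidean distance matrix, it suffices to bound the distortion between the two geodesic distance matrices entry-wise; the residual variance then changes by at most a quantity proportional to the sum-of-squared entries of the perturbation matrix.

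First I would quantify the effect of the \emph{input perturbation} $\by_i \mapsto \bx_i = (\sigma_{\mathsf{min}} \mathbf{e}_i, \by_i)^{\mathsf T}$. Because the basis vectors $\mathbf{e}_i$ are orthogonal and have disjoint supports,
\[
\|\bx_i - \bx_j\|_2^2 = \|\by_i - \by_j\|_2^2 + 2\sigma_{\mathsf{min}}^2
\qquad (i \neq j).
\]
Consequently, for the $k$-nearest-neighbor graph that Isomap constructs on the perturbed samples, each edge weight differs from the corresponding edge weight on $\{\by_i\}$ by at most an additive $\sqrt{2}\sigma_{\mathsf{min}}$, and shortest-path (hence estimated geodesic) distances change by at most $\sqrt{2}\sigma_{\mathsf{min}}$ per edge hop. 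After squaring (which is what enters the residual variance) each entry is perturbed by at most $O(\sigma_{\mathsf{min}}^2 + \sigma_{\mathsf{min}} \mathsf{d})$, and the aggregate Frobenius-style perturbation of the squared-distance matrix is controlled by a constant multiple of $\sigma_{\mathsf{min}}^2$ times the number of terms times $\mathsf{d}^2$.

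Next I would apply the random projection $\bPhi$ and invoke~\thmref{geodesic} of Baraniuk and Wakin: for the stated bound on $r$, with probability at least $1-\delta$, all pairwise geodesic distances between the perturbed points are preserved by $\bPhi$ up to a multiplicative factor $(1\pm \varepsilon)\sqrt{r/n}$. Rescaling by $\sqrt{n/r}$ to match Isomap's internal normalization, this means every squared geodesic distance satisfies
\[
\bigl| \|\bPhi \bx_i - \bPhi \bx_j\|_{\mathsf{Geo}}^2 - \|\bx_i - \bx_j\|_{\mathsf{Geo}}^2 \bigr| \leq O(\varepsilon) \|\bx_i - \bx_j\|_{\mathsf{Geo}}^2 \leq O(\varepsilon)\bigl(\mathsf{d}^2 + \sigma_{\mathsf{min}}^2\bigr),
\]
using the bound from the previous step on the diameter of the perturbed dataset.

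Finally, I would use a standard MDS perturbation argument to convert this entrywise distance-matrix bound into a residual-variance bound. The residual variance produced by Isomap on a distance matrix $D$ equals $1 - \rho(D, D_Y)^2$ where $D_Y$ is the Euclidean distance matrix obtained from the low-dimensional embedding; a Weyl-type / Davis--Kahan perturbation of the doubly-centered squared distance matrix shows that replacing $D$ by $D + \Delta$ changes the residual variance by at most a constant times the largest entry of $\Delta$. Combining the input-perturbation bound ($O(\sigma_{\mathsf{min}}^2)$) with the projection-distortion bound ($O(\varepsilon\mathsf{d}^2 + \varepsilon\sigma_{\mathsf{min}}^2)$) and recalling that $\sigma_{\mathsf{min}} = \frac{\ln(4/\beta)\sqrt{16 r \log(2/\beta)}}{\alpha}$ then delivers the claimed inequality
\[
R_{\bPhi} \leq R + O\!\left( \varepsilon \mathsf{d}^2 + \varepsilon\Bigl(\tfrac{\ln(4/\beta)\sqrt{16 r \log(2/\beta)}}{\alpha}\Bigr)^2\right).
\]
The main obstacle I expect is the last step: carefully justifying that the residual-variance functional is Lipschitz with respect to the squared geodesic distance matrix (essentially a Davis--Kahan argument on the centered Gram matrix) and showing that the approximation of geodesics by $k$-NN shortest paths on the perturbed sample remains valid despite the additive perturbation $\sigma_{\mathsf{min}}$, which may slightly violate the sampling-density hypotheses required by the Bernstein--de~Silva--Langford bound. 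Handling this requires a graph with sufficiently dense connectivity, which is why the theorem statement says ``we can construct a graph'' rather than asserting it for the naive $k$-NN graph.
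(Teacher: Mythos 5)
Your proposal follows essentially the same route as the paper: approximate geodesics via a Bernstein--de~Silva--Langford-style graph on the projected points, bound the entrywise perturbation of the squared geodesic distance matrix by $O(\varepsilon(\mathsf{d}^2+\sigma_{\mathsf{min}}^2))$ (combining the $2\sigma_{\mathsf{min}}^2$ input shift with the $(1\pm\varepsilon)$ distortion from Theorem~\ref{thm:geodesic}), and then push that through an eigenvalue-perturbation bound on the doubly-centered Gram matrix to control the trailing eigenvalues that make up the residual variance. The only cosmetic differences are that you phrase the last step via Weyl/Davis--Kahan on the correlation-based definition of residual variance while the paper does a first-order eigenvalue expansion with an explicit $\pm 1$ error matrix, and both versions share the same delicate point you flagged, namely the dimension-dependence of the Lipschitz constant of the residual-variance functional.
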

In concise, the above theorem shows that we incur an additive error of $c\varepsilon \paren{ \frac{\ln(4/\beta)\sqrt{16 r \log (2/\beta)}}{\alpha}}^2$ in the residual variance due to the privacy requirement.
\begin{proof}
The proof proceeds by using an appropriate graph that uses the randomly projected sampled points as the vertices and then computes the bound on the overall residual variance incurred due to the errors in estimating the geodesic distances. The graph $G$ that we use is similar in spirit to the construction of Bernstein {\it et al.}~\cite{BDLT00}. Such graphs has the sampled points as its vertices.

For a graph $G$, we define a term {\em graph distance} between two vertices.  For two vertices $\mathbf{u},\mathbf{v}$ in the graph $G$, we denote the graph distance by $\| \mathbf{u} - \mathbf{v} \|_G$. The graph distance between two points $\mathbf{u}, \mathbf{v}$ is defined as 
\[ \| \mathbf{u} - \mathbf{v} \|_G := \min_P \set{ \|\mathbf{u}-\bx_{i_1} \|_2 + \cdots + \| \bx_{i_p}\mathbf{v} \|_2 }, \]
where the minimum is over all paths $P$ in the graph $G$  between $\mathbf{u}$ and $\mathbf{v}$ and the path $P$ has vertices $\mathbf{u}, \bx_{i_1}, \cdots, \bx_{i_p}, \mathbf{v}$. Using Theorem A of~\cite{BDLT00}, we have for some constant $\vartheta$, we have 
\[  (1-\vartheta) \| \mathbf{u} - \mathbf{v} \|_{\mathsf{Geo}} \leq  \| \mathbf{u} - \mathbf{v} \|_G \leq (1+ \vartheta)  \| \mathbf{u} - \mathbf{v} \|_{\mathsf{Geo}}. \]

Let $\widetilde{\bx}_1 = \bPhi \bx_1, \cdots, \widetilde{\bx}_m = \bPhi \bx_m$ be the points corresponding to the embedding. Let us denote the corresponding graph whose vertices are  the set $\set{\widetilde{\bx}_1, \cdots, \widetilde{\bx}_m}$ by $\bPhi G$. The edges of the graph is defined in a way so as to satisfy the requirements of Theorem A of~\cite{BDLT00} and suit our purpose as well. As in the case of the graph $G$, we define the distance on the new graph $\bPhi G$ formed by the vertices corresponding to the embedded points. 
\[ \| \mathbf{w} - \mathbf{z} \|_{\bPhi G} := \min_P \set{ \|\mathbf{w}-\widetilde{\bx}_{i_1} \|_2 + \cdots + \| \widetilde{\bx}_{i_p}\mathbf{z} \|_2 }. \]

In order to fully describe the graph $\bPhi G$, we need to define the edges present in the graph. We connect two nodes in $\bPhi G$ be an edge if $\| \mathbf{w} - \mathbf{z} \|_2 \leq (1+\eta) \varepsilon. $ It is important to note that we define the edges depending on the Euclidean distance. 

Now we use this construct of graph $\bPhi G$ to define another graph $G'$ whose vertices are the original sampled point. The graph $G'$ has an edge between two vertices only if there is an edge between the vertices corresponding to their embeddings, i.e., $\mathbf{u}$ and $\mathbf{v}$ are connected in $G'$ if and only if $\| \bPhi \mathbf{u} - \bPhi \mathbf{v} \|_2 \leq (1+\eta) \varepsilon. $ By our construction, $G'$ also meets the requirements for Theorem A of~\cite{BDLT00}. We can now finish the first part of the proof.

Let $P'$ be a path in $G'$ joining two vertices $\mathbf{u}$ and $\mathbf{v}$. Then by our construction, there is a path $P$ in $\bPhi G$ along the vertices corresponding to the embedding of the vertices along the path $P'$ in the graph $G'$. Let the path $P'$ consist of vertices $\mathbf{u}, \bx_{i_1}, \cdots, \bx_{i_p}, \mathbf{v}$. Then 
 \begin{align*}
  \| \widetilde{\mathbf{u}} - \widetilde{\bx}_{i_1} \|_2 + \cdots + \| \widetilde{\bx}_{i_p}-\widetilde{\mathbf{v}} \|_2   &= \| \bPhi {\mathbf{u}} -\bPhi {\bx}_{i_1} \|_2 + \cdots + \| \bPhi {\bx}_{i_p}- \bPhi {\mathbf{v}} \|_2 \\
  			&\leq (1 + \eta) \paren{ \|  {\mathbf{u}} - {\bx}_{i_1} \|_2 + \cdots + \|  {\bx}_{i_p}-  {\mathbf{v}} \|_2}. 
 \end{align*}
This holds for every path, and similar lower bound holds for every path. It therefore, follows that 
\[  (1-\eta) \| \mathbf{u} - \mathbf{v} \|_{G'} \leq  \| \mathbf{u} - \mathbf{v} \|_{\bPhi G}\leq (1+ \eta)  \| \mathbf{u} - \mathbf{v} \|_{G'}. \]

Using the result of~\cite{BDLT00}, we have the following.
\[  (1-\eta)(1-\vartheta) \| \mathbf{u} - \mathbf{v} \|_{\mathsf{Geo}} \leq  \| \mathbf{u} - \mathbf{v} \|_{\bPhi G}\leq (1+ \eta) (1+\vartheta)  \| \mathbf{u} - \mathbf{v} \|_{\mathsf{Geo}}. \]

Adjusting $\eta$ and $\vartheta$, we have the required bound on the geodesic distances. Therefore, the graph distances of $\bPhi G$ (which can be computed purely from the embedded points) provide a fairly good  approximation of the geodesic distance on the manifold.
 
For the second part, let $\bB$ be the matrix formed with entries $(a,b)$ representing the squared geodesic distances between $\bx_a$ and $\bx_b$. Let $\mathbf{I}$ be the corresponding matrix computed by the Isomap algorithm. From the analysis of Isomap, we know that the entries $(a,b)$ is
\[ \mathbf{I}_{ab} = - \frac{ \bB_{ab}^2 - \frac{1}{m}\sum_a \bB_{ab}^2 - \frac{1}{m} \sum_b  \bB_{ab}^2 + \frac{1}{m^2}\sum_{a,b} \bB_{ab}^2 }{2}.  \]

Further, $\mathbf{I} = \mathbf{J}^{\mathsf T} \mathbf{J}$, where $\mathbf{J}$ is a $k \times m$ matrix. In the Isomap algorithm, the final step computes the $k$-dimensional representation of every point by performing the eigenvalue decomposition of $\mathbf{I}$ by projecting column vectors of $\mathbf{X}$ onto the subspace spanned by  eigenvectors corresponding to the $k$ largest eigenvalues. The residual variance is defined as the sum of the $m-k$ eigenvalues of $\mathbf{I}$. In the ideal situation, when $\mathbf{I}$ is rank $k$ matrix, the residual variance is zero. However, we have some error due to projection and  perturbation of the sampled points. From~\thmref{geodesic}, we know that $\bx$ has a distortion of $\varepsilon$ or $3 \varepsilon $ in the case of squared norm. Moreover, any column $\bx$ of $\mathbf{X}$ is formed by distorting $\by$ of the sampled point, which is by an additive term of $\sigma_{\mathsf{min}}^2$. Therefore, every term in $\mathbf{I}$ suffers a distortion of at most $6 \varepsilon (\mathsf{d}^2 + \sigma_{\mathsf{min}}^2)$, where $\mathsf{d}$ is the square root of the largest entry of $\mathbf{B}$ (or the diameter of the sampled points). Therefore the matrix $\mathbf{I}$ varies as 
\[ \mathbf{I} (\varepsilon) = \mathbf{I} + 6 \varepsilon (\mathsf{d}^2 + \sigma_{\mathsf{min}}^2) \]
as a function of the distortion $\varepsilon.$ This implies that the eigenvalues can be  represented by the following equation:
\[ \lambda_i (\varepsilon) = \lambda_i + 6 \varepsilon (\mathsf{d}^2 + \sigma_{\mathsf{min}}^2) \mathbf{v}^{\mathsf T}_i \mathbf{E} \mathbf{v}_i, \]
where $\set{\mathbf{v}_i}_{i=1}^m$ are the eigenvectors and $\mathbf{E}$ is a matrix with entries $\pm 1$.  Hence the residual variance as a function of $\varepsilon$ can be written as :
\begin{align*}
	R_\bPhi = R(\varepsilon) &= \sum \lambda_i(\varepsilon) \\
				&= \paren{\sum_{i=k+1}^m \lambda_i} + 6 \varepsilon ({\mathsf d}^2+ \sigma_{\mathsf{min}}^2) \paren{\sum_{i=k+1}^n \mathbf{v}^{\mathsf T}_i \mathbf{E} \mathbf{v}_i} \\
				&= R + 6 \varepsilon ({\mathsf d}^2+ \sigma_{\mathsf{min}}^2) \paren{\sum_{i=k+1}^n \mathbf{v}^{\mathsf T}_i \mathbf{E} \mathbf{v}_i} \\
				& \leq R + 6 \varepsilon ({\mathsf d}^2+ \sigma_{\mathsf{min}}^2)  (m-k) \Lambda \\ 
				&< R + 6 m\varepsilon ({\mathsf d}^2+ \sigma_{\mathsf{min}}^2)   \Lambda,
\end{align*}where $\Lambda$ is the largest eigenvalue of $\mathbf{E}$.
Therefore, the average embedding distortion per sample point under the effect of random embedding and input perturbation (to preserve privacy) varies by at most $6\varepsilon ({\mathsf d}^2+ \sigma_{\mathsf{min}}^2)   \Lambda $. Note that $\Lambda$ is bounded by a constant as $\mathbf{E}$ is a bounded linear operator. This gives us the result of the theorem.
\end{proof}

The proof is now complete. \end{proof}
\end{versionC}
\begin{versionA}
\paragraph{Differential Privacy in the Streaming Model.}
\end{versionA}
\begin{versionC}
\subsection{Differentially private streaming algorithm}
\end{versionC}
Upadhyay~\cite{Upadhyay14} gave a method to convert a non-private streaming algorithm to a differentially private algorithm so that the resulting algorithm requires almost optimal space. However, the update time of their algorithm is slow by a linear factor in comparison to the non-private algorithms. Using~\thmref{DP} and combining with the results of Upadhyay~\cite{Upadhyay14}, we improve the update time of differentially private streaming algorithms to be the same as the non-private algorithms.
\begin{theorem}
	Given $\bA \in \R^{m\times n}, \mathbf{b} \in \R^n$. There is a privacy preserving data-structure that can be updated in time $O(m+\sqrt{m/\varepsilon} \log (1/\delta))$ and requires $O(m^2 \varepsilon^{-1} \log(1/\delta) \log (mn))$ bits   to solve linear regression of $\bA$ with respect to $\mathbf{b}$  in the turnstile model with $(\alpha,\beta+\delta)$-differential privacy.  The additive error incurred is $4{\sigma_\mathsf{min}\alpha} \| \bA^{\dagger} \mathbf{b} \|_F^2.$
\end{theorem}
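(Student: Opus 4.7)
The plan is to combine \thmref{DP} with the streaming framework of Upadhyay~\cite{Upadhyay14}, substituting our fast distribution $\bPhi$ from~\figref{basic} for the dense Gaussian matrix used there. For linear regression in the turnstile model, the classical sketch-and-solve paradigm maintains $\bPhi \bA$ and $\bPhi \mathbf{b}$ with $\bPhi \in \R^{r \times n}$ being a subspace embedding of dimension $r = O(m \varepsilon^{-1} \log(1/\delta))$, then outputs $\widehat{\bx} = \arg\min_{\bx} \|\bPhi \bA \bx - \bPhi \mathbf{b}\|_2^2$. The error guarantee $\|\bA \widehat{\bx} - \mathbf{b}\|_2 \leq (1+\varepsilon)\|\bA \bA^\dagger \mathbf{b} - \mathbf{b}\|_2$ follows from the subspace embedding property applied to the $(m{+}1)$-dimensional column space of $[\bA \mid \mathbf{b}]$, which in turn follows from~\thmref{newsparse} via the standard net argument over $\bS^{m}$.

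For privacy, the first step is to enforce the spectral condition of~\thmref{DP}. Following the augmentation idea used in the manifold learning proof, we append $\sigma_{\mathsf{min}} \I_m$ to $\bA$ so that the resulting matrix has all singular values at least $\sigma_{\mathsf{min}} = \ln(4/\beta)\sqrt{16 r \log(2/\beta)}/\alpha$. Applying $\bPhi$ to this augmented matrix and to $\mathbf{b}$ (treated as an extra column) and releasing the sketch is $(\alpha, \beta+\delta)$-differentially private by~\thmref{DP}, since neighbouring instances differ by a rank-$1$ perturbation of unit norm in $[\bA \mid \mathbf{b}]$. Because solving the downstream regression is a post-processing step on the released sketch, the overall mechanism inherits the same privacy guarantee. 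The additive error $4 \sigma_{\mathsf{min}} \alpha \| \bA^\dagger \mathbf{b}\|_F^2$ comes from propagating the shift $\sigma_{\mathsf{min}} \I$ through the closed-form solution of the projected least-squares problem and bounding the resulting bias, exactly as in the analogous analysis for regression in~\cite{Upadhyay14}.

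The streaming implementation uses the structured form $\bPhi = \bP \bPi \bW \bD$. For a turnstile update to entry $(i,j)$ of $\bA$ by amount $\Delta$, the sketch changes by $\Delta \cdot (\bPhi \mathbf{e}_j) \mathbf{e}_i^{\mathsf T}$, so we only need to compute the $j$-th column of $\bPhi$. This column is obtained by applying the sparse $\bP$ to a single column of the randomized Hadamard block $\mathbf{G}$ from~\thmref{DKS}, whose column sparsity is $O(b) = O(\sqrt{m/\varepsilon} \log(1/\delta))$ for the parameter regime at hand. Combined with the $O(m)$ cost of updating the stored $r \times m$ sketch along the row $\mathbf{e}_i^{\mathsf T}$, this yields the claimed per-update time $O(m + \sqrt{m/\varepsilon}\log(1/\delta))$. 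The space bound follows from storing the $r \times m$ sketch using $O(\log(mn))$ bits of precision per entry, together with the $2n + n\log n$ random bits needed to describe $\bPhi$ by~\thmref{newsparse}.

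The main obstacle is verifying that the singular value shift really suffices: under the augmentation $[\bA \mid \sigma_{\mathsf{min}}\I]$, the released sketch is $\bPhi[\bA \mid \sigma_{\mathsf{min}}\I]^{\mathsf T}$, and one must check that the column-space subspace embedding guarantee still lets us recover a near-optimal regression solution after ignoring the augmented coordinates. This requires writing the normal equations for the perturbed system and showing that the bias term is controlled by $\sigma_{\mathsf{min}} \|\bA^\dagger \mathbf{b}\|_F^2$; the remaining pieces (privacy, subspace embedding, and streaming update counts) are direct invocations of~\thmref{DP}, \thmref{newsparse}, and the sparsity of $\mathbf{G}$ from~\thmref{DKS}, respectively.
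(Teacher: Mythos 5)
Your proposal follows essentially the same route as the paper, which itself only sketches this result: lift the spectrum of the (augmented) input above the threshold of Theorem~\ref{thm:DP} so that the sketch $\bPhi\bA^{\mathsf T}$ is private, inherit the utility and additive-error analysis from the linear-regression framework of Upadhyay~\cite{Upadhyay14}, and obtain the update time from the sparsity of the structured transform. The one caveat is that your derivation of the $O(\sqrt{m/\varepsilon}\log(1/\delta))$ term from the column sparsity $b$ of $\mathbf{G}$ is not clearly supported by the paper's parameter settings (where $b$ is only polylogarithmic in $r$ and $1/\delta$ times $\varepsilon^{-1}$), but the paper provides no derivation of this bound either, so this does not constitute a divergence from its argument.
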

Similar improvement for differentially private matrix multiplication of Upadhyay~\cite{Upadhyay14} is straightforward.
We give the description of these private algorithms  in the full version. 

\begin{versionC}
Formally, we have the following theorem.
\begin{theorem}
	Let $\bA$ be an $m_1 \times n$ matrix and $\mathbf{B}$ be an $m_2 \times n$ matrix. There is a privacy preserving data-structure to compute matrix multiplication $\bA \mathbf{B}^{\mathsf T}$  in the turnstile model that guarantees $(\alpha,\beta)$-differential privacy. The underlying data-structure used by the algorithm requires $O(m \varepsilon^{-2} \log(1/\delta) \log (mn))$ bits and  can be updated in time $O(m \log (\varepsilon^{-2}  \log (1/\delta)))$, where $m =\max \set{m_1,m_2}$. The additive error incurred is $\sigma_\mathsf{min}\alpha  \sqrt{N} \|\mathbf{B}\|_F$.
\end{theorem}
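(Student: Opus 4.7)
The plan is to combine the sketching framework of Upadhyay~\cite{Upadhyay14} with the new sparse distribution $\bPhi$ from~\figref{basic}. The data-structure will maintain, in the streaming fashion, two sketches $\mathbf{S}_\bA=\bPhi \widehat{\bA}{}^{\mathsf T}$ and $\mathbf{S}_\mathbf{B}=\bPhi \widehat{\mathbf{B}}{}^{\mathsf T}$ of appropriately augmented versions $\widehat{\bA}$ and $\widehat{\mathbf{B}}$ of the input matrices, where the augmentation is done by appending $\sigma_{\min}\I$ blocks in a coordinate range disjoint from the stream so that the singular values of $\widehat{\bA}$ and $\widehat{\mathbf{B}}$ are at least the threshold required by~\thmref{DP}. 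On a query, the algorithm releases $\mathbf{S}_\bA^{\mathsf T}\mathbf{S}_\mathbf{B}$ as the estimate of $\bA\mathbf{B}^{\mathsf T}$.

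First I would analyze the update cost. A turnstile update alters one entry $(i,j)$ of $\bA$ or of $\mathbf{B}$, which corresponds to adding a scalar multiple of the $j$-th column of $\bPhi$ to the $i$-th column of the corresponding sketch. Using the sparse construction $\bPhi=\bP\bPi\mathbf{G}$ from~\secref{run}, column $j$ of $\mathbf{G}$ has at most $b$ non-zero entries, and after the permutation $\bPi$ and the block-sparse matrix $\bP$ one obtains a column of $\bPhi$ whose support has size $O(\min\{b,r\})$, which matches the advertised update time. The total bit-complexity of storing the sketches is $r\cdot m=O(m\varepsilon^{-2}\log(1/\delta))$ floating-point entries, giving the stated space bound after accounting for the bit-precision $O(\log(mn))$.

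Second I would handle privacy. Since neighbouring streams produce matrices $\bA,\widetilde{\bA}$ differing by a rank-$1$ matrix of Frobenius norm at most $1$, the augmented matrices $\widehat{\bA},\widehat{\widetilde{\bA}}$ also differ only in a rank-$1$ update (the identity padding is identical in both executions), so neighbourliness is preserved. Applying~\thmref{DP} to $\mathbf{S}_\bA$ shows that this sketch is $(\alpha_0,\beta_0+\delta)$-differentially private for appropriate $\alpha_0,\beta_0$. The same statement holds for $\mathbf{S}_\mathbf{B}$. Invoking the advanced composition theorem (Theorem~\ref{thm:DRV10}) on the two releases, and choosing $\alpha_0,\beta_0$ so that the composed parameters equal $(\alpha,\beta)$, gives the claimed $(\alpha,\beta)$-differential privacy for the joint output, which is a post-processing of the two sketches.

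Finally, for utility I would apply the standard approximate matrix multiplication guarantee that follows from~\thmref{JL}: for any two fixed matrices $\bC,\bD$ of compatible dimensions, $\|\bC^{\mathsf T}\bPhi^{\mathsf T}\bPhi\bD-\bC^{\mathsf T}\bD\|_F \leq \varepsilon \|\bC\|_F\|\bD\|_F$ with probability $1-\delta$ for $r=O(\varepsilon^{-2}\log(1/\delta))$. Setting $\bC=\widehat{\bA}{}^{\mathsf T}$ and $\bD=\widehat{\mathbf{B}}{}^{\mathsf T}$ and using that $\|\widehat{\bA}\|_F\leq \sqrt{N}$ for an appropriate stream-size parameter $N$ after the $\sigma_{\min}$ augmentation yields the additive error $\sigma_{\min}\alpha\sqrt{N}\|\mathbf{B}\|_F$ after rescaling $\varepsilon$ to absorb the constants from~\thmref{DP}. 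The main obstacle is the privacy step: one must verify that the augmentation used to enforce the spectral hypothesis of~\thmref{DP} is performed deterministically and independently of the private data (so that it is free from the point of view of differential privacy), and that applying the composition theorem across the two sketches does not force the per-sketch parameters to be so small as to blow up $\sigma_{\min}$; a careful choice of $\alpha_0=\Theta(\alpha/\sqrt{\log(1/\beta)})$ keeps $\sigma_{\min}$ on the scale claimed in the theorem.
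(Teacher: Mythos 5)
Your proposal is correct and follows essentially the same route as the paper, which treats this theorem as a direct corollary of Theorem~\ref{thm:DP} plugged into the streaming framework of Upadhyay~\cite{Upadhyay14}: augment the streamed columns with a data-independent $w\mathbf{e}_a$ block to lift the singular values above the threshold, sketch both augmented matrices with the new $\bPhi$, invoke Theorem~\ref{thm:DP} for each sketch together with composition, and inherit the additive error from the standard JL-based approximate matrix multiplication analysis. The only cosmetic difference is that the paper streams whole columns and defers the detailed update-time and error accounting to \cite{Upadhyay14}, whereas you spell out the per-entry sparsity argument explicitly.
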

\end{versionC}
\begin{theorem} \label{thm:DP2}
	If the singular values of an $m \times n$ input matrix $\bA$ are at least $\frac{16 r \log (2/\beta)\ln(4/\beta)}{\alpha (1+\varepsilon)}$, then publishing $\bPhi^{\mathsf T} \bPhi \bA^{\mathsf T}$, where $\bPhi$ is as in~\figref{basic} preserves $(\alpha, \beta+\delta )$-differential privacy. 
\end{theorem}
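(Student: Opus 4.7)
\begin{proofof}{Theorem~\ref{thm:DP2}}
The plan is to reduce this two-application statement to the single-application analysis underlying~\thmref{DP} by conditioning on the ``outer'' randomness, and then invoke an analogue of the Upadhyay~\cite{Upadhyay14} two-Gaussian argument on the ``inner'' block structure. Write $\bPhi=\tfrac{1}{\sqrt{r}}\bP\,\mathbf{Q}$ where $\mathbf{Q}=\bPi\bW\bD$ is orthogonal. Then
\[
\bPhi^{\mathsf T}\bPhi\bA^{\mathsf T} \;=\; \tfrac{1}{r}\,\mathbf{Q}^{\mathsf T}\bP^{\mathsf T}\bP\,\mathbf{Q}\,\bA^{\mathsf T}.
\]
Condition on $\bW,\bD,\bPi$; these are sampled independently of $\bA$, and once they are fixed, $\mathbf{Q}$ is a fixed orthogonal transformation. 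Since left-multiplication by $\mathbf{Q}^{\mathsf T}/r$ is then a data-independent post-processing step, it suffices to show that, for each fixed $\mathbf{Q}$ in a $(1-\delta)$-probability event, the mechanism $\bA\mapsto\bP^{\mathsf T}\bP\,\mathbf{Q}\bA^{\mathsf T}$ is $(\alpha,\beta)$-differentially private with randomness over $\bP$.

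Next I would transfer the singular-value hypothesis through $\mathbf{Q}$. By~\lemref{singular} (and \corref{singular}), with probability at least $1-\delta$ the matrix $\mathbf{Q}$ is a near-isometry, so the matrix $\bY := \mathbf{Q}\bA^{\mathsf T}$ inherits singular values at least $\sqrt{1-\varepsilon}\,\sigma_{\min}(\bA)\geq \tfrac{16 r\log(2/\beta)\ln(4/\beta)}{\alpha\sqrt{1+\varepsilon}}$, and the neighboring perturbation $\mathbf{Q}\bE^{\mathsf T}$ remains rank-$1$ with Euclidean norm at most $\sqrt{1+\varepsilon}$. Thus it suffices to prove the two-application privacy statement: for any pair of fixed matrices $\bY,\widetilde{\bY}$ differing by a rank-$1$ unit-norm perturbation whose minimum singular values meet the above bound, the release $\bP^{\mathsf T}\bP\,\bY$ is $(\alpha,\beta)$-differentially private with randomness over $\bP$.

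Here is where the structure of $\bP$ becomes essential. Because $\bP$ is block-diagonal with one Gaussian row $\bPhi_i\sim\cN(\mathbf{0},\I_t)$ per block of $t=n/r$ coordinates, $\bP^{\mathsf T}\bP$ is block-diagonal with rank-$1$ blocks $\bPhi_i^{\mathsf T}\bPhi_i$. Consequently the $i$-th block of rows of the output is exactly $\bPhi_i^{\mathsf T}\bigl(\bPhi_i\,\bY_{(i)}\bigr)$, where $\bY_{(i)}$ is the corresponding block of rows of $\bY$. Each block is therefore a single Gaussian ``sketch and re-project'' of a sub-matrix with the same minimum singular value lower bound as $\bY$, which is precisely the setting of Upadhyay's two-Gaussian analysis~\cite{Upadhyay14}. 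I would follow that pdf-ratio argument row-by-row within each block: reduce to the probability density of $\bPhi_i^{\mathsf T}\bPhi_i\bY_{(i)}$, bound the determinantal term by the Lidskii inequality (as in the proof of~\thmref{DP}), and bound the exponential term using a Gaussian tail estimate exploiting that $\sigma_{\min}(\bY_{(i)})$ is at least a factor of $r$ larger than required in~\thmref{DP}. The rank-$1$ perturbation restricted to block $i$ has norm at most $\|\mathbf{Q}\bE^{\mathsf T}\|\leq\sqrt{1+\varepsilon}$, so only $O(1)$ blocks are perturbed nontrivially; the remaining blocks contribute identically under $\bY$ and $\widetilde{\bY}$. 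Composing the per-block guarantees through~\thmref{DRV10} across the $r$ blocks yields the claimed $(\alpha,\beta)$-bound, and summing the failure probabilities of~\lemref{singular} and the tail events adds the extra $\delta$.

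The main obstacle I anticipate is ruling out a privacy leak from the \emph{dependence} between the two occurrences of $\bP$: conditioning on $\bPhi_i\bY_{(i)}$ (the ``inner'' sketch) leaves residual randomness in $\bPhi_i$ that must still mask the rank-$1$ perturbation when the ``outer'' $\bPhi_i^{\mathsf T}$ is applied. Controlling this requires the stronger singular-value lower bound $\sigma_{\min}\geq 16r\log(2/\beta)\ln(4/\beta)/(\alpha(1+\varepsilon))$, which is exactly the hypothesis of the theorem; the factor of $\sqrt{r}$ gained over~\thmref{DP} precisely absorbs the additional conditioning. Verifying this absorption quantitatively inside each block, and checking that the block-diagonal sum does not blow up the composition constant, will be the technical crux.
\end{proofof}
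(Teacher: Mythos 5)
Your proposal is correct and follows essentially the same route as the paper: the paper's proof is exactly the reduction you describe, namely peeling off the orthogonal $\mathbf{Q}=\bPi\bW\bD$ as post-processing, using \lemref{singular}/\corref{singular} to transfer the singular-value hypothesis to the block-restricted matrix $\mathbf{B}\bA^{\mathsf T}$, and then invoking Upadhyay's $\mathbf{R}^{\mathsf T}\mathbf{R}\mathbf{C}^{\mathsf T}$ theorem with $\mathbf{R}=\bP_1$ (the paper states this as a one-line substitution rather than re-deriving the pdf-ratio argument per block). One slip to fix: since $\mathbf{Q}\mathbf{e}_i$ has all entries of magnitude $1/\sqrt{n}$, the rank-one perturbation is \emph{not} confined to $O(1)$ blocks --- every one of the $r$ blocks is perturbed, each with norm $\Theta(1/\sqrt{r})$, and it is precisely this $1/\sqrt{r}$ per-block norm that makes the composition over all $r$ blocks via \thmref{DRV10} close at $(\alpha,\beta)$.
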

\begin{versionA}
We present a proof of this theorem in the full version. 
\end{versionA}

The proof of the above theorem is fairly straightforward by invoking the second main result of Upadhyay~\cite{Upadhyay14} which is as follows.
\begin{theorem} \label{thm:PSG2} {\em (Upadhyay~\cite{Upadhyay14})}.
	Let $\mathbf{R}$ be a random matrix such that every entry of $\mathbf{R}$ is sampled independently and identically from $\cN(0,1)$. If the singular values of an  input matrix $\mathbf{C}$ are at least $\frac{16 r \log (2/\beta)\ln(4/\beta)}{\alpha}$, then publishing $\mathbf{R}^{\mathsf T}\mathbf{R} \mathbf{C}^{\mathsf T}$ is $(\alpha, \beta)$-differentially private. 
\end{theorem}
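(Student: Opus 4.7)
The plan is to reduce Theorem~\ref{thm:DP2} to the dense-Gaussian case handled by Theorem~\ref{thm:PSG2}, by exploiting the orthogonal-times-Gaussian factorization of $\bPhi$. Write $\bPhi = \tfrac{1}{\sqrt{r}}\bP\mathbf{M}$ where $\mathbf{M} := \bPi\bW\bD$ is a product of orthogonal matrices (hence orthogonal as a whole), so that $\bPhi^{\mathsf T}\bPhi\bA^{\mathsf T} = \tfrac{1}{r}\mathbf{M}^{\mathsf T}\bP^{\mathsf T}\bP\,\mathbf{M}\bA^{\mathsf T}$. Because $\mathbf{M}$ is sampled independently of $\bA$, left-multiplication by $\mathbf{M}^{\mathsf T}$ is a deterministic (data-independent) transformation of the output, and by the post-processing property of differential privacy it suffices to argue that $\bP^{\mathsf T}\bP\,\mathbf{C}$ is $(\alpha,\beta+\delta)$-differentially private, where $\mathbf{C} := \mathbf{M}\bA^{\mathsf T}$.

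Next I would condition on $\mathbf{M}$ and treat $\mathbf{C}$ as the fixed input to the Gaussian mechanism. Write $\bP$ in terms of its independent Gaussian row-pieces $\bPhi_1,\dots,\bPhi_r$ (each $\bPhi_i\sim\cN(0,\I_{n/r})$). The block-rectangular structure of $\bP$ makes the $i$-th row-block of $\bP^{\mathsf T}\bP\,\mathbf{C}$ (rows $(i-1)(n/r)+1$ through $i(n/r)$) equal to $\bPhi_i^{\mathsf T}\bPhi_i\,\mathbf{C}_i$, where $\mathbf{C}_i = \widetilde{\mathbf{M}}_i\bA^{\mathsf T}$ is the corresponding $(n/r)\times m$ row-block of $\mathbf{C}$. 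Because the $\bPhi_i$ are mutually independent, the output decomposes as an $r$-fold independent composition, each piece having exactly the algebraic form $\mathbf{R}^{\mathsf T}\mathbf{R}\,\mathbf{C}^{\mathsf T}$ demanded by Theorem~\ref{thm:PSG2} (with $\mathbf{R} = \bPhi_i$ and $\mathbf{C} = \bA\widetilde{\mathbf{M}}_i^{\mathsf T}$).

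To invoke Theorem~\ref{thm:PSG2} on each block I need $\sigma_{\min}(\bA\widetilde{\mathbf{M}}_i^{\mathsf T})$ to meet the PSG2 threshold. Here the orthogonality of $\mathbf{M}$ pays off: since $\mathbf{M}\mathbf{M}^{\mathsf T}=\I$, the sub-matrix $\widetilde{\mathbf{M}}_i$ has orthonormal rows, so $\widetilde{\mathbf{M}}_i^{\mathsf T}$ embeds $\R^{n/r}$ isometrically into $\R^n$, and therefore $\sigma_{\min}(\bA\widetilde{\mathbf{M}}_i^{\mathsf T}) \ge \sigma_{\min}(\bA)$. The spectral hypothesis of Theorem~\ref{thm:DP2} then feeds directly into Theorem~\ref{thm:PSG2}; the $(1+\varepsilon)$ slack in the denominator is what Corollary~\ref{cor:singular} (or Lemma~\ref{lem:singular}) buys us if one prefers to route through the near-isometry version rather than the exact orthogonality argument, and it also absorbs the conditioning on the high-probability spectral event from Lemma~\ref{lem:zero}. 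Combining the $r$ per-block guarantees by the advanced composition theorem (Theorem~\ref{thm:DRV10}) yields the final $(\alpha,\beta)$-guarantee, and absorbing the failure probability of the spectral event contributes the extra $\delta$ in the privacy parameter.

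The main obstacle I expect is the parameter bookkeeping in the last step: matching the $r$ appearing inside the singular-value threshold of Theorem~\ref{thm:DP2} against the per-block PSG2 threshold after paying the composition loss of Theorem~\ref{thm:DRV10}, uniformly over the conditioning on $\mathbf{M}$. A secondary subtlety is that $\mathbf{M}$ is shared between the ``inner'' mechanism $\bP^{\mathsf T}\bP\,\mathbf{C}$ and the ``outer'' map $X\mapsto \mathbf{M}^{\mathsf T} X$, so rather than appealing to black-box post-processing one should phrase the reduction as a joint argument that conditions on $\mathbf{M}$ and integrates the resulting per-$\mathbf{M}$ differential-privacy inequality against $\mathbf{M}$'s distribution, using orthogonality of $\mathbf{M}$ to turn set-preimages into set-bijections.
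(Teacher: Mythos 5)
Your overall route is the paper's own: the paper's entire proof attached to this statement is the substitution $\mathbf{C}=\mathbf{B}\bA^{\mathsf T}$, $\mathbf{R}=\mathbf{P}_1$, i.e., strip off the structured factor $\bPi\bW\bD$, feed the Gaussian part into \thmref{PSG2} (which, note, neither you nor the paper proves --- it is imported from~\cite{Upadhyay14}), and supply the spectral hypothesis on the transformed input via \lemref{singular}/\corref{singular}; that conditioning is precisely where the $(1+\varepsilon)$ slack and the extra $\delta$ in \thmref{DP2} come from. Your per-block decomposition of $\bP^{\mathsf T}\bP$ followed by \thmref{DRV10} is a more explicit variant of the same reduction (and closer to how the paper argues \thmref{DP} row by row); the paper instead invokes \thmref{PSG2} once with $\mathbf{R}=\mathbf{P}_1$, so the $r$-dependence of the threshold is read off directly rather than re-derived through a second composition, which is exactly the bookkeeping you flag as your main obstacle.

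The genuine problem is the step you actually lean on: $\sigma_{\min}(\bA\widetilde{\mathbf{M}}_i^{\mathsf T})\ge\sigma_{\min}(\bA)$ from orthonormality of the rows of $\widetilde{\mathbf{M}}_i$. This is false in general. The map $\widetilde{\mathbf{M}}_i^{\mathsf T}$ does embed $\R^{t}$ ($t=n/r$) isometrically into $\R^n$, but $\bA$ is $m\times n$ with $m\le n$: the hypothesis that its $m$ singular values are at least $\sigma_{\mathsf{min}}$ does not lower bound $\|\bA v\|_2$ over all unit $v\in\R^n$ (the kernel of $\bA$ has dimension $n-m$), so restricting $\bA$ to the rotated $t$-dimensional coordinate block can yield arbitrarily small singular values, and necessarily a zero singular value whenever $t>m$. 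The inequality is only valid when $\bA$ has full column rank, essentially $m=n$. This is exactly why the paper routes the spectral bound through the probabilistic \lemref{singular}/\corref{singular} and why \thmref{DP2} carries the $+\delta$ and the $(1+\varepsilon)$: the ``near-isometry version'' you mention as an optional refinement is the required argument, not a preference. With that substitution made (together with the condition-on-$\bPi\bW\bD$ argument you already sketch for handling the shared randomness in the outer factor $\mathbf{M}^{\mathsf T}$), your reduction coincides with the paper's.
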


\begin{proof}
Substitute $\mathbf{C}=\mathbf{B} \bA^{\mathsf T}$ and $\mathbf{R}=\mathbf{P}_1$, where $\bA$ is as in~\thmref{DP2}, $\mathbf{B}$ is as in~\lemref{singular}, and $\mathbf{P}_1$ be as in~\secref{newJL}.
\end{proof}

\begin{versionC}
This leads to the following theorem.
\begin{theorem}
	Let $\lambda_1\geq \cdots \geq \lambda_{{\sf rk}(\bA)}$ be the singular values of $\bA$. Then for an over-sampling parameter $p$, there is a single-pass mechanism that compute the $k$-rank approximation $\bar{\bA}$ using  $O(k(n+d) \alpha^{-1} \log(nm))$ bits while preserving $(\alpha, \beta+\delta)$-differential privacy such that	
	\begin{align*} 
	(i) \quad \| A - \bar{\bA} \|_F  & \leq 
	 \paren{1 + \frac{k}{p-1}}^{1/2} \min_{rk(A')<k}\|\bA-\bA' \|_F + \frac{ 2k}{\alpha(1+\varepsilon)}\sqrt{\frac{(n+d) \ln (k/\beta) }{p}} , \qquad \text{and} \\
	(ii) \quad \| \bA - \bar{\bA} \|_2  & \leq 
	\paren{1 + \frac{k}{p-1}}^{1/2} \lambda_{k+1}  + \frac{e\sqrt{(k+p) \sum_{j>k} \lambda_j^2}}{p} + \frac{ 2\sqrt{k(n+d) \ln (k/\beta) }}{\alpha (1+\varepsilon)} .
	\end{align*} 
\end{theorem}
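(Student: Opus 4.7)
The plan is to obtain this theorem as a direct application of the privacy guarantee of \thmref{DP2} to the standard randomized SVD framework of Halko--Martinsson--Tropp, with our sketch $\bPhi$ playing the role of the oblivious subspace embedding and the spectral-floor requirement of \thmref{DP2} handled by a deterministic shift. I will set the embedding dimension to $r = k+p$ so that the sketch captures enough of the column-space for rank-$k$ approximation with oversampling parameter $p$.

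The first step is to reduce to a matrix that satisfies the spectral-floor hypothesis of \thmref{DP2}. I replace $\bA$ by the augmented matrix $\widehat{\bA}$ obtained by concatenating $\bA$ with a deterministic block $\sigma_{\min}\I_{n+d}$, where $\sigma_{\min} = \frac{16 r\log(2/\beta)\ln(4/\beta)}{\alpha(1+\varepsilon)}$ as in \thmref{DP2}. All singular values of $\widehat{\bA}$ then exceed $\sigma_{\min}$, so publishing $\mathbf{M} := \bPhi^{\mathsf T}\bPhi\,\widehat{\bA}^{\mathsf T}$ preserves $(\alpha,\beta+\delta)$-differential privacy by \thmref{DP2}; the $\delta$ slack absorbs the failure probability of the subspace-embedding event from \lemref{singular}, and post-processing gives privacy for $\bar{\bA}$.

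The second step is to run the randomized-SVD reconstruction on $\mathbf{M}$: extract $\mathbf{Y} = \widehat{\bA}\bPhi^{\mathsf T}$, orthonormalize to get $\mathbf{Q}$, and output $\bar{\bA}$ as the best rank-$k$ approximation of $\mathbf{Q}\mathbf{Q}^{\mathsf T}\widehat{\bA}$ (restricted to the first $m$ columns, i.e.\ the block that corresponds to $\bA$). Because our $\bPhi$ satisfies \textsf{JLP} with parameters matching a Gaussian embedding up to constants (\thmref{newsparse}), the Halko--Martinsson--Tropp structural theorems (Theorems~10.5 and~10.6 in their paper) apply verbatim to $\widehat{\bA}$ and yield
\begin{align*}
\|\widehat{\bA} - \mathbf{Q}\mathbf{Q}^{\mathsf T}\widehat{\bA}\|_F &\leq \paren{1+\tfrac{k}{p-1}}^{1/2}\min_{\mathrm{rk}(\bA')<k}\|\widehat{\bA}-\bA'\|_F, \\
\|\widehat{\bA} - \mathbf{Q}\mathbf{Q}^{\mathsf T}\widehat{\bA}\|_2 &\leq \paren{1+\tfrac{k}{p-1}}^{1/2}\lambda_{k+1}(\widehat{\bA}) + \tfrac{e\sqrt{(k+p)\sum_{j>k}\lambda_j(\widehat{\bA})^2}}{p}.
\end{align*}

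The final step is to translate the bounds from $\widehat{\bA}$ back to $\bA$. The added block contributes an $\sigma_{\min}$ perturbation to each of the $n+d$ spectral directions it activates; the projection onto $\mathbf{Q}$ of rank $k+p$ interacts with at most $k+p$ of these, and the subsequent rank-$k$ truncation keeps only $k$, which is exactly where the factor $k\sqrt{(n+d)/p}$ in the Frobenius term and the factor $\sqrt{k(n+d)}$ in the spectral term come from. Substituting $\sigma_{\min}$ and the $\ln(k/\beta)$ tail bound (which comes from the Gaussian noise tails in the privacy analysis of \thmref{DP2}, amplified via the standard union bound over $k$ preserved directions) produces the two additive terms stated in the theorem. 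The bit-complexity claim follows because we only need to store $\bPhi^{\mathsf T}\bPhi\widehat{\bA}^{\mathsf T}$ to $O(\log(nm))$ bits of precision and $\bPhi$ itself uses only $2n + n\log n$ random bits by \thmref{newsparse}.

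The main obstacle I foresee is the bookkeeping in the last step: the shift $\sigma_{\min}\I_{n+d}$ inflates both the truncated tail $\min_{\mathrm{rk}(\bA')<k}\|\widehat{\bA}-\bA'\|_F$ and the top singular value $\lambda_{k+1}(\widehat{\bA})$, and one has to verify that subtracting off the shift block after reconstruction leaves a clean additive term that scales correctly in $k$, $p$, $n+d$, and $\alpha$ without disturbing the multiplicative $(1+k/(p-1))^{1/2}$ factor. The argument is essentially identical to the low-rank analysis in Upadhyay (arXiv:1409.5414), the only novelty being that our structured sketch $\bPhi$ replaces the dense Gaussian used there, which is permitted because \thmref{DP2} was proved precisely for this distribution.
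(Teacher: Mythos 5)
Your proposal takes essentially the same route as the paper, whose proof consists of plugging \thmref{DP2} into the privacy and utility analysis of the low-rank approximation algorithm of Upadhyay~\cite{Upadhyay14}: privacy comes from \thmref{DP2} applied to the spectrally-shifted (augmented) matrix, with the extra $\delta$ absorbing the failure probability of the near-isometry event, and the utility bounds are the Halko--Martinsson--Tropp oversampling bounds with the $\sigma_{\mathsf{min}}$-shift producing the additive error terms. The only point to tighten is your claim that the Gaussian structural theorems apply ``verbatim'' to $\bPhi$ because of {\sf JLP}; the intended argument instead factors $\bPhi$ as the Gaussian block $\bP_1$ composed with a near-isometry $\mathbf{B}$ (\lemref{singular}), applies the Gaussian theorems to $\bP_1$, and lets the $(1+\varepsilon)$ distortion of $\mathbf{B}$ account for the $(1+\varepsilon)$ factors appearing in the additive terms.
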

\begin{proof}
The privacy proof follows by plugging in~\thmref{DP2} in the privacy proof of~\cite{Upadhyay14}. The utility proof follows in the same way as in~\cite{Upadhyay14}.
\end{proof}

\section{Other Applications} 
\subsection{Compressed Sensing} \label{sec:compressed}
\end{versionC}
\begin{versionA}
\section{Application in Compressed Sensing and More} \label{sec:compressed}
\end{versionA}
Our bound in~\thmref{newsparse} directly translates to give a random matrix with optimal $\mathsf{RIP}$ for $r \leq n^{1/2-\tau}$.  
\begin{versionA}
More concretely, combining the result of Baraniuk et al~\cite{BDDW07} stated in~\thmref{BDDW07} with~\thmref{newsparse}, we have.
\end{versionA}
\begin{versionC}
Baranuik {\it et al.}~\cite{BDDW07} showed the following:
\begin{theorem} \label{thm:BDDW07}
	Let $\bPhi$ be a $r \times n$ random matrix  drawn according to any distribution that satisfies~\eqnref{JL}. Then, for any set $T$ with $|T|=s<n$ and any $0\leq \delta \leq 1$, we have $ \forall x \in X_T$
	\[ \p_\bPhi [ (1-\delta) \| \bx \|_2 \leq \| \bPhi_T \bx \|_2 \leq (1+\delta) \| \bx \}_2] \geq 1 - 2(12/\varepsilon)^k e^{-O(\delta n)},\]  where $X_T$ is all $\bx \in X$ restricted to indices in the set $T$.
	Moreover,  the $\mathsf{RIP}$ holds for $\bPhi$ with the prescribed $\delta$ for any $s \leq O\paren{ {\delta^2r}/{\log (n/r)}} $ with probability at least $1- 2e^{-O(r)}$.
\end{theorem}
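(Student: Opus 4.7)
The plan is to prove this via the standard net-plus-concentration argument, instantiating the hypothesis~\eqref{eq:JL} at every point of a finite net. First I would fix a subset $T \subseteq [n]$ with $|T|=s$, and restrict attention to $X_T$, the $s$-dimensional coordinate subspace indexed by $T$. On its unit sphere I would build a $(\delta/4)$-net $Q$; by the standard volume comparison on a $s$-dimensional ball, one can take $|Q| \leq (12/\delta)^s$. The role of $Q$ is to reduce the uncountable quantifier over the sphere to a finite one that is compatible with a union bound.

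Next I would pointwise apply the JL hypothesis. Setting the JL accuracy to $\delta/2$, for each fixed $q \in Q$ we get
\[
(1-\delta/2)\|q\|_2^2 \;\leq\; \|\bPhi q\|_2^2 \;\leq\; (1+\delta/2)\|q\|_2^2
\]
with failure probability at most $2 e^{-c \delta^2 r}$ for some absolute constant $c$ (this is the exponential-tail form of~\eqref{eq:JL}, valid for the distributions considered in the paper). Taking a union bound over $Q$, the inequality holds for \emph{every} $q \in Q$ simultaneously except with probability at most $2(12/\delta)^s e^{-c\delta^2 r}$.

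The third step is the standard extension from the net to the entire sphere of $X_T$. Let $A$ be the smallest constant for which $\|\bPhi x\|_2 \leq (1+A)\|x\|_2$ holds for every unit $x \in X_T$. Any such $x$ lies within $\delta/4$ of some $q \in Q$, whence
\[
\|\bPhi x\|_2 \;\leq\; \|\bPhi q\|_2 + \|\bPhi(x-q)\|_2 \;\leq\; \sqrt{1+\delta/2} + (1+A)\tfrac{\delta}{4}.
\]
Taking the supremum over $x$ and solving the resulting linear inequality in $A$ yields $A \leq \delta$; the matching lower bound is obtained symmetrically. This establishes the first displayed inequality of the theorem for the fixed $T$.

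Finally, for the uniform RIP claim over all $s$-sparse vectors, I would union-bound the previous event across all $\binom{n}{s} \leq (en/s)^s$ choices of $T$. The total failure probability is at most $2(en/s)^s (12/\delta)^s e^{-c\delta^2 r}$, and requiring this to be below $2 e^{-c' r}$ translates to
\[
s \log(n/s) + s \log(1/\delta) \;=\; O(\delta^2 r),
\]
which is exactly the sparsity regime $s \leq O(\delta^2 r / \log(n/r))$ asserted by the theorem. The only delicate point is that we truly need the JL failure probability to decay like $e^{-c \delta^2 r}$ (not merely polynomially), so that after absorbing the $(12/\delta)^s$ and $\binom{n}{s}$ factors the exponent remains negative; this is exactly the form of concentration that the distribution of~\thmref{newsparse} provides, so no extra work is needed beyond bookkeeping of constants.
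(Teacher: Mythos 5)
Your proof is correct, and it is essentially the argument the paper is implicitly relying on: the paper does not prove \thmref{BDDW07} at all, but simply quotes it from Baraniuk~{\it et al.}~\cite{BDDW07}, and your net-plus-union-bound derivation (the $(\delta/4)$-net of size $(12/\delta)^s$, the bootstrap via the smallest admissible constant $A$, and the $\binom{n}{s}$ union bound giving $s \lesssim \delta^2 r/\log(n/s)$, which is equivalent up to constants to the stated $s \lesssim \delta^2 r/\log(n/r)$) is exactly the one in that reference. The only caveat, which you already flag yourself, is that the hypothesis must be read as the exponential-tail concentration $\p\bigl[\,\bigl|\|\bPhi\bx\|_2^2-\|\bx\|_2^2\bigr|>\varepsilon\|\bx\|_2^2\bigr]\le 2e^{-c\varepsilon^2 r}$ rather than the literal single-pair $(\varepsilon,\delta)$ statement of~\eqnref{JL}; this is the form the distributions in the paper actually satisfy, so no gap remains.
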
 
Baranuik et al~\cite{BDDW07}  argued that their bound is optimal. 
To quote, 
	``From the validity of theorem (\thmref{BDDW07}) for the range of $s \leq c r /\log(n/r)$, one can easily deduce its validity for $k \leq c' r/(\sparen{\log(n/r)+1})$ for $c'>0$  depending only on $c.$"
Combining~\thmref{BDDW07} with~\thmref{newgaussian}, we have,
\end{versionC}
\begin{theorem}
	For any set $T$ with $|T| \leq s \leq O\paren{ \frac{r}{\log (n/r)}}$, there exists a mapping $\bPhi$ such that, for all $\bx \in X$ with non-zero entries at indices in the set $T$, we have 
	$\p \sparen{ (1-\delta) \| \bx \|_2 \leq \| \bPhi \bx \|_2 \leq (1+\delta) \| \bx \|_2} \geq 1 - 2(12/\delta)^k e^{-O(\delta n)}$. Moreover, $\bPhi$ can be sampled using $2n + n \log n$ random bits and $\bPhi \bx$ can be computed in time $O(n \log b)$. Here $ b= \min \{c_2 a \log\paren{\frac{r}{\delta}},r\}$ with $a=c_1 \varepsilon^{-1} \log\paren{\frac{1}{\delta}} \log^2\paren{\frac{r}{\delta}}$.
\end{theorem}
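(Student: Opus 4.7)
The plan is to invoke the transference result of Baraniuk et al.\ (\thmref{BDDW07}) as a black box, applied to the distribution $\cD$ constructed in~\figref{basic}. Since \thmref{newsparse} already establishes that any $\bPhi \sim \cD$ satisfies the Johnson--Lindenstrauss concentration bound of~\eqnref{JL} with $r = O(\varepsilon^{-2}\log(1/\delta))$, the hypothesis of~\thmref{BDDW07} is met, and the conclusion about the $\mathsf{RIP}$ for sparsity $s \leq O(r/\log(n/r))$ follows directly. The only thing to verify carefully is that the JLP guarantee in~\thmref{newsparse} holds for the specific failure probability $e^{-O(\delta n)}$ required in the statement of~\thmref{BDDW07}, which amounts to choosing the confidence parameter appropriately.

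The main steps, in order, will be:
\begin{enumerate}
  \item Fix the target sparsity $s$ and accuracy $\delta$ from the theorem statement. Apply~\thmref{newsparse} with these parameters to obtain a distribution $\cD$ over $r \times n$ matrices that satisfies~\eqnref{JL}.
  \item Fix an arbitrary index set $T$ with $|T|=s$, and let $X_T$ denote the set of vectors supported in $T$. Use a standard covering argument (which is precisely what is packaged into~\thmref{BDDW07}): take a $(\delta/4)$-net $\cN_T$ of the unit sphere of $X_T$, with $|\cN_T| \leq (12/\delta)^s$; apply the JLP bound to each of the $|\cN_T|$ points; union-bound over the net; then extend the bound from the net to all of $X_T$ by the usual triangle-inequality argument.
  \item Union-bound over all $\binom{n}{s} \leq (en/s)^s$ choices of $T$. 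Setting $s \leq c\, r/\log(n/r)$ for a sufficiently small constant $c>0$ makes the union bound collapse to $1 - 2e^{-O(r)}$, yielding the $\mathsf{RIP}$ of optimal order.
  \item Combine this with the efficiency statements of~\thmref{newsparse}: sampling $\bPhi$ uses $2n + n\log n$ bits, and computing $\bPhi \bx$ takes $O(n \log b)$ time with $b = \min\{c_2 a \log(r/\delta),\, r\}$ and $a = c_1 \varepsilon^{-1} \log(1/\delta)\log^2(r/\delta)$.
\end{enumerate}

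There is no genuine obstacle here, since the work has already been done: \thmref{newsparse} is what was hard to prove, and~\thmref{BDDW07} is a black-box reduction that turns any distribution satisfying~\eqnref{JL} into one with the $\mathsf{RIP}$. The only place where one has to be mildly careful is matching parameters---specifically, ensuring that the $\delta$ used as the $\mathsf{RIP}$ distortion in the theorem statement is compatible with the $\varepsilon$ used as the JLP distortion in~\thmref{newsparse}, and that the upper bound on $r$ required by~\thmref{newsparse} ($r \leq n^{1/2-\tau}$) is consistent with the regime considered. After this parameter bookkeeping, the proof is essentially a two-line invocation of the two prior theorems.
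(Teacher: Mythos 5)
Your proposal matches the paper's own argument: the paper proves this theorem exactly by citing the black-box reduction of Baraniuk {\it et al.} (\thmref{BDDW07}) together with the {\sf JLP} guarantee of the new distribution (\thmref{newgaussian}/\thmref{newsparse}), with the randomness and run-time claims inherited directly from \thmref{newsparse}. Your additional sketch of the net/union-bound argument merely unpacks what is already packaged inside \thmref{BDDW07}, so the two proofs are essentially identical.
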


\begin{versionA}
\begin{remark}
Our distribution can be also applied in other applications of {\sf JLP}, including learning theory, fast numerical algebra, and functional analysis. Due to lack of space, we defer their details in the full version. 
\end{remark}
\end{versionA}
\begin{versionC}
\subsection{Functional Analysis} \label{sec:functional}
Charikar and Sahai~\cite{CS02} showed an impossibility result for dimension reduction using linear mapping in $\ell_1$ norm. Johnson and Naor~\cite{JN10} gave an elegant result which generalizes Charikar and Sahai's result. Concisely, they showed that a normed space which satisfies the {\sf JL} lemma is very close to being Euclidean in the sense that all its $n$-dimensional subspaces are isomorphic to Hilbert space with distortion $2^{2^{O(\log^* n)}}$. 
More formally, they showed the following.
\begin{theorem} \label{thm:JN10}
Let $\cX$ be a Banach space such that for every $n$ and every $\bx_1,\cdots , \bx_m \in \cX$, there exists a 
linear subspace $\mathbf{F} \in \cX$ of dimension at most $O(\varepsilon^{-2} \log n)$ and a linear mapping from $\cX$ to $\mathbf{F}$
 such that~\eqnref{JL} is satisfied. Then for every $\ell \in \N$ and every $\ell$-dimensional subspace $\cE \subseteq \cX$, we have the Banach-Mazur distance between $\cE$ and Hilbert space  at most $2^{2^{O(\log^* \ell)}}$.
\end{theorem}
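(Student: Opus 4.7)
The approach I would take follows the philosophy of local Banach space theory: extract from the JL hypothesis strong finite-dimensional geometric information about $\cX$, and then invoke the classical characterizations of Hilbert space via type/cotype. First, I would exploit the hypothesis on $n$ points $\bx_1,\dots,\bx_n \in \cX$ applied to a sign-symmetric collection, in the spirit of Maurey's empirical-method arguments. Concretely, given any vectors $\mathbf{v}_1,\dots,\mathbf{v}_n \in \cX$, I would apply the JL hypothesis to the $2^n$ signed sums $\sum_i \pm \mathbf{v}_i$ with $\varepsilon$ a small absolute constant, obtaining a low-dimensional linear image $\mathbf{F}$ of dimension $O(n)$ inside which all such signed sums retain their norms up to constants. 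This will let me compare Rademacher averages in $\cX$ with those in an $O(n)$-dimensional quotient.

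The next step is to bootstrap this to a type-$2$-like inequality. The point is that if norms of all signed sums are preserved under a linear map into a low-dimensional subspace, then the Rademacher projection operator on the span of $\mathbf{v}_1,\dots,\mathbf{v}_n$ must be bounded by a constant independent of $n$, i.e.\ $\cX$ is $K$-convex. By Pisier's theorem, $K$-convexity is equivalent to non-trivial type, and one can extract a quantitative type-$p$ constant for some $p>1$. Iterating the same JL-reduction on polynomial-sized nets inside $n$-dimensional subspaces, together with the duality between type and cotype for $K$-convex spaces, will yield non-trivial cotype of $\cX$ as well. Thus every $\ell$-dimensional subspace $\cE \subseteq \cX$ has both type $p$ and cotype $q$ with constants independent of $\ell$.

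Once both type and cotype constants are controlled, the final step is Kwapie\'n's theorem, which says that a space with type $2$ and cotype $2$ is isomorphic to Hilbert space, together with its quantitative refinements due to Pisier: the Banach--Mazur distance of an $\ell$-dimensional space with type $p$ and cotype $q$ to $\ell_2^\ell$ is bounded by an explicit function of the type and cotype constants. Iterating Pisier's K-convexity-based reduction ``$\log^* \ell$'' many times---each round improves the type/cotype exponents toward $2$ but loses an exponential factor in distortion---produces the doubly-exponential bound $2^{2^{O(\log^*\ell)}}$.

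The main obstacle will be the last step: controlling the accumulation of distortion through the $\log^*$-many iterations, since each application of Pisier's quantitative Dvoretzky-type estimate for K-convex spaces multiplies the distortion. A careful choice of the ``JL accuracy'' $\varepsilon$ at each stage, scaling roughly as the reciprocal of the current type constant, is needed so that the iteration terminates at a constant-type, constant-cotype stage before the distortion grows beyond the claimed bound. A secondary subtlety is that the JL hypothesis is stated for \emph{linear} maps, so I must be careful throughout to work with linear (not merely bi-Lipschitz) embeddings, and in particular use the linear form of Maurey's extension theorem when reducing between subspaces.
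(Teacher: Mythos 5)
First, a point of order: the paper does not prove this statement. Theorem~\ref{thm:JN10} is quoted from Johnson and Naor~\cite{JN10} and used as a black box, so the only meaningful comparison is with their argument. Your proposal lives in the same circle of ideas as theirs --- apply the JL hypothesis to signed sums $\sum_i \pm \mathbf{v}_i$ so that Rademacher averages can be computed inside a low-dimensional linear image, control the type-$2$ and cotype-$2$ constants there, pass to Banach--Mazur distance via Kwapie\'n's theorem, and iterate --- so the skeleton is right.

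The gap is quantitative, and it sits exactly where the $\log^*$ has to come from. Applying the JL hypothesis to all $2^n$ signed sums of $n$ vectors lands you in a subspace $\mathbf{F}$ of dimension $O(\varepsilon^{-2}\log 2^n)=O(n)$, i.e.\ no dimension reduction at all. Writing $f(\ell)$ for the worst Banach--Mazur distance of an $\ell$-dimensional subspace of $\cX$ to Hilbert space, your step therefore only yields $T_2(\cE),C_2(\cE)\le Cf(C\ell)$ and hence, via Kwapie\'n, the vacuous recursion $f(\ell)\le Cf(C\ell)^2$. (For the same reason, your claim that the Rademacher projection is bounded by an absolute constant is unjustified: what passing to $\mathbf{F}$ buys you is a bound of order $\log\dim\mathbf{F}$, not $O(1)$.) The engine of the Johnson--Naor proof is to first replace the $2^n$ sign patterns by only polynomially many points --- Kahane's inequality gives enough moment control for an empirical average over $\mathrm{poly}(\ell)$ random sign patterns to estimate $\mathbb{E}\|\sum_i\varepsilon_i \bx_i\|^2$ up to constants --- combined with Tomczak-Jaegermann's theorem that the type/cotype-$2$ constants of an $\ell$-dimensional space are attained on $\ell$ vectors. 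Then JL is applied to only $\mathrm{poly}(\ell)$ points, the image has dimension $O(\log\ell)$, and one obtains the genuine recursion $f(\ell)\le Cf(C\log\ell)^2$; seeded with John's theorem $f(\ell)\le\sqrt{\ell}$, this iterates exactly $\log^*\ell$ times, and the squaring at each step is what produces $2^{2^{O(\log^*\ell)}}$. Your alternative mechanism for the iteration --- repeatedly invoking Pisier's $K$-convexity theorem to push the type/cotype exponents toward $2$ --- supplies neither the dimension reduction nor control of how the constants compound, so as written the argument does not terminate with the claimed bound.
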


Our main result directly improves this characterization in the sense that we only need almost linear random samples. However, there are another subtle applications of our result in Functional analysis. For this, we need the result by Krahmer and Ward~\cite[Proposition 3.2]{KW11}. 
{\begin{theorem} {\em (Krahmer-Ward~\cite[Proposition 3.2]{KW11})} \label{thm:KW11}
	Let $\varepsilon$ be an arbitrary constant. Let $\bPhi$ be a matrix of order $k$ and dimension $r \times n$ that satisfies the relation $k \leq c_1 \delta_k^2 r / \log (n/r)$. Then the matrix $\bPhi \bD$, where $\bD$ is an $n \times n$ diagonal matrix  formed by Rademacher sequence, satisfies~\eqnref{JL}.
\end{theorem}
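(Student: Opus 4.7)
The plan is to express $\|\bPhi\bD\bx\|_2^2$ as a Rademacher quadratic form in $\xi_i := \bD_{ii}$, namely
\[
\|\bPhi\bD\bx\|_2^2 \;=\; \sum_i x_i^2 \,(\bPhi^{\mathsf{T}}\bPhi)_{ii} \;+\; \sum_{i\neq j} \xi_i \xi_j \, x_i x_j \, (\bPhi^{\mathsf{T}}\bPhi)_{ij},
\]
and to control each term separately. Applying the RIP of order $k\geq 1$ to each standard basis vector gives $(\bPhi^{\mathsf{T}}\bPhi)_{ii} \in [1-\delta_k, 1+\delta_k]$, so the first (deterministic) sum lies in $1 \pm \delta_k$ for any unit $\bx$. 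It therefore suffices to show that the off-diagonal Rademacher chaos $\xi^{\mathsf{T}}\bA_{\bx}\xi$, where $\bA_{\bx}$ has entries $x_i x_j(\bPhi^{\mathsf{T}}\bPhi)_{ij}$ off the diagonal and $0$ on it, concentrates tightly around $0$.

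For the concentration step I would invoke the Hanson-Wright inequality (\thmref{hw}) for Rademacher variables, which yields
\[
\p_{\xi}\sparen{|\xi^{\mathsf{T}}\bA_{\bx}\xi| > \eta} \;\leq\; 2\exp\paren{-c\min\{\eta^2/\|\bA_{\bx}\|_F^2,\; \eta/\|\bA_{\bx}\|\}}.
\]
Setting $\eta = \varepsilon/2$, the proof reduces to establishing the two norm bounds $\|\bA_{\bx}\|_F, \|\bA_{\bx}\| \lesssim \delta_k$, after which the hypothesis $k \leq c_1 \delta_k^2 r/\log(n/r)$ together with the JL-dimension $r = O(\varepsilon^{-2}\log(1/\delta))$ forces the failure probability to be at most $\delta$.

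The main obstacle, and the crux of Krahmer and Ward's argument, is extracting the bounds $\|\bA_{\bx}\|_F, \|\bA_{\bx}\| = O(\delta_k)$ using only RIP applied to $k$-sparse vectors, since $\bx$ itself is an arbitrary unit vector. My plan mirrors their dyadic decomposition: partition the coordinates of $\bx$ into level sets $T_0, T_1, \ldots$ on which $|x_i|$ is comparable, so that on each level $\bx_{T_\ell}$ is effectively sparse. For disjoint sparse supports $S, T$ with $|S|+|T| \leq k$, polarization of RIP yields the bilinear estimate $|\langle \bPhi \bx_S, \bPhi \bx_T\rangle - \langle \bx_S, \bx_T\rangle| \leq \delta_k \|\bx_S\|_2 \|\bx_T\|_2$, which controls the cross-terms between levels. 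Summing these contributions over pairs of dyadic levels, with care to absorb the logarithmic overhead in the number of levels into the constants, yields the required operator and Frobenius norm bounds, completing the argument via the Hanson-Wright tail.
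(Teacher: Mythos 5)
First, a point of reference: the paper does not prove this statement at all --- \thmref{KW11} is imported verbatim as Proposition 3.2 of Krahmer--Ward \cite{KW11} and used as a black box, so the only meaningful comparison is against the argument in that source. Your skeleton does match theirs: write $\|\bPhi\bD\bx\|_2^2$ as a Rademacher quadratic form in the signs, handle the diagonal with RIP applied to $1$-sparse (more generally $k$-sparse) vectors, and control the off-diagonal chaos with a Hanson--Wright-type tail whose exponent is governed by $\|\bA_\bx\|_F$ and $\|\bA_\bx\|$.

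There are, however, two concrete problems with the way you propose to close the argument. First, the quantitative targets $\|\bA_\bx\|_F,\ \|\bA_\bx\| \lesssim \delta_k$ are too weak to finish: with threshold $\eta=\varepsilon/2\asymp\delta_k$, both branches of the Hanson--Wright exponent become $\min\set{\eta^2/\|\bA_\bx\|_F^2,\ \eta/\|\bA_\bx\|}=O(1)$, so the tail is only a constant, not $\delta$. What Krahmer--Ward actually establish (and what one can check already on the uniform $k$-sparse vector $\bx=k^{-1/2}\sum_{i\le k}\mathbf{e}_i$, where $\bA_\bx=\tfrac1k(\bPhi_{[k]}^{\mathsf T}\bPhi_{[k]}-\mathrm{diag})$) is $\|\bA_\bx\|\lesssim\delta_k/k$ and $\|\bA_\bx\|_F\lesssim\delta_k/\sqrt{k}$; only then does the exponent become $\Omega(k)$, and the hypothesis that the RIP order satisfies $k\gtrsim\log(1/\delta)$ delivers failure probability $\delta$. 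Second, the decomposition you describe --- level sets on which $|x_i|$ is comparable --- does not produce $k$-sparse pieces (a low-magnitude level of a unit vector can contain $\Theta(2^{2\ell})$ coordinates), so RIP cannot be applied to it, and it also does not yield the extra $1/k$ weight factor. The decomposition that works is the sorted one: order the coordinates by decreasing magnitude and cut $[n]$ into consecutive blocks $T_1,T_2,\dots$ of cardinality $k$ each. Each block is $k$-sparse by fiat, and the sorting gives $\|\bx_{T_{j+1}}\|_\infty\le\|\bx_{T_j}\|_2/\sqrt{k}$, which is exactly where the missing factors of $1/\sqrt{k}$ per index (hence $1/k$ per matrix entry of $\bA_\bx$) come from when you sum the polarized RIP estimates over pairs of blocks. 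With the decomposition and the norm targets corrected, your outline becomes the Krahmer--Ward proof; as written, the final tail bound does not reach $\delta$.
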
 
}
Recently, Allen-Zhu, Gelashvili, and Razenshteyn~\cite{AGR14} showed that there does not exists a projection matrix with $\mathsf{RIP}$ of optimal order for any $\ell_p$ norm other than $p=2$. Using the characterization by Johnson and Naor, our result gives an alternate reasoning for the impossibility result of~\cite{AGR14}. 
Moreover, Pisier~\cite{Pisier} proved that if a Banach space satisfies the condition that Banach-Mazur distance is $o(\log n)$, then the Banach space is super-reflexive. Because of the uniform convexity properties of the norm over these spaces~\cite{Finet86}, super-reflexive Banach spaces are of centre importance in Functional analysis (see for example the excellent book by Conway~\cite{Conway}). \thmref{newsparse} and~\thmref{JN10} gives one such space using only almost linear random samples! 

\subsection{Numerical Linear Algebra}
Our bound gives the same results as obtained by Kane and Nelson~\cite{KN14} for matrix multiplication, which is optimal as long as $m \leq 2^{\sqrt{n}}$. For the case of linear regression, we achieve the same bound as long as $m \leq \sqrt{n}$. This is because of a subtle reason. The approximation bound for linear regression uses an observation made by Sarlos~\cite{Sarlos06}, who observed that the approximation guarantee for linear regression  is provided as long as $\bPhi$ provides good approximation to matrix multiplication as long as $\bPhi$ is an $O(m)$-space embedding. The latter follows from the standard result on any Johnson-Lindenstrauss transform with $r=O(\mathsf{rank} (\bA) + \log (1/\delta) /\varepsilon^2)$. Using the result of Indyk and Motwani~\cite{IM98}, we only need $r=O(m^2\varepsilon^{-1} \log(1/\delta))$. In concise, we need to project to a subspace with dimension $r=O(m/ \varepsilon)$. Using our distribution over random matrices, this is possible only if $m \leq O( \sqrt{n}/\varepsilon).$

\end{versionC}

\section{Impossibility of Differential Privacy Using Other Known Constructions} \label{sec:impossible} \label{sec:notDP}
Blocki {\it et al.}~\cite{BBDS12} asked the question whether known distribution of matrices with {\sf JLP} preserves differential privacy or not. In this section, we answer their question in negation by giving simple counter-examples. Our counter-example works by showing two neighbouring matrices with the properties that they differ by at most one column of unit norm. Each of our counter-examples are constructed with care so as to allow the attack. 
\begin{theorem} \label{thm:notDP}
Let $\bPhi$ be a random matrix picked from a distribution defined by Nelson {\it et al.}~\cite{NPW14}, Kane and Nelson~\cite{KN14}, Dasgupta {\it et al.}~\cite{DKS10}, or Ailon and Liberty~\cite{AL09}.  Then the distribution  $\bPhi \bA$ and $\bPhi \bA'$ are distinguishable with probability $1/\poly(n)$, where $\bA$ and $\bA'$ are  two neighbouring matrices.
\end{theorem}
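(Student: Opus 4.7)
The plan is to exhibit, for each of the four constructions, an explicit pair of neighboring matrices $\bA, \widetilde{\bA}$ and a measurable event $E$ in the output space such that $\p[\bPhi \bA^{\mathsf T} \in E]$ and $\p[\bPhi \widetilde{\bA}^{\mathsf T} \in E]$ differ by at least $1/\poly(n)$, thereby violating $(\alpha,\beta)$-differential privacy for any reasonable setting of the parameters. Throughout, I will take $\bA$ of the form $w\I + \bM$ for $w$ large enough that the spectral lower bound of Blocki {\it et al.}~\cite{BBDS12} is satisfied, so that no additional assumption on the singular values can save the construction. For Nelson {\it et al.}~\cite{NPW14}, Example~\ref{e.g.:fourier} already supplies the required neighbors; the proof will formalize the claim that the top-left $2\times 2$ block of $\bPhi \bD \bA$ versus $\bPhi \bD \widetilde{\bA}$ has easily distinguishable distributions by computing the two Gaussian-like densities and exhibiting a coordinate where one is zero almost surely while the other is not. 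The partial circulant case reduces to the Hadamard case via the identity $\mathbf{F}_n^{-1}\mathsf{Diag}(\mathbf{F}_n \mathbf{g})\mathbf{F}_n$ mentioned in the introduction.

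For the sparse constructions of Kane and Nelson~\cite{KN14} and Dasgupta, Kumar, Sarl\'os~\cite{DKS10}, the attack exploits column sparsity directly. Each column of $\bPhi$ has at most $s = O(\varepsilon^{-1}\log(1/\delta))$ nonzero entries placed at random coordinates with random signs. I take $\bA = w\I$ and $\widetilde{\bA} = w\I + \mathbf{e}_j \mathbf{e}_1^{\mathsf T}$ for a carefully chosen index $j \neq 1$. Then $\bPhi \widetilde{\bA}^{\mathsf T} - \bPhi \bA^{\mathsf T} = \bPhi \mathbf{e}_1 \mathbf{e}_j^{\mathsf T}$ differs from zero only in the $j$-th row, and that row equals the $1$st column of $\bPhi$, which is supported on at most $s$ coordinates. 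I will define the event $E$ as ``the $j$-th row of the output has at most $s$ nonzero entries at positions consistent with a single column of $\bPhi$''; under $\bA$ this row is $w$ times the $j$-th column of $\bPhi$ (support size exactly $s$, at a uniformly random location), while under $\widetilde{\bA}$ the row is the sum of two independent sparse columns, which with constant probability has a larger support (since the two random supports are disjoint with probability $\Omega(1)$ when $s^2 \ll r$). Comparing the induced distributions on the support pattern gives a total variation gap of $\Omega(1)$, which is far too large for $(\alpha,\beta)$-DP with $\alpha,\beta = o(1)$.

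For Ailon and Liberty~\cite{AL09}, the construction has the form $\bPhi = \bP\bW\bD$ where $\bP$ is a block matrix derived from a dual BCH code and hence has many \emph{deterministic} zero entries in a known pattern. The attack is similar: choose $\widetilde{\bA} - \bA = \bW\bD \mathbf{v}\mathbf{e}_j^{\mathsf T}$ for a vector $\mathbf{v}$ whose support is aligned with a row of $\bP$ that is deterministically zero on a large set of coordinates. Then the $j$-th column of $\bPhi(\widetilde{\bA}-\bA)^{\mathsf T}$ has an entire block of coordinates equal to zero with probability $1$, and this pattern is statistically absent under $\bA$. Since $\bW\bD$ is a bijection on $\R^n$, we can always realize such a $\mathbf{v}$ by solving $\mathbf{v} = \bD^{-1}\bW^{-1}\mathbf{u}$ for the desired $\mathbf{u}$ that aligns with $\bP$'s zero-block structure.

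The main obstacle is a technical one: the distinguishing event must have probability differing by at least $\max\{e^\alpha \beta, 1/\poly(n)\}$, and I must ensure that the spectral lower bound on $\bA$ (needed so the attack isn't ruled out by a trivial gap between the BBDS12 hypotheses and these constructions) does not dilute the distinguishing probability. This is handled by taking $w$ large but finite and observing that the \emph{difference} $\bPhi(\widetilde{\bA}-\bA)^{\mathsf T}$ is independent of $w$, so the distinguishing event can be phrased purely in terms of the support pattern of a single row/column of the output and is unaffected by the spectral shift. A secondary subtlety, specific to the Nelson {\it et al.} construction, is that the random linear combination used to form $\bPhi$ from $\bPhi_{\mathsf{sub}}$ introduces additional randomness; I will argue that conditioning on the sign pattern of this combination still leaves a constant-probability event that distinguishes the two output distributions, so the $1/\poly(n)$ lower bound on the distinguishing probability survives averaging.
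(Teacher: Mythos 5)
Your overall strategy---fix an explicit neighbouring pair and exhibit a structural event in the output whose probability differs by $1/\poly(n)$---is the same as the paper's, and your treatment of Nelson \emph{et al.} (condition on the sign pattern of the random linear combination, then compare the top-left $2\times 2$ blocks) matches the paper's proof essentially verbatim. Your support-counting attack is also a clean way to handle the Kane--Nelson construction, where the transform really is a directly applied column-sparse matrix: there the event ``column $j$ of the output has support larger than $s$'' does separate the two distributions by $\Omega(1)$. However, there are two genuine problems. First, for Ailon--Liberty you propose to pick the perturbation by solving $\mathbf{v}=\bD^{-1}\bW^{-1}\mathbf{u}$ so that it aligns with the deterministic zero pattern of $\bP$. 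This is circular: in a differential-privacy lower bound the two neighbouring databases must be fixed \emph{before} the mechanism's coins (in particular $\bD$, and the iterated $\bD^{(1)},\ldots,\bD^{(\ell)}$) are tossed, so the pair cannot depend on the realization of $\bD$. The paper avoids this by fixing $\bA=w(\bW^{\mathsf T})^{\ell}$ deterministically and then conditioning on the event---of probability $1/\poly(n)$ given the choice of $\ell$---that the relevant diagonal entries of every $\bD^{(j)}$ equal $+1$, which cancels the preconditioning on the coordinates being examined. You apply exactly this conditioning trick for Nelson \emph{et al.}, but omit it where it is actually indispensable.

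Second, your choice $\bA=w\I$, $\widetilde{\bA}=w\I+\mathbf{e}_j\mathbf{e}_1^{\mathsf T}$ for the hashing-based constructions only works when the sparse matrix is applied directly to the data. For the Dasgupta--Kumar--Sarl\'os construction as set up in this paper, $\bPhi=\mathbf{H}\bW\bD$ with $\mathbf{H}$ column-sparse but $\bW$ a dense Hadamard matrix, so the columns of the composed $\bPhi$ are dense and the event ``support size at most $s$'' has probability $0$ under \emph{both} neighbours; your distinguisher collapses. The paper's counterexample instead takes $\bA=w\bW^{\mathsf T}$ (and $\widetilde{\bA}=\bA+\mathbf{v}\mathbf{e}_2^{\mathsf T}$) precisely so that the preconditioner is undone and the $1$-sparsity of $\mathbf{H}$ becomes visible in the rows of the output indexed by the hash values; you need the analogous cancellation here. (A cosmetic slip: $\bPhi\mathbf{e}_1\mathbf{e}_j^{\mathsf T}$ is supported on the $j$-th \emph{column} of the $r\times n$ output, not the $j$-th row.) Your observation that the difference $\bPhi(\widetilde{\bA}-\bA)^{\mathsf T}$ is independent of $w$, so that the spectral lower bound cannot rescue these constructions, is correct and is the same point the paper makes when it remarks that a Gaussian $\bPhi$ \emph{would} be private on the same pairs.
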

We prove the above theorem by giving counter-examples designed specifically for each of the above constructions. 
\begin{versionA}
Due to lack of space, we defer the details to~\appref{notDP}.
\end{versionA}
\begin{versionC}
\subsection{Random Linear Combination Based Construction of Nelson, Price, and Wooters~\cite{NPW14}} \label{app:NPW14} \label{sec:NPW14}
We give verbatim the description of the Nelson {\it et al.}~\cite{NPW14} of their construction of {\sf RIP} matrices. The construction for matrices with {\sf JLP} follows straightforwardly from the result of Krahmer and Ward~\cite{KW11}.
\begin{quote}
	Let $\cD_M$ be a distribution on $M \times n$ matrices, defined for all $M$, and fix parameters $r$ and $B$. Define the injective function $h: [B] \times [r] \rightarrow [rB]$ as $h(b, i) = B(b-1)+i$ to partition $[rB]$ into $r$ buckets of size $B$, so $h(b, i)$ denotes the $i$-th element in bucket $b$. We draw a matrix  $\bPhi' \sim \cD_{rB}$, and then construct our $r \times n$ matrix $\bPhi$ by using $h$ to hash the rows of $\bPhi'$ into $r$ buckets of size $B$. Define a distribution on $r \times n$ matrices by constructing a matrix $\bPhi$ as follows:
	\begin{enumerate}
		\item Draw $\bPhi' \sim \cD_{rB}$, and let $\bPhi'_{i:}$ denote the $i$-th row of $\bPhi'$.
		\item For each $(b,i) \in [m] \times [B]$, chose a sign $\sigma_{b,i} \sim \sber(1/2) $.
		\item For $1 \leq b \leq r$, let
			\[ \bPhi_{b:} := \sum_{i \in [B]} \sigma_{b,i} \bPhi'_{h(b,i)} \]
			and let $\bPhi$ be the matrix with rows $\bPhi_{b:}$.
	\end{enumerate}
\end{quote}

First note that with probability $2^{-(B+1)}$,  $\sigma_{1,i}=1$ and $\sigma_{2,i}=1$ for all $1 \leq i \leq B$. In other words, $\bPhi_{1:} = \sum_{i \in [B]}  \bPhi'_{h(1,i)}$ and $\bPhi_{2:} = \sum_{i \in [B]}  \bPhi'_{h(2,i)}$. In the counter-example for both the construction by Nelson {\it et al.}~\cite{NPW14}, we concentrate on the the top-most left-most $2 \times 2$ sub-matrix of $\bPhi$. We denote this sub-matrix by $\bPhi_{22}$ for brevity. Recall that $B=\poly \log n$ in both the constructions of Nelson {\it et al.}~\cite{NPW14}.

\paragraph{Construction Based on Bounded Orthonormal Matrices.} 
Recall that the two neighbouring matrices used in our counterexample in the introduction is as follows.
\[
\bA= 	\begin{pmatrix} w & 1 & 0 & \cdots & 0 \\  
				0 & w & 0 & \cdots & 0 \\
  				0  & 0 & w & \cdots & 0\\
  				\vdots & \vdots & & \ddots \\
  				\mathbf{0} & 0 & 0 & \cdots & w
			\end{pmatrix} \quad	\text{and} \quad
\widetilde{\bA}= 	\begin{pmatrix} w & 1 & 0 & \cdots & 0 \\  
				1 & w & 0 & \cdots & 0 \\
  				0  & 0 & w & \cdots & 0\\
  				\vdots & \vdots & & \ddots \\
  				\mathbf{0} & 0 & 0 & \cdots & w
			\end{pmatrix} 	
		 \]   for $w \geq \sigma_{\mathsf{min}}.$ 
They are neighbouring because the two matrices differ only in the first column and their difference have a unit norm, $\|\bA - \widetilde{\bA}\|_2 =1.$ We consider the $2 \times 2$ sub-matrices of $\bA$ and $\widetilde{\bA}$ to show that this construction does not preserve differential privacy. The two sub-matrices have the following form:
\begin{align}
\begin{pmatrix} w & 1 \\ 0 & w   \end{pmatrix} \quad  \text{and} \quad
\begin{pmatrix} w & 1 \\ 1 & w   \end{pmatrix}. 
\label{eq:neighbor}
\end{align}

Now considering $\bPhi_{22}$, the possible (non-normalized) entries are as follows:
\[
\begin{pmatrix} 1 & 1 \\ 1 & -1   \end{pmatrix}, \quad 
\begin{pmatrix} 1 & 1 \\ 1 & 1   \end{pmatrix}, \quad 
\begin{pmatrix} 1 & -1 \\ 1 & -1   \end{pmatrix}, \quad 
\begin{pmatrix} 1 & 1 \\ 1 & 0   \end{pmatrix}, \quad 
\begin{pmatrix} 1 & 0 \\ 1 & 1   \end{pmatrix}. 
\]

Now, with probability $2^{-B-3} = 1/\poly(n)$, the corresponding $2 \times 2$ sub-matrix formed by multiplying $\bPhi \bD \bA$ are
\[
\begin{pmatrix} w & w+1 \\ w & 1-w   \end{pmatrix}, \quad 
\begin{pmatrix} w & w+1 \\ w & w+1   \end{pmatrix}, \quad 
\begin{pmatrix} w & 1-w \\ w & 1-w   \end{pmatrix}, \quad 
\begin{pmatrix} w & w \\ w & 1   \end{pmatrix}, \quad 
\begin{pmatrix} w & 1 \\ w & w+1   \end{pmatrix}. 
\]
On the other hand, the corresponding $2 \times 2$ sub-matrix formed by multiplying $\bPhi \bD \widetilde{\bA}$ are
\[
\begin{pmatrix} w+1 & w+1 \\ w-1 & 1-w   \end{pmatrix}, \quad 
\begin{pmatrix} w+1 & w+1 \\ w+1 & w+1   \end{pmatrix}, \quad 
\begin{pmatrix} w-1 & 1-w \\ w-1 & 1-w   \end{pmatrix}, \quad 
\begin{pmatrix} w+1 & w+1 \\ w & 1   \end{pmatrix}, \quad 
\begin{pmatrix} w & w+1 \\ w+1 & w+1   \end{pmatrix}. 
\]

Note that in each of the cases, an adversary can distinguish the two output distribution. For example, in the first case, the entries in the second row for $\widetilde{\bA}$ is negation of each other while it is not so in the case of $\bA$. Similarly, in the second case, all entries are the same in $\widetilde{\bA}$ while different in $\bA$.

Note that publishing $\bPhi \bA^{\mathsf T}$ when $w \geq \sigma_{\mathsf{min}}$ preserves privacy when $\bPhi$ is a random Gaussian matrix. 

\paragraph{Hash and Partial Circulant Matrices Based Construction.} Let $\mathbf{v}=\begin{pmatrix} 1/\sqrt{2} & 1/\sqrt{2} & 0 & \cdots &0 \end{pmatrix}^{\mathsf T}$. 
In this case, we use the following set of neighbouring matrices:
\begin{align}
A = w \mathbf{F}_n^{\mathsf T} \quad \text{and} \quad 
A' = w \mathbf{F}_n^{\mathsf T} +  \mathbf{v} \mathbf{e}_2^{\mathsf T}
\label{eq:partial}
\end{align}
  for $w \geq \sigma_{\mathsf{min}}.$  They are neighbouring because $\|\bA - \widetilde{\bA}\|_2 =1$ and they differ in only one column. As in the previous two cases, we  concentrate on the $2 \times 2$ sub-matrices, which would be enough to argue that an adversary can distinguish the output distribution corresponding to these two neighbouring matrices.  The partial circulant matrix can be represented as  $\bP = \mathbf{F}_{1..r}^{\mathsf T} \mathsf{Diag}(\mathbf{F}_n \mathbf{g}) \mathbf{F}_n$, where $\mathbf{F}_{1..r}^{\mathsf T}$ is the restriction of $\mathbf{F}_{n}^{\mathsf T}$ to the first $r$-rows and $\mathsf{Diag}(\mathbf{F}_n \mathbf{g})$ is the diagonal matrix with diagonal entries formed by the vector $\mathbf{F}_n \mathbf{g}$.  Let $\by = \mathbf{F}_n \mathbf{g}$, then the $2 \times 2$ sub-matrix formed by the application of $\mathsf{Diag}(\by) \mathbf{F}_n \bD \bA$ and  $\mathsf{Diag}(\by) \mathbf{F}_n \bD \widetilde{\bA}$ yields,
\begin{align}
\begin{pmatrix}
 w \by_1 & 0 \\ 0 & w \by_2 
\end{pmatrix} 
\quad \text{and} \quad 
\begin{pmatrix}
 w \by_1 &  \sqrt{2} \by_1 \\ 0 & w \by_2 ,
\end{pmatrix} \label{eq:fourier}
\end{align}
respectively. The rest of the computation is similar to the case of bounded-orthonormal matrices after observing that the two sub-matrices have the form quiet similar to~\eqnref{neighbor} and that the rest of the steps are the same as in the bounded-orthonormal case. An alternative could be the following line. We  multiply the published matrix with $\mathbf{F}_{1..r}$. In other words, we need to concern ourselves with $\mathsf{Diag}(\mathbf{F}_n \mathbf{g}) \mathbf{F}_n \bD \bA$ and  $\mathsf{Diag}(\mathbf{F}_n \mathbf{g}) \mathbf{F}_n \bD \widetilde{\bA}$. Thus~\eqnref{fourier} suffices to distinguish.

Note that publishing $\bPhi \bA^{\mathsf T}$ when $w \geq \sigma_{\mathsf{min}}$ preserves privacy when $\bPhi$ is a random Gaussian matrix. 

\subsection{Hashing Based Construction}
There are two main constructions that were based on hashing co-ordinates of the input vectors to a range $\{1, \cdots, S\}$~\cite{DKS10,KN14}, where $S$ is the sparsity constant. The main difference between the two construction is that Dasgupta, Kumar, and Sarlos~\cite{DKS10} hash co-ordinates with replacement while Kane and Nelson~\cite{KN14} hash them without replacement. More details on their construction and the counterexample follows.
\subsubsection{Construction of Dasgupta, Kumar, and Sarlos~\cite{DKS10}}
We first recall the construction of Dasgupta {\it et al.}~\cite{DKS10}. Let $a=c_1 \varepsilon^{-1} \log(1/\delta) \log^2(r/\delta)$, $ b= \min \{c_2 a \log(a/\delta),n\}$. Let $\bD = \set{d_1, \cdots, d_{ab} }$ be a set of i.i.d. random variables such that for each $j \in [ab]$, $d_j \sim \sber(1/2)$. Let $\delta_{\alpha, \beta}$ denote the Kronecker delta function. Let $h : [b] \rightarrow [r]$ be a hash function chosen uniformly at random. The entries of matrix $\mathbf{H} \in \{0,\pm 1\}^{r \times b}$ is defined as $\mathbf{H}_{ij} = \delta_{i,h(i)} d_j$. Then their random matrix is $\bPhi:= \mathbf{H}  \bW \bD$. 

In order to analyze this construction, we need to understand the working of the random matrix $\mathbf{H}$. The main observation here is that the matrix $\mathbf{H}$  has column sparsity $1$, i.e., every column has exactly one entry which is $\pm 1$ and rest of the entries are zero. Let $j$ be the row for which  the first column is non-zero and $k$ be the row for which the second column is non-zero.

Our neighbouring data-sets are exactly like in~\eqnref{partial} except that we use  Hadamard matrix instead of the discrete Fourier transform.
\begin{align}
\bA = w \bW^{\mathsf T} \quad \text{and} \quad 
\widetilde{\bA} = w \bW^{\mathsf T} +  \mathbf{v} \mathbf{e}_2^{\mathsf T}
\label{eq:partial2}
\end{align}
  for $w \geq \sigma_{\mathsf{min}}.$  They are neighbouring because $\|\bA - \widetilde{\bA}\|_2 =1$ and they differ in only one column.
Now just as before, with probability $1/4$, the first $2 \times 2$ sub matrix formed by $\bW \bD \bA$ and $\bW \bD A'$ is as below
\[
\begin{pmatrix}
 w  & 0 \\ 0 & w  
\end{pmatrix} 
\quad \text{and} \quad 
\begin{pmatrix}
 w  & \sqrt{2} \\ 0 & w  
\end{pmatrix} ,
\]
respectively. Now  consider the sub-matrix formed by the $j$-th and $k$-th row and the first and second columns  of the output matrices corresponding to $\bA$ and $\widetilde{\bA}$. Just like before, simple calculation shows that in each of the possible cases, the two distribution is  easily distinguishable. This breaches the privacy.

Note that publishing $\bPhi \bA^{\mathsf T}$ when $w \geq \sigma_{\mathsf{min}}$ preserves privacy when $\bPhi$ is a random Gaussian matrix. 

\begin{remark}
A remark is due here. Note that the above attack works because we know explicitly what would be the output of the hash function $h$. If the hash function outputs a value picked i.i.d. from a Gaussian distribution, we could not argue as above. Moreover, since the Gaussian distribution is subgaussian, the analysis of Kane and Nelson~\cite{KN14} would still hold and the resulting distribution of random matrices with samples picked from $\sber(1/2)$ replaced by a Gaussian variable would also satisfy the {\sf JLP}.  
\end{remark}

\subsubsection{Construction of Kane and Nelson~\cite{KN14}}
The first two constructions of Kane and Nelson~\cite{KN14} differs from that of Dasgupta, Kumar, and Sarlos~\cite{DKS10}  in the way they pick which co-ordinates of the input vector should be multiplied with the Rademacher variable. Both of them employ hash functions to perform this step. In Kane and Nelson~\cite{KN14}, hashing is done without replacement, whereas, in Dasgupta, Kumar, and Sarlos~\cite{DKS10}, hashing is done with replacement. Since our counterexample does not depend whether or not the hashing is done with replacement, the counterexample we exhibited in the case of the construction of Dasgupta, Kumar, and Sarlos~\cite{DKS10} is also a counter-example for the construction of Kane and Nelson~\cite{KN14}. 

\subsection{Construction of Ailon and Liberty~\cite{AL09}}
The construction of Ailon and Liberty~\cite{AL09} has the following form $\bPhi = \bB \bW \bD$, where $\bB$ is a random dual BCH code and $\bD$ is a diagonal matrix as in all the other constructions. This can be equivalent to saying that $\bPhi $ has the following form:
\[ \bPhi = \sqrt{\frac{1}{r}} \mathbf{H} \bW \bD^{(1)} \bW \bD^{(2)} \bW \cdots \bW \bD^{(\ell)}, \]
where $\bD^{(i)}$ are independent diagonal matrices with non-zero entries picked i.i.d. from $\sber(1/2)$ and $\mathbf{H}$ is a subsampled (and rescaled) Hadamard-transform. 
They proved that this construction satisfies the {\sf JLP} for 
the regime $c \nu^{-2}s\log n \leq r \leq n^{1/2-\tau}$ with $\ell=\floor{-\log(2\sqrt{n/r})/\log k}$ for $k <1/2$  and  for $c \nu^{-2}s\log n \leq r \leq n^{1/3}/\log^{2/3} n$ with $\ell=3$. 
In what follows, we show a simple attack when $\ell=3$. The case for arbitrary $\ell=\floor{-\log(2\sqrt{n/r})/\log k}$ follows similar ideas and is omitted. Note that because of the choice of $\ell$, this leads to an at most polynomial degradation of our analysis.

As before, we give two neighbouring data-sets as in~\eqnref{partial2}.
\begin{align}
\bA = w (\bW^{\mathsf T})^3 \quad \text{and} \quad 
\widetilde{\bA} = w (\bW^{\mathsf T})^3 + \mathbf{v} \mathbf{e}_2^{\mathsf T}
\end{align}
  for $w \geq \sigma_{\mathsf{min}}.$  They are neighbouring because $\|\bA - \widetilde{\bA}\|_2 =1$ and they differ in only one column.
The probability that these two rows are picked is $2r/n$. Since the probability that each of the $\bD^{(j)}$ has the first two non-zero entries $1$ is $1/8$. Therefore, the same conclusion as before holds with probability $r/4n$.

Note that publishing $\bPhi \bA^{\mathsf T}$ when $w \geq \sigma_{\mathsf{min}}$ preserves privacy when $\bPhi$ is a random Gaussian matrix. 

\end{versionC}

\section{Analysis of Previously Known Distributions with {\sf JLP}} \label{sec:older} \label{sec:others}
\begin{versionA}
In this section, extending on the work by Kane and Nelson~\cite{KN10,KN14}, we give simpler proof of few other constructions, including the variants of the Johnson-Lindenstrauss transform based on~\cite{AC09,Matousek08} and bounded moment construction of Arriaga and Vempala~\cite{AV06}. The central idea in all these proofs is to cast the problem of proving isometry in the terms of~\thmref{hw}.  This gives a more unified picture of the Johnson-Lindenstrauss transform. Due to lack of space, we defer the details to~\appref{analysis}. We also give a simplified self-contained proof of~\thmref{hw} instantiated with Gaussian random variables in~\appref{gaussian}; thereby, making the proof of distributions using Gaussian variables self-contained. A simplified proof of~\thmref{hw} instantiated with $\sber(1/2)$ already exists in literature.
\end{versionA}
\begin{versionC}
We conclude this paper by analysing and revisiting some of the older known constructions. A simpler proof of few of the known constructions have been shown by Kane and Nelson in two successive works~\cite{KN10,KN14}.
In this section, we give altnerate analysis of few known distribution over random matrices satisfying {\sf JLP}. Few of these results have been shown by Kane and Nelson in two successive works~\cite{KN10,KN14}. At the centre of our proofs is~\thmref{hw}. To be self-contained, we give a simple proof of~\thmref{hw} with Gaussian random variables in~\appref{gaussian}.

\paragraph{Random Matrix Based Construction.} The first construction proposed by Johnson and Lindenstrauss~\cite{JL84} uses a random Gaussian matrix as the projection matrix $\bPhi$, i.e., each entry $\bPhi_{ij} \sim \cN(0,1)$ for $1\leq i \leq r, 1 \leq j \leq n$. Noting that the result of~\thmref{hw} does not involve the dimension of the matrix $\bA$ and subgaussian random vectors, the proof follows simply from~\lemref{gaussian} and~\thmref{hw} by substituting every $\bz_i=\bx$ to be the same unit vector $\bx$.


Kane and Nelson~\cite{KN10} showed that the distribution of Achlioptas~\cite{Achlioptas03}  can be shown to satisfy {\sf JLP} using the Hanson-Wright's inequality. We just restated their result.
This construction has every entries of $\bPhi$ picked i.i.d. from $\sber(1/2)$. Since this distribution is also subgaussian, the proof  follows by invoking~\thmref{hw} with a vector whose entries are picked i.i.d. from $\sber(1/2)$. 

The second construction of Achlioptas~\cite{Achlioptas03} is aimed to reduce the number of non-zero entries. It is defined by the following distribution:
\begin{align}
\bPhi_{ij} := \left\{ \begin{matrix} +\sqrt{3} & \text{with probability}~1/6 \\
									0 & \text{with probability}~2/3 \\
								   -\sqrt{3} & \text{with probability}~1/6 
\end{matrix} \right. \label{eq:achlioptas}
\end{align}
It is easy to check from the moment-generating function that the above distribution is also sub-gaussian. The proof thus follows as above.

\paragraph{Bounded Moment Construction.} Arriaga and  Vempala~\cite{AV06} showed that any bounded random variables  satisfies the Johnson-Lindenstrauss transform. The proof using Hanson-Wright now follows from the following lemma which states that random variables with bounded random variables are sub-gaussian. 
\begin{lemma}
Suppose $X$ is a symmetric random variable which is bounded. Then $X$ is subgaussian.
\end{lemma}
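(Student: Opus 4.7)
The plan is to show $\mathbb{E}[e^{tX}] \le e^{M^2 t^2/2}$ for all $t \in \mathbb{R}$, which is exactly the subgaussian condition with scale factor $\lambda = M$, where $M$ is an almost-sure bound on $|X|$.

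First I would expand the exponential as a power series and take expectations termwise, which is justified because the series is dominated by $\sum_k (|t|M)^k/k! = e^{|t|M} < \infty$. Symmetry of $X$ means $X$ and $-X$ are identically distributed, so every odd moment vanishes: $\mathbb{E}[X^{2k+1}] = 0$ for all $k \ge 0$. Hence
\[
\mathbb{E}[e^{tX}] \;=\; \sum_{k=0}^{\infty} \frac{t^{2k}\,\mathbb{E}[X^{2k}]}{(2k)!}.
\]
Next, I would apply the bound $|X| \le M$ a.s. to get $\mathbb{E}[X^{2k}] \le M^{2k}$, yielding
\[
\mathbb{E}[e^{tX}] \;\le\; \sum_{k=0}^{\infty} \frac{(tM)^{2k}}{(2k)!} \;=\; \cosh(tM).
\]

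The last step is the elementary inequality $\cosh(y) \le e^{y^2/2}$ for all real $y$, applied with $y = tM$. I would prove this by a term-by-term comparison of Taylor series: since
\[
\cosh(y) = \sum_{k=0}^{\infty} \frac{y^{2k}}{(2k)!}, \qquad e^{y^2/2} = \sum_{k=0}^{\infty} \frac{y^{2k}}{2^k\,k!},
\]
it suffices to verify $(2k)! \ge 2^k\,k!$ for every $k \ge 0$. Factoring $(2k)! = k!\cdot (k+1)(k+2)\cdots(2k)$, the trailing product has $k$ factors each at least $2$ (for $k \ge 1$), giving the desired inequality; the case $k=0$ is trivial. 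Combining the three displays yields $\mathbb{E}[e^{tX}] \le e^{M^2 t^2/2}$, so $X$ is subgaussian with scale factor $M$.

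There is no real obstacle here; symmetry eliminates odd moments (so no centering step is needed), and boundedness makes both the interchange of sum and expectation and the moment bound immediate. The only mildly non-obvious ingredient is the combinatorial inequality $(2k)! \ge 2^k k!$, which is handled by the factor-counting argument above.
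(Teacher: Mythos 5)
Your proof is correct and follows essentially the same route as the paper's: expand the moment generating function, use symmetry to kill the odd moments, bound the even moments by $M^{2k}$, and conclude via the inequality $(2k)! \ge 2^k k!$. You supply slightly more detail (justifying the termwise expectation and proving the factorial inequality explicitly), but the argument is the same.
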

\begin{proof}
Suppose $X$ is a bounded random variable with a symmmetric
distribution. That is, $| X| \leq M$ for some constant $M$ and $-X$ has the same
distribution as $X$. Then
\[  \p [\exp(t X)]=1+\sum_{k\geq 1}\frac{t^k  \p [X^k] } {k !}. \]

By symmetry, $\p [X^k] = 0$ for each odd $k$. For even $k$,  $\p [X^k] \leq M^k$ , leaving
\[  \p [\exp(t X)]=1+\sum_{k\geq 1}\frac{t^{2k}  \p [X^{2k}] } {(2k)!} \leq \exp(M^2 t^2/2). \]
because $(2k)! \geq 2^k k!$ for each $k$.
This completes the proof because this is one of the properties of subgaussian distribution.
\end{proof}

\paragraph{Fast Johnson-Lindenstrauss transform and variants.} The projection matrix used by Ailon and Chazelle~\cite{AC09} is $\bPhi=\bP \bW \bD$, where $\bW$ is the  Hadamard matrix, $\bD$ is a diagonal matrix with non-zero entries picked i.i.d. from $\sber(1/2)$, and the matrix $\bP$ is constructed in the following manner:  every entry $\bP_{ij}$ is picked from normal distribution $\cN(0,1)$ with probability $p$ and set to $0$ with probability $1-p$.  The parameter $p$ governs the number of non-zero entries in $\bP$. They set $p=\Theta(n^{-1}\log^2 (1/\delta))$.

We  look at this construction in an alternate way, which was used by Upadhyay~\cite{Upadhyay13}. The matrix $\bP$ can be seen as formed by the composition of two matrices: an ($r \times t$) matrix $\bP_1$ and a ($t \times n$) matrix $\bP_2$ in the following way. $\bP_1$ is a random Gaussian matrix with entries picked i.i.d. from $\cN(0,1)$ and $\bP_2$ randomly samples co-ordinates from $\sqrt{n/t}(\bW \bD \bx)$ with replacement. From~\lemref{gaussian}, we know that $\bP_1$ preserves isometry upto a multiplicative factor of $(1 \pm \varepsilon)$. We prove the same about $\bP_2$. For that we use~\thmref{AC09} which states that $\| \bW \bD\bx \|_\infty \leq \sqrt{\log(n/\delta)}/n$ for $\bx \in \bS^{n-1}$. 

Let $\by$ be the vector formed by sampling due to $\bP_2$ sampling co-ordinates of $\sqrt{n/t}(W D \bx)$ with replacement. Therefore, $\|\by\|^2_2 = \sum_{i=1}^t \by_i^2$. Also, $\E[\|\by\|^2]=1$. Hoeffding's bound gives the following
\begin{align}\p \sparen{ | \|\by\|_2^2 - 1 | > \varepsilon } \leq \exp(-\varepsilon^2 t / \log(n/\delta)). \end{align}
For this to be less than $\delta$, pick $t = \varepsilon^{-2} \log(1/\delta) \log(n/\delta).$ This was the value chosen by Ailon and Chazelle~\cite{AC09} as well.

The construction of Ailon and Liberty~\cite{AL09} can also be shown to satisfy {\sf JLP} once we notice that subsampling $\bW \bD$ results in a matrix which has  $2 \floor{\log (1/\delta)}$-wise independent entries. We can therefore invoke~\thmref{hw} as it is true when the random variables picked from a sub-gaussian distribution are $2 \floor{\log (1/\delta)}$-wise independent. This was known as folklore, and, to our knowledge, shown first by Kane and Nelson~\cite{KN14}.
For the construction of Matousek~\cite{Matousek08}, we make the following remark.
\begin{remark}
	We remark that since signed Bernoulli random variable is sub-gaussian, we can replace $\bP$ by the following matrix: with probability $p$, sample from $\sber(1/2)$; otherwise, set it to $0$. This was exactly the construction of Matousek~\cite{Matousek08}. 
\end{remark}


\paragraph{Sparse Johnson-Lindenstrauss Transform.} Kane and Nelson~\cite{KN10} have shown the proof of the sparse-JL construction of Dasgupta {\it et al.}~\cite{DKS10} using the Hanson-Wright inequality. Kane and Nelson~\cite{KN14} gave a code-based construction which they proved preserves Euclidean norm (see Section 3 of their paper). They also used~\thmref{hw} invoked with $\sber(1/2)$. We do not know if their main construction with optimal sparsity constant can be simplified using~\thmref{hw}. To our knowledge, these three are the only known constructions of sparse-Johnson Lindenstrauss transform.

\end{versionC}

\begin{versionC}
\section{Discussion and Open Problems} \label{sec:open} \label{app:open}
In this paper, we explored the properties of a distribution of random matrices  constructed using linear random samples. We gave a distribution of random matrices that satisfies {\sf JLP} as long as $r = O(n^{1/2-\tau})$. The reason for this bound comes from~\lemref{P_2}. This is the single source of slackness and could lead to a  belief that any improvement in this lemma would result in overall improvement of the bound. We now show that we cannot improve on it with the present construction. The issue that we face was also faced by Ailon and Liberty~\cite{AL09}. We follow up with more details.

  Let $\bx \in \bS^{n-1}$ which has $1/n^{1/4}$ on the first $\sqrt{n}$ coordinates and is $0$ everywhere else. With probability $2^{-\sqrt{n}}$, $\bD$ does nothing to $\bx$ (i.e. $\bD \bx = \bx$). Then $\bW \bx$ is just some other vector with equal mass spread on $\sqrt{n}$ coordinates, and so does when we apply $\bPi$. So with probability at least $2^{-\sqrt{n}}$, $\bP$ is just being applied to some permutation of $\bx$. In the best case scenario none of the coordinates of x land in the same block, in which case $\| \bP \bx\|_2^2 = (1/t) \sum_{i=1}^t \mathbf{g}_i^2$ for independent gaussians $\mathbf{g}_i$, where $t = \sqrt{n}$.  For the expression $(1/t) \sum_{i=1}^t \mathbf{g}_i^2$ to be an isometry  with probability $1-\delta$, we need $t \approx \varepsilon^{-2} \log (1/\delta)$. But $t = \sqrt{n}$ leading to a contraction. Therefore, we cannot have any guarantee when $\delta \approx 2^{-\sqrt{n}}$.

This paper leads to many open problems. We conclude the paper by listing few open questions. 
\begin{enumerate}
	\item The first open problem is to give a distribution that uses linear random samples, allows fast matrix-vector multiplication, and allows embedding for all possible values of $r$. As shown in the above counterexample, the main source of slackness is in proving~\lemref{P_2}. If we can somehow improve this bound by some other technique which does not require lot of random bits, it would lead to an improvement of the overall bound. One technique that is often useful in such scenario is decoupling where under limited dependence, we prove results equivalent to that of completely independence of random variables. The excellent book by De la Pena and Gine~\cite{decoupling} shows many such examples. It would be interesting if any such methods can be applied in our context.
	\item An open problem, also suggested by Blocki {\it et al.}~\cite{BBDS12}, is to get a better error bound by, maybe, using error correcting codes. To quote them, 
	\begin{quote}
		Can we introduce some error-correction scheme to the problem without increasing r significantly? Error amplification without increasing r will allow us to keep the additive error fairly small. One can view the laplacian of a graph as a coding of answers to all $2^n$ cut-queries which is guaranteed to have at least a constant fraction of the code correct, in the sense that we get an approximation to the true cut-query answer. 
	\end{quote}
	\item At present, our understanding of low dimension embedding for any normed space other than $\ell_2$-norm is very restricted. For example, even modest question like whether one can embed any $m$-point subset of $L_1$ into a subpace of $L_1$ with dimension $cn$ for $c<1/2$ with $O(1)$ distortion is unclear. There are some positive works in this area, most notably by Schechtman~\cite{Sch87}, but since then, there  has been no significant progress in this area. One area of future research is to better understand the low-dimension embedding in other normed spaces.
\end{enumerate}

\end{versionC}


\bibliographystyle{plain}
{
\bibliography{linear}}

\begin{appendix}

\section{Reducing the Running time to $O(n \log r)$} \label{app:run}
To reduce the running time, we use the main result of~\cite{AC09} of which~\thmref{AC09} is a corollary. They proved that if $\mathbf{W}_b$ is a randomized Hadamard matrix of order $b$, then for any $\bz \in \bS^{b-1}$, 
\begin{align} \p \sparen{ \| \mathbf{W}_b \bz \|_\infty \geq \rho } \leq 2 b \exp(-\rho^2b/2). \label{eq:AC09} \end{align}
We use the same idea as used by Ailon and Liberty~\cite{AL09}, which is to replace $\bW$ by a block matrix with $n/b$ blocks of Hadamard matrix $\bW_b$ for $b = r^{c}$ for an arbitrary constant $c>1$:
\[  
\bW' := \begin{pmatrix} \bW_b & \mathbf{0}^b & \cdots &  \mathbf{0}^b \\  \mathbf{0}^b & \bW_b & \cdots &  \mathbf{0}^b \\ \vdots & \ddots & \ddots & \vdots \\  \mathbf{0}^b &  \mathbf{0}^b & \cdots  & \bW_b \end{pmatrix}
\]
The proof is similar to that of Ailon and Liberty~\cite{AL09} with further strengthening by Dasgupta, Kumar, and Sarlos~\cite{DKS10}. Let $\bD$ be a diagonal matrix as before. Then the random matrix $\bW' \bD$ is a orthogonal matrix. We form $\mathbf{G}$ by normalizing it. We need to study the behaviour of $\mathbf{G} \bx$ for a unit vector $\bx \in \bS^{n-1}$. Divide $\bx$ into $t=n/b$ equal blocks $\bx^{(1)}, \cdots, \bx^{(t)}.$ First of all notice that~\thmref{AC09} holds trivially when $\|\by\|_2 \leq 1$. Let $\mathbf{G}^{(i)}$ be the $i$-th block of $\mathbf{G}$. Then we can write $\by^{(i)} = \mathbf{G}^{(i)} \bx^{(i)}$. 
Moreover, if for a block $i$, if $\| \bx^{(i)}\|_2 \leq \sqrt{(\log (n/\delta))/n}$, then $\| \by^{(i)} \|_\infty \leq \sqrt{(\log (n/\delta)) /n} $. Since $\bx \in \bS^{n-1}$, there can be at most $n/\log (n/\delta)$ different blocks which have $\| \bx^{(i)} \|_2 \geq \sqrt{(\log (n/\delta)) /n}$. Plugging this in~\eqnref{AC09}, using the fact that $r^{c-1}/\log(n/\delta) \leq r^{c-1}/\log(r/\delta) \leq 1$, and applying union bound gives us the required bound.



\section{Differentially Private Streaming Algorithms of Upadhyay~\cite{Upadhyay14}} \label{app:stream}
In this section, we present the streaming algorithms of Upadhyay~\cite{Upadhyay14} for the sake of completion.  
\subsection{Matrix Multiplication} \label{sec:prod}
Suppose we want to compute matrix multiplication when matrices $\bA$ and $\mathbf{B}$ are streamed online.  The algorithm of Upadhyay~\cite{Upadhyay14} to compute the matrix multiplication is described below. The main idea is to lift the spectra of the input matrices above the threshold of~\thmref{DP}. 

\begin{description} 
	\item [{\sc Initialization.}] For multiplication approximation parameter  $\varepsilon$ with confidence $\delta$, privacy parameters $\alpha, \beta$, set $ r=O(\log(1/\delta)/\varepsilon^2)$. Set $w = {\sqrt{16r \ln(\frac{2}{\beta})}}/\alpha \ln(\frac{16r}{\beta})$. Set the intial sketches of $\bA$ and $\mathbf{B}$ to be all zero matrices $\Y_{\bA_0}$ and $\Y_{\bB_0}$.
	\item [{\sc Data-structure update.}]  Set $m=\max \{m_1,m_2\}$. On input a column $a$ of an $n \times m_1$ matrix $\bA$ and column $b$ of an $n \times {m_2}$ matrix $\mathbf{B}$ at time epoch $t$, set the column vector $\widehat{\bA}_{:a} = \begin{pmatrix} {w \mathbf{e}_a} & \mathbf{0}^{n+m} &   \bA_{:a} \end{pmatrix} $ and $\widehat{\mathbf{B}}_{:b} = \begin{pmatrix} w \mathbf{e}_b &   \mathbf{0}^{n+m}  &  \mathbf{B}_{:b}  \end{pmatrix} $. 
	 Compute  $\bPhi \widehat{\bA}_{:a}$ and $\bPhi \widehat{\mathbf{B}}_{:b}$, where $\bPhi$ is sampled as per the distribution defined in~\figref{basic}. Update the sketches by replacing the columns $a$ of $\Y_{\bA_{t-1}}$ and $b$ of $\Y_{\mathbf{B}_{t-1}}$ by the respective returned sketches to get the sketch $\Y_{\bA_t}, \Y_{\mathbf{B}_t}$.
	 \item [{\sc Answering matrix product.}] On request to compute the product at time $t$, compute $\Y_{\bA_t}^{\mathsf T} \Y_{\mathbf{B}_t}$.
\end{description}

\subsection{ Linear Regression} \label{sec:linear}
Suppose we want to compute linear regression  when the matrix $\bA$ is streamed online. The algorithm of Upadhyay~\cite{Upadhyay14} to compute linear regression is presented below. The main idea is to lift the singular values of the input matrix above the threshold of~\thmref{DP}. 

\begin{description}
	\item [{\sc Initialization.}] For multiplication approximation parameter  $\varepsilon$ and additive approximation parameter $\delta$, privacy parameters $\alpha, \beta$, set $r=O(m \log(1/\delta)/\varepsilon)$, $w =  {\sqrt{16r  \ln(2/\beta)}} \alpha^{-1} \ln(16r /\beta)$, and $\Y_{\bA_0}$ to be all zero matrix. 
	\item [{\sc Data-structure update.}] On input a column $c$ of an $n \times m$ matrix  $\bA$ at time epoch $t$, set  the column vector $\widehat{\bA}_{:c} =   \begin{pmatrix}  w\mathbf{e}_c  &  \mathbf{0}^{n+d}  &  \bA_{:c}  \end{pmatrix}$. Compute $\bPhi \widehat{\bA}_{:c}$, where $\bPhi$ is sampled as per the distribution defined in~\figref{basic}. Update the sketch of $\bA$ by replacing the column $c$ of $\Y_{A_{t-1}}$ by the  returned sketch to get the sketch $\Y_{A_t}$.
	\item [{\sc Answering queries.}] On being queried with a vector $\mathbf{b}_i$, set the column vector $\widehat{\mathbf{b}}_i=\begin{pmatrix}   \mathbf{e}_i  &  { \mathbf{0}^{n+d}}  &  \mathbf{b}_i  \end{pmatrix}$. Compute $\bPhi \widehat{\mathbf{b}}_i$ to get the sketch $\Y_{\mathbf{b}_i}$, where $\bPhi$ is sampled as per the distribution defined in~\figref{basic}. 	Compute a vector $\bx_i$ satisfying $\min_{\bx} \| \Y_{\bA_t} \bx_i-\Y_{\mathbf{b}_i}\|$.
\end{description}

\begin{versionA}
\section{Other Applications of our Distribution} \label{app:applications}
\subsection{Functional Analysis}
Charikar and Sahai~\cite{CS02} showed an impossibility result for dimension reduction using linear mapping in $\ell_1$ norm. Johnson and Naor~\cite{JN10} gave an elegant result which generalizes Charikar and Sahai's result. Concisely, they showed that a normed space which satisfies the {\sf JL} lemma is very close to being Euclidean in the sense that all its $n$-dimensional subspaces are isomorphic to Hilbert space with distortion $2^{2^{O(\log^* n)}}$. 
More formally, they showed the following.
\begin{theorem} \label{thm:JN10}
Let $\cX$ be a Banach space such that for every $n$ and every $\bx_1,\cdots , \bx_m \in \cX$, there exists a 
linear subspace $\mathbf{F} \in \cX$ of dimension at most $O(\varepsilon^{-2} \log (1/\delta))$ and a linear mapping from $\cX$ to $\mathbf{F}$
 such that~\eqnref{JL} is satisfied. Then for every $\ell \in \N$ and every $\ell$-dimensional subspace $\cE \subseteq \cX$, we have the Banach-Mazur distance between $\cE$ and Hilbert space  at most $2^{2^{O(\log^* \ell)}}$.
\end{theorem}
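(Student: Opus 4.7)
My plan is to deduce approximate Hilbertian structure on finite-dimensional subspaces of $\cX$ from the {\sf JL} hypothesis by first extracting a quantitative type-$2$/cotype-$2$ estimate and then bootstrapping it.

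First, I would use the {\sf JL} hypothesis on carefully chosen finite sets to force a parallelogram-type inequality on $\cX$. Given vectors $x_1,\ldots,x_N \in \cX$, consider the $2^N$ signed sums $\sum_i \epsilon_i x_i$ with $\epsilon \in \{-1,+1\}^N$. The hypothesis embeds this set into a Hilbert space $\mathbf{F}$ of dimension $O(\varepsilon^{-2}\log(1/\delta))$ with multiplicative distortion $1+\varepsilon$ via a linear map. Averaging the Euclidean identity $\mathsf{avg}_\epsilon \|\sum_i \epsilon_i T x_i\|_2^2 = \sum_i \|Tx_i\|_2^2$ over the Rademacher signs on the image side, and pulling back through the $(1\pm\varepsilon)$-distortion, yields the Rademacher type-$2$ and cotype-$2$ inequalities for $\cX$ with constants $T_2,C_2 = 1+O(\varepsilon)$. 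Crucially, this is a \emph{linear} operation, so the linearity assumption in the hypothesis is what makes the argument go through (and is precisely what fails in $\ell_1$, consistent with~\cite{CS02}).

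Next, I would feed these type and cotype estimates into Kwapie\'n's theorem, which asserts that any Banach space with finite type-$2$ constant $T_2$ and finite cotype-$2$ constant $C_2$ has Banach-Mazur distance at most $T_2 C_2$ to a Hilbert space on every finite-dimensional subspace. This already gives a dimension-free Banach-Mazur distance bound $d(\cE,\ell_2^\ell) \le O(1)$ depending only on $\varepsilon$, independent of $\ell$.

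The main obstacle -- and the source of the iterated-logarithm tower $2^{2^{O(\log^* \ell)}}$ -- is that the above only gives a constant depending on $\varepsilon$, not tending to $1$. To sharpen it I would iterate: given a current estimate $D_j$ for the Banach-Mazur distance on $\ell$-dimensional subspaces, apply {\sf JL} to a net of size $\exp(O(\ell \log D_j))$ inside the John ellipsoid of $\cE$, obtaining a fresh embedding with improved distortion $D_{j+1} = 1 + O(\log D_j / \sqrt{\ell})$ or similar. Each iteration roughly replaces $D$ by $\log D$, so $\log^*\ell$ rounds suffice before the distortion saturates at a constant, which then exponentiates twice through the type/cotype reduction to give the stated tower bound. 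The delicate part will be propagating the error and ensuring that the type-$2$ and cotype-$2$ constants genuinely improve at each round rather than degrading due to the approximate (rather than exact) nature of the {\sf JL} embedding; this is where the quantitative form of the hypothesis (uniform in $n$ and the point set) is essential.
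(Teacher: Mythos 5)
This statement is not proved in the paper at all: it is Johnson and Naor's theorem, quoted verbatim from~\cite{JN10} and used as a black box (the paper only says ``they showed the following''). So the comparison can only be against the known proof in~\cite{JN10}, and against that your proposal has a genuine gap at its very first step. The hypothesis says that the point set embeds with distortion $1+\varepsilon$ into a \emph{linear subspace} $\mathbf{F}$ of $\cX$ of dimension $O(\varepsilon^{-2}\log(1/\delta))$ --- $\mathbf{F}$ carries the norm induced from $\cX$ and is \emph{not} a Hilbert space. Your averaging step invokes the Euclidean identity $\mathsf{avg}_\epsilon\|\sum_i\epsilon_i Tx_i\|_2^2=\sum_i\|Tx_i\|_2^2$ ``on the image side,'' but no such identity is available in $\mathbf{F}$; a priori $\mathbf{F}$ is only $\sqrt{\dim\mathbf{F}}$-Euclidean by John's theorem. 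Consequently a single pass gives type-$2$ and cotype-$2$ constants on $N$ vectors of order $\sqrt{N}$ (up to the distortion of $\mathbf{F}$), not $1+O(\varepsilon)$. If your step were correct, Kwapie\'n would immediately make every finite-dimensional subspace $(1+O(\varepsilon))^2$-Euclidean, which is strictly stronger than the theorem and is in fact false: the whole point of~\cite{JN10} (``\ldots but not quite'') is that non-Hilbertian spaces such as the $2$-convexified Tsirelson space satisfy the {\sf JL} hypothesis, so the distance cannot be pushed to a constant independent of $\ell$ by any correct argument.

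This also means you have misidentified where the tower $2^{2^{O(\log^*\ell)}}$ comes from. It is not an artifact of improving a constant toward $1$; it comes from a self-improvement recursion in which the target $\mathbf{F}$ is itself a subspace of $\cX$ of dimension logarithmic in the number of points used, so its own Banach--Mazur distance to Hilbert space is controlled by the quantity being bounded, at an exponentially smaller dimension. Writing $f(\ell)=\sup\{d(\cE,\ell_2^{\dim\cE}):\cE\subseteq\cX,\ \dim\cE\le\ell\}$, one applies {\sf JL} to the signed sums of the few vectors needed to compute the type-$2$ and cotype-$2$ constants of an $\ell$-dimensional subspace, lands in $\mathbf{F}$ with $\dim\mathbf{F}=O(\log\ell)$, uses $d(\mathbf{F},\ell_2)\le f(O(\log\ell))$ there, and then Kwapie\'n squares the resulting type/cotype bound, yielding $f(\ell)\le C\,f(C\log\ell)^2$; iterating this $\log^*\ell$ times gives the doubly exponential tower. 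Your proposed substitute iteration --- applying {\sf JL} to a net of size $\exp(O(\ell\log D_j))$ --- lands in a subspace of dimension $O(\ell\log D_j)\ge\ell$, so it reduces nothing and you would still need to know that subspace is nearly Euclidean; there is no mechanism by which the distortion ``improves to $1+O(\log D_j/\sqrt{\ell})$.'' The correct ingredients you do have (signed sums, type/cotype, Kwapie\'n, a $\log^*$-length iteration) need to be reassembled around the dimension-reduction recursion, together with the fact that type-$2$ and cotype-$2$ constants of an $\ell$-dimensional space are determined by few vectors.
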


Our main result directly improves this characterization in the sense that we only need linear random samples. However, there are another subtle applications of our result in Functional analysis. For this, we need~\thmref{KW11}. 
Now, using the characterization by Johnson and Naor, our result gives an alternate reasoning for the impossibility result of~\cite{AGR14}. 
Moreover, Pisier~\cite{Pisier} proved that if a Banach space satisfies the condition that Banach-Mazur distance is $o(\log n)$, then the Banach space is super-reflexive. Because of the uniform convexity properties of the norm over these spaces~\cite{Finet86}, super-reflexive Banach spaces are of centre importance in Functional analysis (see for example the excellent book by Conway~\cite{Conway}). \thmref{newgaussian} and~\thmref{JN10} gives one such space using only linear random samples!

\section{Impossibility of Differential Privacy Using the Known Johnson-Lindenstrauss Transform} \label{app:notDP}
Blocki {\it et al.}~\cite{BBDS12} asked the question whether known distribution of matrices with {\sf JLP} preserves differential privacy or not. In this section, we answer their question in negation by giving simple counter-examples. Our counter-example works by showing two neighbouring matrices with the properties that they differ by at most one column of unit norm. Our counter-examples are constructed with care so as to allow the attack. We proceed by reverse-chronological order.

\section{Simpler Analysis of Known Johnson-Lindenstrauss Transform} \label{app:analysis}

\end{versionA}

\section{Proof of~\thmref{hw} with Gaussian Random Vectors} \label{app:gaussian}
Let $\by \sim \cN(0,1)^n$. Set $S = |\by^T \bB \by - \tr (\bB)|$ for a symmetric matrix $\bA$. Then using the Markov's inequality, we have the following.
\begin{align}
	\p [S > \eta] = \p [\exp(tS) < \exp(t \eta)] \leq \frac{\E[\exp(tS)]}{\exp(t \eta)}.  \label{eq:ts}
\end{align}
We need to evaluate $\E[\exp(tS)]$. Since $\bB$ is a symmetric matrix, we can consider its eigen-value decomposition, i.e., a  matrix $\mathbf{C}=\mathbf{U} \bB \mathbf{U}^T$ and $ \mathbf{w} =\mathbf{U}\by$ for an orthonormal matrix $U$ such that $\mathbf{C}$ is a diagonal matrix. Here we use the spherical symmetry of the difference of Gaussian distribution. We can then write 
\begin{align}
	\sum_{i,j}\bB_{ij} (\by_i \by_j -  \E[\by_i\by_j]) & = \by^T \bB \by - \tr(\bB) =  \mathbf{w} ^T \mathbf{C} \mathbf{w} -  \tr(\mathbf{C}) = \sum \mathbf{c}_i( \mathbf{w} _i^2-1). \label{eq:d_i}
\end{align}

To compute the expectation of this variable, we need to compute $\E[\exp(\mathbf{d}_i^2 ( \mathbf{w} _i^2-1))]$; the final expectation comes form the linearity of expectation. For that, we first need to estimate the $\PDF$ of the difference of i.d.d. Laplacian random variables. The following lemma was proven by Hanson~\cite[Lemma 3]{Hanson67}.
\begin{lemma} \label{lem:hw}
 	Let $X$ be any sub-exponential random variable. Then for $0<t\leq \tau$ for $\tau$, there is a universal constant $c$ such that	$\E[\exp(t(X-1))] \leq \exp(ct^2).$
\end{lemma}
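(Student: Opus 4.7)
The plan is to establish the bound via a Taylor expansion of the MGF combined with the moment control enjoyed by sub-exponential random variables. Note that for the application in equation~\eqref{eq:d_i}, the lemma is invoked on $X = \mathbf{w}_i^2$ with $\mathbf{w}_i \sim \cN(0,1)$, so $\E[X]=1$; I will assume this normalization (it is what makes the $(X-1)$ centering meaningful in the statement).

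First I would expand
\[
\E[\exp(t(X-1))] = 1 + \sum_{k \geq 1} \frac{t^k \, \E[(X-1)^k]}{k!}.
\]
The $k=1$ term vanishes because $\E[X-1]=0$, so only $k \geq 2$ terms contribute. Next I would invoke the standard equivalent characterization of sub-exponentiality: there is a universal $C>0$ such that $\E[|X-1|^k] \leq C^k k!$ for every positive integer $k$. Substituting gives
\[
\E[\exp(t(X-1))] \leq 1 + \sum_{k \geq 2} (Ct)^k = 1 + \frac{(Ct)^2}{1 - Ct},
\]
valid whenever $Ct < 1$. Restricting to $t \leq \tau := 1/(2C)$ ensures $1 - Ct \geq 1/2$, so the right-hand side is at most $1 + 2C^2 t^2$. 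Finally, applying the elementary inequality $1+x \leq e^x$ yields $\E[\exp(t(X-1))] \leq \exp(2C^2 t^2) = \exp(ct^2)$ with $c = 2C^2$, as desired.

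The main obstacle is the moment bound $\E[|X-1|^k] \leq C^k k!$. I would justify it directly from the sub-exponential tail $\p[|X-1| > s] \leq A \exp(-s/B)$ via the layer-cake representation $\E[|X-1|^k] = \int_0^\infty k s^{k-1} \p[|X-1|>s] \, ds$, which evaluates to $A B^k \, k!$ after the substitution $u = s/B$. For the specific Gaussian application where $X = \mathbf{w}_i^2$ everything can be verified directly and cleanly: the chi-squared MGF gives $\E[\exp(t(\mathbf{w}_i^2-1))] = e^{-t}(1-2t)^{-1/2}$, whose logarithm admits the Taylor series
\[
-t - \tfrac{1}{2}\log(1-2t) = t^2 + \tfrac{4}{3} t^3 + 2 t^4 + \cdots \leq 2 t^2 \qquad \text{for } t \leq \tfrac{1}{4},
\]
yielding the clean constants $c = 2$, $\tau = 1/4$ without any appeal to the abstract sub-exponential moment bound. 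I would present the general argument as the main proof and remark that the Gaussian case admits this sharper elementary verification.
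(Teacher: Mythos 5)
Your proof is correct, but there is nothing in the paper to compare it against: the paper does not prove \lemref{hw} at all, it simply quotes it as Lemma~3 of Hanson~\cite{Hanson67} and uses it as a black box. Your argument is the standard and essentially canonical one (Bernstein's condition): Taylor-expand the MGF, kill the linear term by centering, control the $k$-th moments by $C^k k!$ via the layer-cake integral against the sub-exponential tail, and sum the geometric series for $t$ below $1/(2C)$. Two small points. First, your displayed moment bound comes out as $A B^k k!$ rather than $C^k k!$; the extra factor $A$ is harmless (it just changes $c$ to $2AB^2$), but you should carry it through rather than silently dropping it. Second, you are right that the lemma as stated is incomplete --- it tacitly requires $\E[X]=1$ (otherwise the linear term defeats any $\exp(ct^2)$ bound for small $t$) and a normalization of the sub-exponential norm for the constants $c,\tau$ to be ``universal''; flagging this is a genuine improvement on the paper's statement. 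Your closing verification for the chi-squared case, $\log \E[\exp(t(\mathbf{w}_i^2-1))] = -t-\tfrac12\log(1-2t) = \sum_{k\ge 2}(2t)^k/(2k) \le 2t^2$ for $t\le 1/4$, is also correct and is arguably the more valuable contribution here, since $X=\mathbf{w}_i^2$ with $\mathbf{w}_i\sim\cN(0,1)$ is the only instance the paper's Appendix~\ref{app:gaussian} actually needs; it makes that appendix genuinely self-contained, which the citation to Hanson does not.
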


We now return to the proof of~\thmref{hw}. It is well known that if $X$ is a subgaussian random variable, then $X^2$ is sub-exponential random variables. Therefore, we can use~\lemref{hw}. Recall that we wish to bound~\eqnref{ts}. We would have been done had  we not assumed that $|t| \leq \tau$ in~\lemref{hw}. Therefore, we have to work a little more to achieve the final bound. Plugging in the estimate of~\lemref{hw}, we can bound~\eqnref{ts} as follows:
\[  \exp(-\eta t) \exp \paren{ ct^2 \sum \mathbf{c}_i^2 } \]
for $t  \| \mathbf{d} \|_\infty \leq \tau.$
Therefore, we have 
\[  \p [S > \eta]  \leq \exp \paren{-\eta t + c t^2 \|\mathbf{C}\|_F^2}. \]

Simple calculus shows that the right hand side  is a concave up function and achieves minimum when $t = \eta/(2c \|\bB\|_F^2)$ if that is a permissible value for $t$. Now if we set $t' := \min \set{\eta/(2c\|\bB\|_F^2, \tau/ \|\bB\| }$. Then we have 
\[ \exp \paren{-\eta t + c t^2 \|\bB\|_F^2} \leq \exp(-t'(\eta- c t' \|\mathbf{C}\|_F^2)) \leq \exp(t' \eta/2) .\]

Setting constants $c_1= \tau/2$ and $c_2=1/4c$, we get the claim 
\begin{align*}
\p [|\by^T \bB \by - \tr (\bB)| \geq \eta ]  &\leq \exp \paren{ - \min \set{ \frac{c_1 \eta}{\|\bB\|}, \frac{c_2 \eta^2}{\|\bB\|_F^2} } } . 
\end{align*}

\end{appendix}

\end{document}